\DeclareMathOperator*{\argmax}{arg\,max}
\def\BibTeX{{\rm B\kern-.05em{\sc i\kern-.025em b}\kern-.08em
    T\kern-.1667em\lower.7ex\hbox{E}\kern-.125emX}}
\newtheorem{theorem}{Theorem}
\newtheorem{lemma}{Lemma}
\newtheorem{Definition}{Definition}
\def\mathbi#1{\textbf{\em #1}}
\newcommand\qiu[1]{{\color{black} #1}}
\newcommand\eat[1]{}
\begin{document}
\title{\huge Bandit-Based Charging with Beamforming for Mobile Wireless-Powered IoT Systems}

\author{Chenchen Fu, Zining Zhou, Xiaoxing Qiu, Sujunjie Sun, Weiwei Wu, and Song Han\thanks{C. Fu, Z. Zhou, X. Qiu, S. Sun and W. Wu are with the Department of Computer Science and Engineering, Southeast University, Nanjing, China (E-mail: \{chenchen\_fu, 220212008, xxqiu, sjjsun, weiweiwu\}@seu.edu.cn).}\\
\thanks{S. Han is with the School of Computing, University of Connecticut, United States (E-mail: song.han@uconn.edu).}
}

\maketitle

\begin{abstract}
Wireless power transfer (WPT) is increasingly used to sustain Internet-of-Things (IoT) systems by wirelessly charging embedded devices. Mobile chargers further enhance scalability in wireless-powered IoT (WP-IoT) networks, but pose new challenges due to dynamic channel conditions and limited energy budgets. Most existing works overlook such dynamics or ignore real-time constraints on charging schedules.
This paper presents a bandit-based charging framework for WP-IoT systems using mobile chargers with practical beamforming capabilities and real-time charging constraints. We explicitly consider time-varying channel state information (CSI) and impose a strict charging deadline in each round, which reflects the hard real-time constraint from the charger's limited battery capacity.
We formulate a temporal-spatial charging policy that jointly determines the charging locations, durations, and beamforming configurations. Area discretization enables polynomial-time enumeration with constant approximation bounds. We then propose two online bandit algorithms for both stationary and non-stationary unknown channel state scenarios with bounded regrets.
Our extensive experimental results validate that the proposed algorithms can rapidly approach the theoretical upper bound while effectively tracking the dynamic channel states for adaptive adjustment.
\end{abstract}

\section{Introduction}
{
IoT systems powered by energy harvesting techniques have drawn growing attention from both academia and industry in recent years. Such systems, particularly those built on embedded platforms, are designed to operate autonomously with minimal human intervention. However, IoT systems face unique constraints, including limited processing power, strict energy budgets, real-time communication requirements, and physical size limitations. These challenges make efficient power management and optimized communication strategies essential for the sustainability and scalability of IoT applications.  
Compared to harvesting energy from ambient sources~\cite{Huang2023Adaptive}, such as solar, wind, and vibration, which are typically unpredictable and highly variable, a more controllable and stable approach is Radio Frequency (RF)-based Wireless Power Transfer (WPT)~\cite{Zeng2017Communications}. WPT enables the IoT devices to wirelessly harvest energy from dedicated RF sources, mitigating or even eliminating the need for frequent battery replacements. However, a fundamental limitation of WPT is that RF signal strength decays rapidly with distance, leading to inefficient energy transfer, which is particularly problematic for low-power IoT devices operating in distributed environments~\cite{choi2016wireless}.
To address this problem, {\em beamforming techniques} play a critical role in enhancing the energy transfer efficiency by concentrating the RF energy on specific direction(s), thus minimizing the energy loss and extending the distance of the energy transfer~\cite{Zeng2017Communications}. 

}

{
A series of work in the literature focused on beamforming technique design and waveform optimization~\cite{huang2017large} for WP-IoT systems with static charger(s). 
Practical charging algorithms based on beamforming techniques were designed and implemented in~\cite{choi2016wireless,choi2018theory}.
However, due to the limited range of wireless power transfer, static chargers have to be deployed in proximity to the target devices. 
In large-scale IoT systems, it is however economically infeasible to deploy a large number of static chargers in the field to cover all the devices.
An effective way to address this problem is to deploy {\em mobile chargers} to travel through the field and approach individual devices to provide wireless energy transfer. In recent years, researchers focused on trajectory planning and charging algorithm designs to charge the devices using mobile chargers \cite{liu2020effective}. 
{However, in most of these studies, {\em their charging models did not consider the beamforming technique (or over-simplified it), neglected the unstable and dynamic channel conditions, and failed to account for real-time constraints imposed by limited energy budgets}, and thus cannot be directly deployed in real-world scenarios.}
Their common limitations include but are not limited to: {\bf 1)} not considering the multi-antenna technology to enhance the energy transfer efficiency \cite{dai2018wireless,dai2021placing,wang2019robust,xu2018maximizing}; {\bf 2)} assuming that the charging power of different beams can be directly additive~\cite{dai2018wireless}; {\bf 3)} simplifying the charging region from the irregular droplet shape (see Fig.~\ref{fig:eg_beam} and Fig. \ref{fig:multi}) to the regular circular shape~\cite{dai2018wireless,dai2021placing};
and {\bf 4)} not taking into account the dynamic channel condition (see Section \ref{sec:pre} for details), which plays a significant role in most wireless charging problems with either static or mobile chargers \cite{dai2018wireless,wang2019robust,dai2021placing,jiang2017joint,xu2018maximizing,liu2020effective}.
}

In this work, we study the fundamental problem of effective wireless charging with mobile charger(s) in WP-IoT systems, by utilizing practical beamforming techniques under dynamic channel condition. {\textit{The objective is to maximize the charging utility for all target devices to avoid starvation and energy overflow problems (see Def. \ref{def:u}), while satisfying the real-time charging requirement introduced by the limited battery of the mobile charger(s).}}
As wireless charging suffers severe swings of channel states and low charging efficiency, the mobile charger(s) should stop at carefully determined locations to charge the devices in proximity. At each location, a proper beamforming code word (See Def. \ref{def:codeword}) should be selected to maximize the total charging utility.
To achieve this objective, we need to tackle the following key challenges.

\begin{enumerate}[leftmargin=*, labelindent=0pt]
    \item {\bf [Charging objective]} As both starvation and energy overflow problems need to be avoided for all devices, the charging objective should be well designed to achieve submodularity
    \footnote{\qiu{A set function is submodular if it has diminishing returns: adding an item to a smaller set gives more gain than to a larger one. In our case, charging a low-energy sensor brings more utility than charging a well-powered one.}}
    ({\em charging starving devices is more crucial than charging devices with adequate energy}). The objective should also be generally applicable to a variety of real-world systems with unique requirements. {Moreover, the limited battery capacity of mobile charger(s) leads to tight time constraints for each charging round. To serve multiple energy-demanding sensor nodes under such constraints, it is crucial to develop effective charging objectives and scheduling strategies that can maximize overall efficiency.}
    \item {\bf [Where to charge]} Determining the candidates of charging locations involves segmenting the area into grids.
    By modelling the power distribution in practical charging beams (See Fig.~\ref{fig:multi}), we observe that the power intensity varies significantly even within a single grid. {\em This variation is not uniform or linear as assumed in previous work, but needs to be well analyzed to obtain the approximation ratio introduced by area discretization}, which in turn affects the overall efficiency and accuracy of the charging strategy.
    \item {\bf [How to charge]}
    The target charging problem involves submodular objective function. An oracle to adeptly handle this submodular charging problem, and {\em the bandit schemes with tight regret bound are not analyzed in the literature and remains an open problem}. Novel bandit schemes and the regret bounds should be well designed and analyzed based on the new oracle, ensuring that the approach is not only theoretically grounded but also pragmatically viable.
    
\end{enumerate}

\noindent To the best of our knowledge, this work is the first attempt to explore the wireless charging problem in WP-IoT systems with mobile charger(s) based on practical beamforming model. The contributions are summarized below. 
\begin{itemize}[leftmargin=10pt]
\item We analyze the key features of beamforming and the key requirements of device charging in WP-IoT systems to define the general formulation of charging utility function. 
\item We conduct area discretization based on practical beam models, so that the charging locations can be enumerated in polynomial-time with constant approximation ratio. 


\item \qiu{By assuming CSI is known, a scheme called {\bf Greedy Utility Algorithm (GUA)} which achieves $(\frac{1}{2}-\frac{1}{2e})-$ approximation is first proposed as oracle. When CSI is unknown, two novel bandit-based algorithms, {\bf Utility Maximization Combinatorial Bandit (UMCB)} and {\bf Sliding-Window UMCB (UMCB-SW)}, are further proposed for the cases where the unknown CSI is fixed and dynamic, respectively. The proposed bandit schemes utilize GUA to select arms based on the estimated energy status. The regret of UMCB is proved to achieve the tight regret bound $O(ln T)$, where $T$ represents the number of charging rounds.}
    
\end{itemize}

\section{Preliminaries, Models and Problem Formulation}\label{sec:model}

In this section, we first describe the preliminaries of this work.  We then introduce the practical system models employed in this work, and finally formulate the target problem. 

{
\subsection{Preliminaries}
\label{sec:pre}
\vspace{0.05in}
\noindent {\bf[Beamforming with multiple antennas]}: 
Beamforming is a signal processing technique that directs the transmitted signal toward a specific receiver while minimizing interference to other users.
In wireless energy transfer, beamforming plays a crucial role in directing the energy beam to the receiver, and significantly improving the charging performance.
A comparison of the beamforming array and conventional array is presented in Fig.~\ref{fig:eg_beam}.
Rather than simply broadcasting energy or signals in all directions, beamforming utilizes an array of antennas that can be steered to transmit radio signals in a specific direction.
The antenna array determines the direction of interest and transmits a stronger beam of signals in that direction, forming a distinctive droplet-shaped power region due to the constructive and destructive interference of multiple omnidirectional antennas. By leveraging multiple antennas at the transmitter, beamforming enhances signal strength at the intended destination, improving {energy transfer efficiency} and reliability.

\vspace{0.05in}
\noindent {\bf[Dynamic channel condition]}:
To implement effective beamforming, the energy transmitter needs knowledge of the wireless channel conditions, which is referred to as channel state information (CSI). 
CSI provides insights into how signals propagate through the environment, allowing the transmitter to adjust its beamforming strategy accordingly. 
{Accurate CSI is essential for optimizing the beamforming strategy to ensure effective energy reception. However, obtaining it is challenging in practical systems due to the instability of the wireless channel, especially in fast-changing environments.}
Direct feedback of full CSI from the receiver to the transmitter can result in excessive overhead. Therefore, modern systems often utilizes the codebook to reduce this burden.

\begin{figure}[t]
\centering 
    \includegraphics[width=0.8\linewidth]{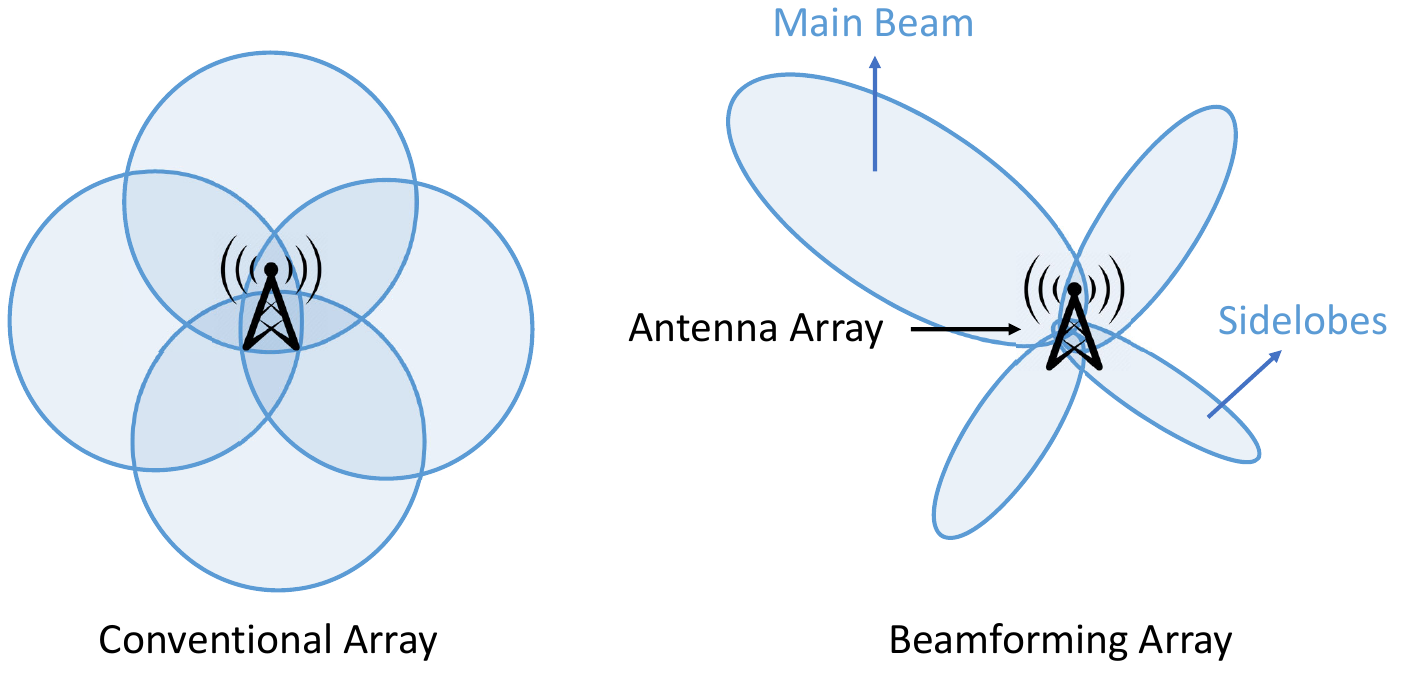} 
    \caption{Comparison of Conventional and Beamforming Array} 
    \label{fig:eg_beam}
\end{figure}

\vspace{0.05in}
\noindent {\bf[Codebook and code word]}: A codebook is a predefined set of beamforming vectors or matrices that the transmitter uses to select the optimal beamforming direction based on limited feedback from the receiver. Instead of transmitting full CSI, the receiver sends back only an index corresponding to the best-matching entry in the codebook, thus significantly reducing the required feedback. Each entry in the codebook is called a code word, representing a specific beamforming weight or precoding vector.

By leveraging CSI-aware beamforming with well-designed codebooks, wireless networks can quantize channel state information into a finite set of predefined beamforming vectors, thereby enabling limited feedback-based beam selection, minimizing interference, and improving charging efficiency. However, such codebook-based limited feedback beamforming has not been fully explored in wireless-powered IoT (WP-IoT) systems. Motivated by this gap, this paper investigates an efficient beamforming-based charging approach for battery-constrained IoT devices, focusing on optimizing wireless charging with mobile chargers under dynamic and unknown channel conditions to enhance energy transfer performance.



}

\begin{figure}[t]
\centering 
    \includegraphics[page=1,width=0.99\linewidth]{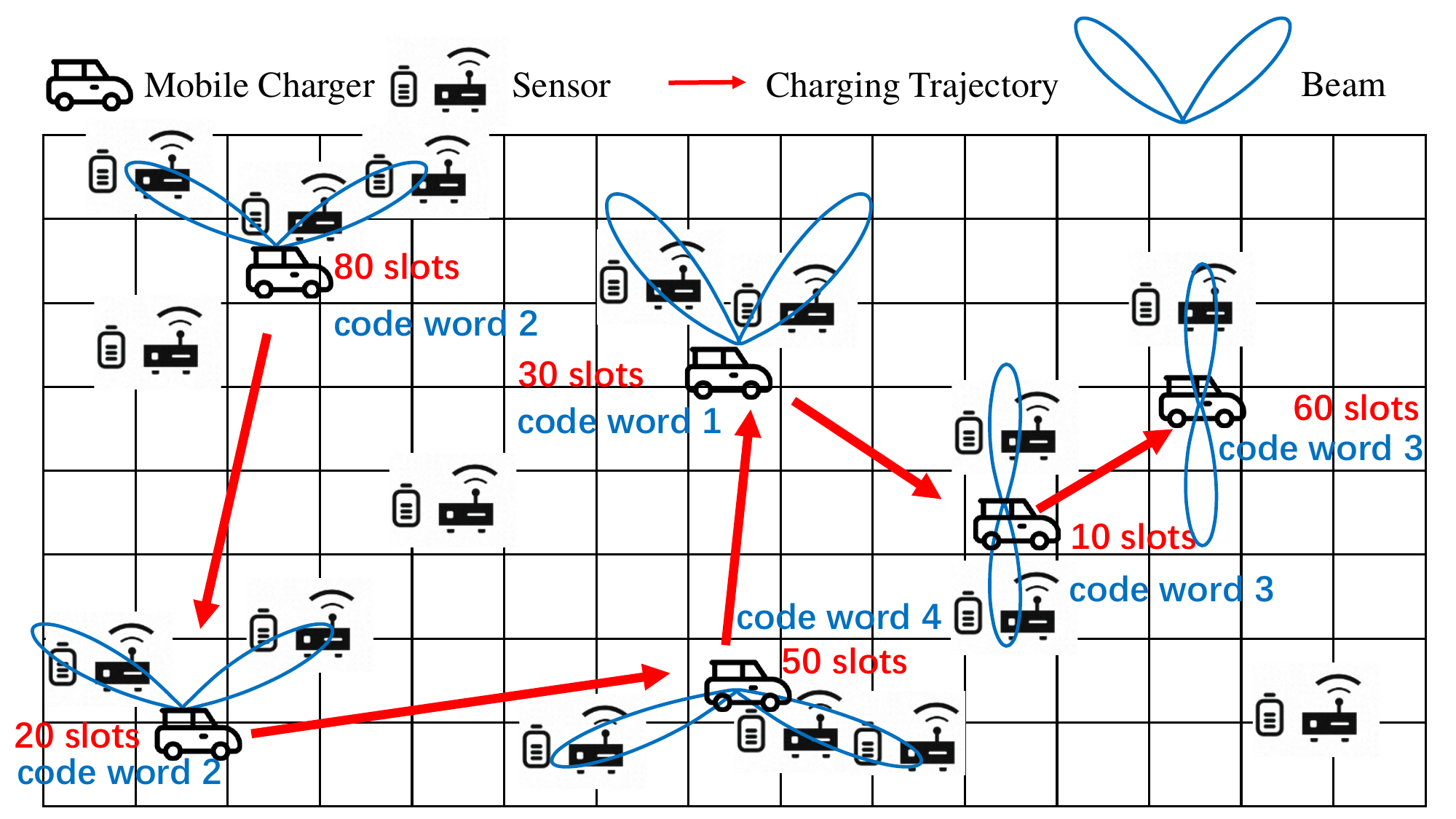} 
    \caption{System model for a WP-IoT system with a mobile charger. (X slots, code word Y) means the mobile charger stops at the selected location and uses code word Y to charge for X time slots.} 
    \label{fig:system model}
\end{figure}

\subsection{System Model}
\label{sec:system_model}
In this work, we consider a WP-IoT system that comprises $N$ fixed sensors and a mobile charger. The proposed algorithms are also applicable to multiple mobile chargers as to be discussed in Section~\ref{sec:discussion}. An overview of the system architecture is shown in Fig.~\ref{fig:system model}, where the mobile charger(s) traverses through the sensors to provide wireless energy transfer.
The mobile charger is equipped with a uniform linear antenna array (ULA) of $N_{a}$ identical antennas and the corresponding RF chains.
The RF chain allows the user to regulate the phases of the channel and then amplify the signal, thereby enables the beamforming technique~\cite{yun2021towards}.

\vspace{0.05in}
\noindent {\bf [Beamforming Model]:} We consider a practical beamforming model~\cite{Balanis2005Antenna, choi2016wireless} in our study. The simulation results of normalized power with different beamforming vectors is shown in Fig.~\ref{fig:multi}.
With a fixed charging distance, we obtain the normalized power by setting the beam directing to $45^{\circ}$ in the polar coordinate system.  The power region is of the droplet shape, showing the directional wave formed by multiple omni-directional antennas. And the power intensity in the (two) main beam directions dominate those in other directions. 
{In this work, we analyze the power gain in the main beam direction given a code word. The formal definition of code word is provided in Def.~\ref{def:codeword}.}
\emph{Different from most existing studies (either with static chargers~\cite{dai2018wireless,wang2019robust,dai2021placing} or mobile chargers~\cite{jiang2017joint,xu2018maximizing,liu2020effective}) that assume uniform or linear power distribution in the charging region of regular circular shape, this work considers practical beamforming charging model for the development of real systems.}

\begin{figure}[t]
    \subfigure[code word directing to $45^{\circ}$ ]{
    \includegraphics[width=0.46\linewidth]{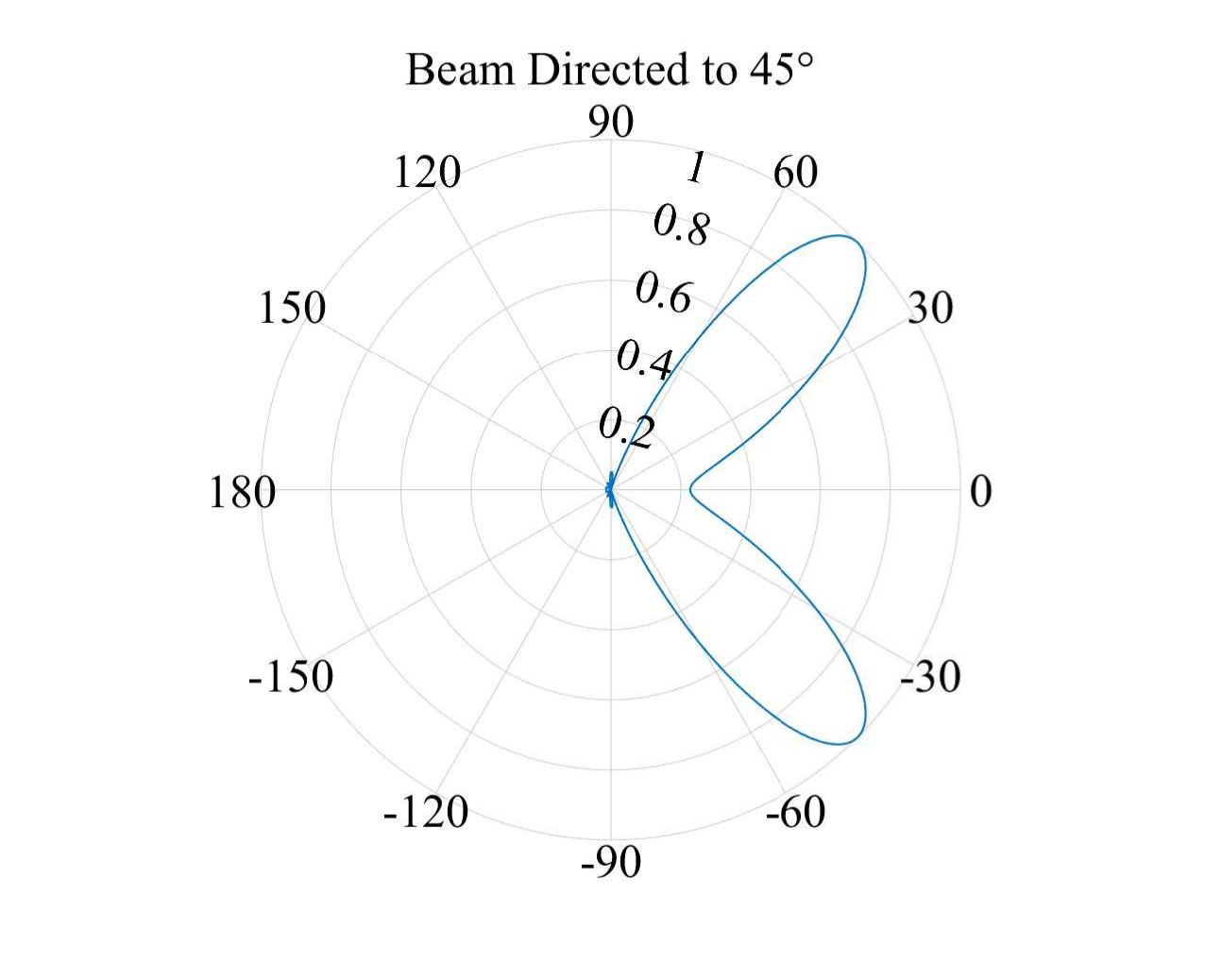}
    \label{fig:45_polar}
    }
    \subfigure[code word directing to $45^{\circ}$ ]{

    \includegraphics[width=0.46\linewidth]{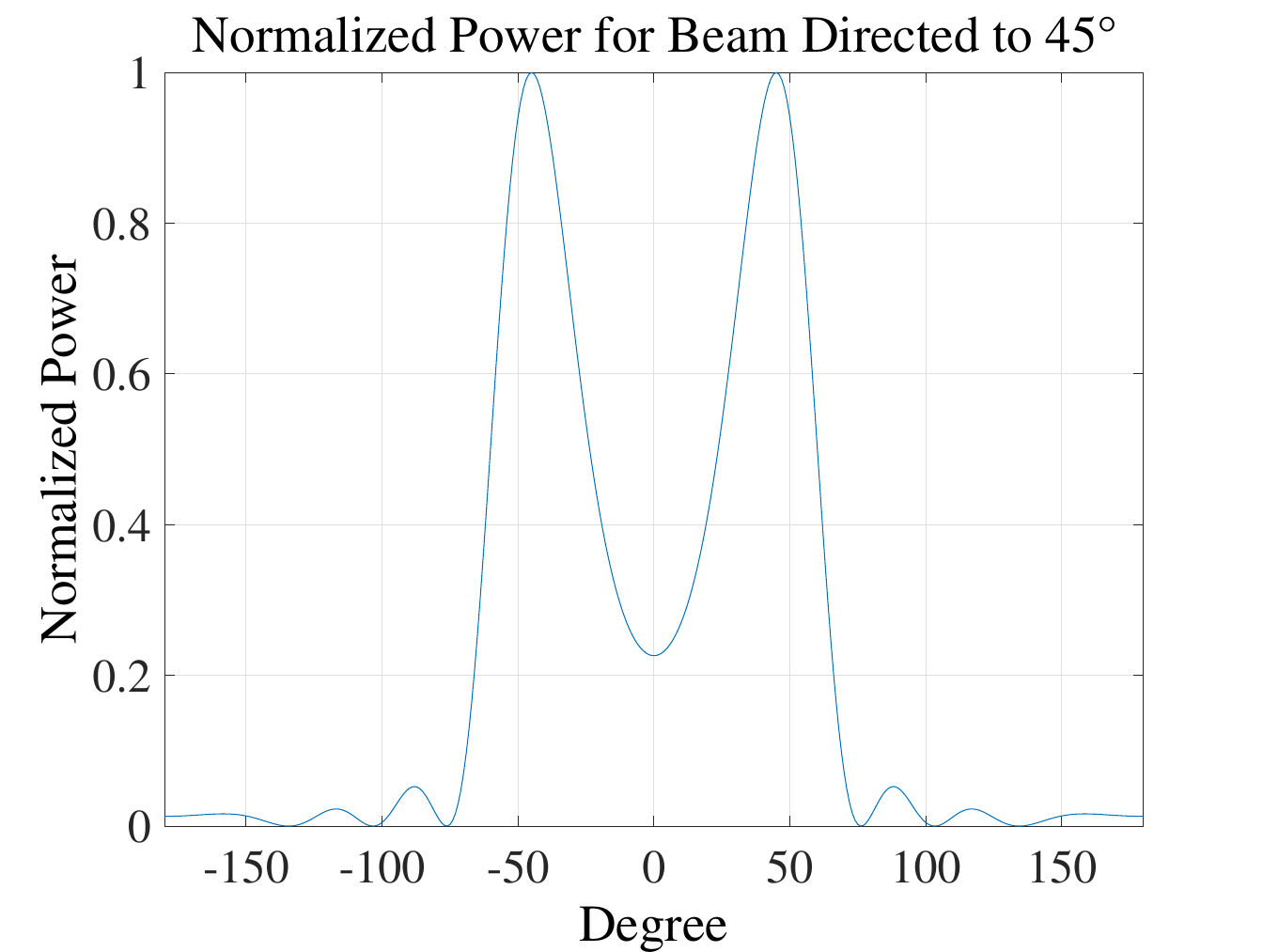}
    \label{fig:45_square}
    }
    
    \caption{
    Normalized power in azimuth cut (obtained by setting the 800MHz wave from 8-antenna ULA with a spacing of 0.1m).}
    \label{fig:multi}
\end{figure}

\begin{figure}[t]
    \subfigure[Received power from a fixed distance of 1.5m]{
    \includegraphics[width=0.45\linewidth]{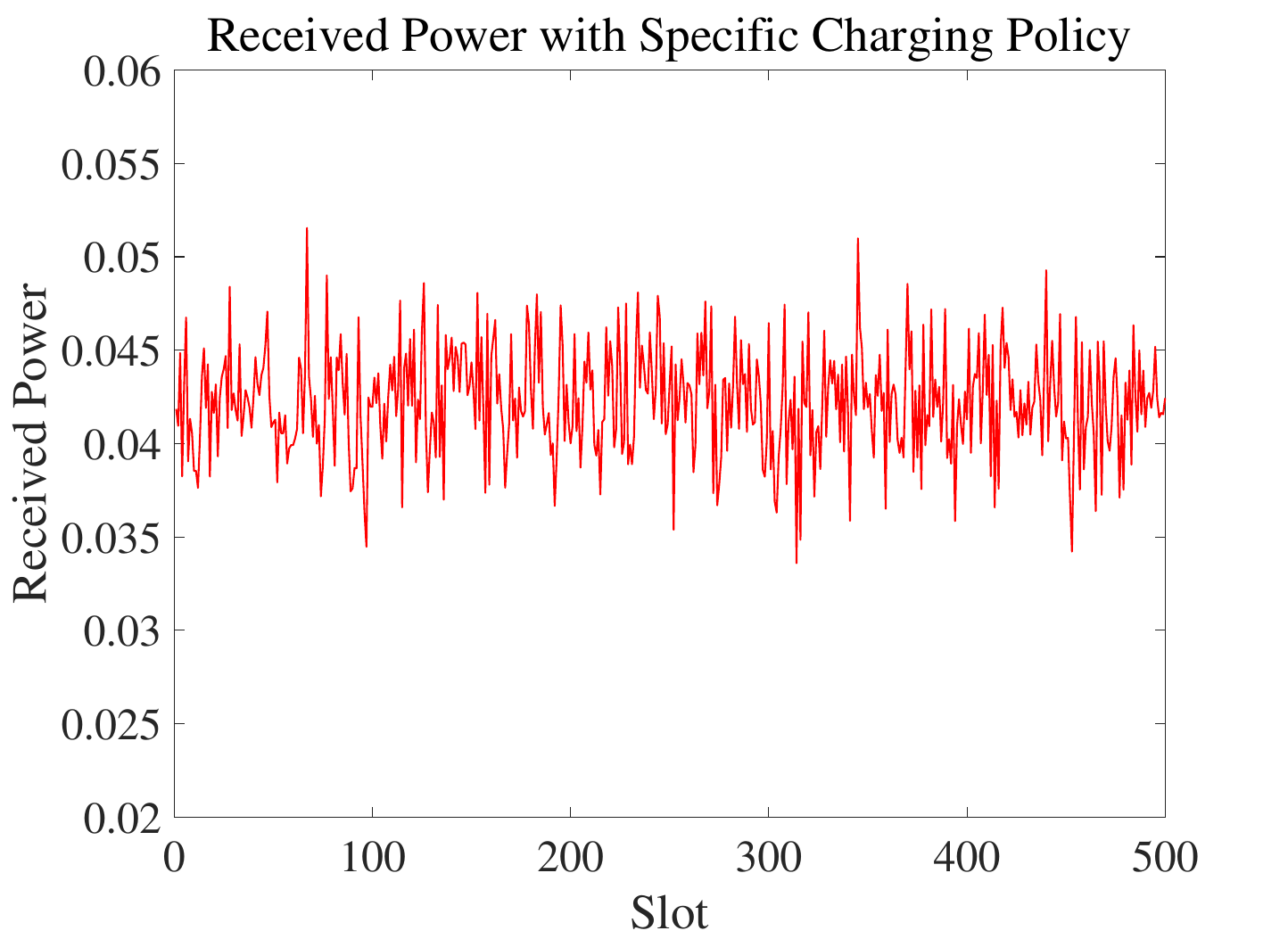}

    \label{fig:dynamic_channel}
    }
    \subfigure[Received power from varied distances]{
    \includegraphics[width=0.45\linewidth]{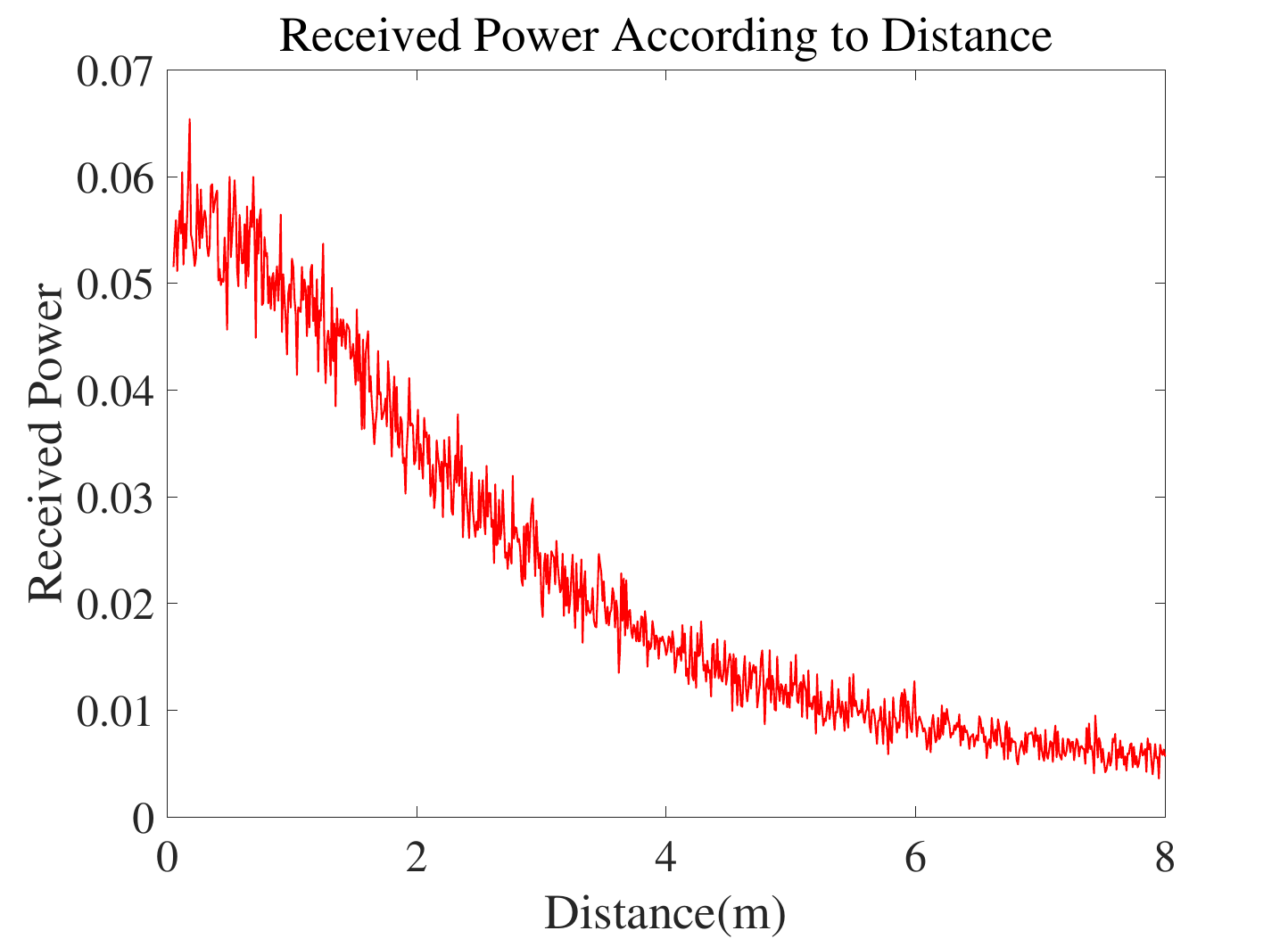}
    \label{fig:dynamic_dis}
    }
  \centering  
  
    \caption{
    Received power from static and mobile chargers simulated with the perfect beamforming vectors through the 800MHz wave from 8-antenna ULA with a spacing of 0.1m.}
\end{figure}

\vspace{0.05in}
\noindent {\bf [Dynamic and Unknown Channel State Information (CSI)]:}
In the path-loss channel model like two-ball model~\cite{di2015stochastic} or the classic Rayleigh model~\cite{yu2023reassessment}, the intensity of the received signal is proportional to the channel gain, which is characterized as an exponentially distributed random variable possessing a mean value of unity, indicating the dynamic fluctuation of channel state.
As shown in Fig.~\ref{fig:dynamic_channel}, the received power fluctuates from slot to slot due to the dynamic CSI, given fixed charging distance.
When the charging distance increases, as shown in Fig.~\ref{fig:dynamic_dis}, the received power also fluctuates significantly.
In fact, in real-world systems, besides the dynamic nature of CSI, we even do not know the mean value of channel gain before one actual charging begins. 
\emph{While most existing studies \cite{dai2018wireless,wang2019robust,dai2021placing} proposed solutions assuming that CSI is known and stable, this work explores practical charging policy with dynamic and unknown CSI, to be more applicable in real-world scenarios.}


\vspace{0.05in}
\noindent {\bf [Sensors]:} Let $S=\{s_1,\dots,s_{{N}}\}$ be the set of sensors to charge. All sensors are deployed in a 2D area with given coordinates.
Each sensor has one antenna for harvesting RF energy and another antenna for exchanging information with the mobile charger (e.g., energy status). These two antennas operate on different frequency bands and have no interference.
Each sensor is equipped with a battery with a capacity of $Q$\footnote{We assume that all sensors have the same battery capacity, which can be easily extended to support different battery capacities for individual sensors.}. 
We use $x_i$ to denote the {initial energy} of sensor $s_i$ before a charging round and we have $X=$ $\{x_1,...x_i,...x_N\}$. $p_i^c(t)$ is the power consumption function of sensor $s_i$. \emph{To satisfy the practical requirement, $p_i^c(t)$ is unknown and can be set according to real system features.}

{
\vspace{0.05in}
\noindent {\bf [Charger]:}
The mobile charger traverses through the area and charges the sensors in rounds to keep the WP-IoT system operate properly. As CSI changes rapidly during the movement of the charger~\cite{KimClerckx2020Signal}, the charging efficiency can be significantly reduced. As a result, in this work charging is only allowed when the mobile charger stops at some locations~\cite{liu2020effective}. We assume that the total moving time of the mobile charger is negligible compared to the stop-to-charging time spent on the sensors, which can be easily demonstrated in practical systems, as the wireless charging efficiency is low due to strict regulatory limitations. The orientations of the charger are implicitly handled by the beamforming codeword design, thus its rotational posture need not be explicitly modeled.

\vspace{0.05in}
\noindent {\bf [Real-Time Charging Constraint]:}
Each charging round is subject to a strict temporal constraint, denoted as $T_c$, which represents the maximum allowable duration for completing all charging activities in one round. This constraint arises from the limited battery capacity of the mobile charger, which restricts the total energy budget and available charging time. Importantly, \textit{$T_c$ serves as a hard real-time deadline}, within which all selected sensors must be charged based on their spatial distribution, residual energy, and communication demands.
Failure to meet this deadline may lead to sensor energy depletion and service degradation, making $T_c$ the key constraint that governs the design of the charging schedule. Therefore, \textit{charging locations, stop durations, and beamforming configurations must be jointly optimized to ensure the feasibility of charging tasks under the hard deadline $T_c$.}
}

\vspace{0.05in}
\noindent {\bf {[Charging Policy Design]}:} Based on the above model, in each charging round, we need to design the proper charging policy, which determines {\bf where} (the charging locations for the mobile charger), {\bf when} (the charging intervals in each location) and {\bf how} (the specific beam formed) to charge the sensors.
The moving trajectory of the mobile charger thus can be obtained by connecting the charging locations in the order of the corresponding charging intervals at each location.

The mobile charger maintains a codebook $\mathcal{M}=\{\bold{w}_1,\cdots,\bold{w}_M\}$, where the total number of code words (See Def.~\ref{def:codeword}) is $M$, to determine how to perform beamforming at each charging location.
When the charging locations and the corresponding code words are determined, the signal received at sensor $s_i$ can be represented as follows:
\begin{equation}
\label{eq:1}
    y_i=\bold{h}_i^{T}\bold{w},
\end{equation}
where $\bold{h}_i$ denotes the CSI between the charger and sensor $s_i$. The expectation of CSI is $\mathbb{E}[\bold{h}]_i \sim \frac{1}{d_i}\in\mathbb{R}^{1\times N_{a}}$ \cite{Balanis2005Antenna,choi2018theory}, where $d_i$ is the distance between $s_i$ and the charger. $\bold{w}$ is the code word picked by the mobile charger.
In most real systems, $\bold{h}_i$ is unknown but follows the Gaussian distribution~\cite{zhao2021online}.
Based on Eq.~(\ref{eq:1}), the received power at sensor $s_i$ is:\footnote{In real-world systems, the amount of power harvested by the sensor is typically smaller than the received power due to the power transfer loss from the RF signal to DC power. W.l.o.g. we denote the power transfer efficiency as $\eta,~\eta\in(0,1]$ \cite{bacinoglu2018energy,li2018energy}, and thus the harvested power is $\eta \cdot p_i$. In this work, for simplicity we assume that $\eta=1$ but the proposed algorithm and theoretical results in this work hold for different values of $\eta$.}
\begin{equation}
\label{eq:pi}
p_i(d_i,\bold{w})=|y_i|^2=|\bold{h}_i^T\bold{w}|^2.
\end{equation}



\begin{Definition}{\bf [Code word]}
\label{def:codeword}
A code word is a specific beamforming weight vector $\bold{w}=(w_1,\dots,w_n,$
$\dots,w_{N_{a}})$ \cite{van2002optimum}, where $w_n$ is a complex value indicating the phase and the amplitude of the signal from the $n$-th antenna of the charger.
\end{Definition}

\subsection{Problem Formulation}

Based on the system models, this work aims to solve the {\bf U}tility {\bf M}aximization Wireless {\bf C}harging Problem with mobile charger (UMC problem): 
{\qiu{Consider a charging round consisting of $N_s$ slots and $N$ sensors, where each sensor $s_i$ has an initial energy level $x_i$. For each slot, we select a charging policy $\pi_j = \{l_j, \bold{w}_j\}$ from all policies,}
i.e. the mobile charger stops at a location $l_j$ and picks a code word $\bold{w}_j$ for charging in each slot, where $j\leq |\Gamma|$ and $\Gamma=\{\pi_1,\pi_2,...\}$ is the set of all policies, i.e. $\Gamma=L \times \mathcal{M}$, where $L$ is the set of all available locations for the mobile charger to stop, and $\mathcal{M}$ is the codebook.
The objective of the UMC problem is to maximize the charging utilities of all sensors at the end of each round.

\noindent {\bf [Utility]:} The utility of a sensor should be proportional to its energy stored in battery after charging, denoted as $q_i$, and also needs to be defined to avoid both of the energy overflow given the battery capacity $Q$ and the starvation in the long-term system operation.
And to be applied in general systems, 
in the following we provide the general form of the utility function.
Specific utility functions (see examples in Section \ref{sec:exp_set}) can be designed according to the practical system requirements (individual power consumption, energy characteristics, etc.) as long as the the following definition (Def. \ref{def:u}) is satisfied.

\begin{Definition}{\bf [Utility Function $U(q_i)$]}
\label{def:u}
    $U(q_i)$, where $q_i$ is the energy of sensor $s_i$ after charging, can be any non-decreasing, non-negative, smooth
    \footnote{We utilize the Lipschitz continuity condition by the constant $B$ to describe the smoothness of $U(q_i)$. The condition indicates that there exists $B>0$ such that for any $q_1,q_2\in [0,Q]$, we have $U(q_1)-U(q_2)\leq B\frac{q_2-q_1}{Q}$. For a specific utility function $U(q_i)$, we are able to judge the performance of the scheme by giving $B$ as the lower bound, which will be discussed in online scenario when measuring the performance of an online learning algorithm.}
    and {convex function with a decreasing first derivative}. 
\end{Definition}

Based on the definition, the utility function has submodularity. This guarantees that we can give higher reward to charging the starving sensors and give less or no reward to charging those with adequate energy in battery. The target UMC problem is formulated as follows.

\vspace{0.05in}
\noindent \mathbi{Objective:}
\qiu{
\begin{equation}
\label{fr:convex1}
\max_{\{z_j^n, \pi^n\}} \sum_{i=1}^{{N}}{U(q_i^{N_s})}-\sum_{i=1}^{{N}}{U(C_i^{N_s})}
\end{equation}
}
\noindent $\mathbi{Subject to:}$
\begin{equation}
    \label{eq:realtime}
    N_s\cdot t_u \leq T_c
\end{equation}
\begin{equation}
    \label{eq:consumed}
C_i^n= x_i-\sum_{s=1}^{n}p^c_i(n)t_u,~\forall n\in[N_s]
\end{equation}
\begin{equation}
    \label{eq:calq_i}
q_i^n =C_i^n + \sum_{s=1}^{n} p_i(\pi^s)t_u,~\forall n\in[N_s]
\end{equation}
\begin{equation}
\begin{array}{ll}
    \label{eq:rangeq_i}
0\leq q_i^n \leq Q,~\forall n\in[N_s]
\end{array}
\end{equation}
\begin{equation}
\begin{array}{ll}
    \label{eq:z_j}
\pi^n=\sum_{j=1}^{|\Gamma|} z^n_j \pi_j, ~\forall n\in [N_s],~\forall \pi_j\in \Gamma
\\
\sum_{j=1}^{|\Gamma|} z^n_j=1,~z^n_j\in \{0,1\},~\forall n\in [N_s],~\forall j\in [|\Gamma|]
\end{array}
\end{equation}
In Eq. (\ref{fr:convex1}), $q^n_i$ denotes the energy in the battery of $s_i$ (after charging) at the $n$-th time slot, where $n\in [1,N_s]$, and $N_s$ is total number of slots in a charging round. Thus $q^{N_s}_i$ represents the energy of $s_i$ at the end of a charging round.
$C_i^n$ is the residual energy of sensor $s_i$ at the $n$-th slot if $s_i$ is not charged yet. So $C_i^{N_s}$ is the residual energy of $s_i$ if it is not charged during the entire charging round.
In Eq. (\ref{eq:realtime}), $t_u$ is the duration of one time slot, and $T_c$ is the limited charging time constrained by the limited charging battery of the mobile charger.
In Eq. (\ref{eq:consumed}), 
$x_i$ is the initial energy of $s_i$.
$p_i^c(n)$ is the energy consumed by $s_i$ in the $n$-th slot.
In Eq.~(\ref{eq:calq_i}), $\pi^s$ denotes the policy adopted in the $s$-th slot and the received power at $s_i$ under policy $\pi^s$ is $p_i(\pi^s)$, and thus $C_i^n$ and $\sum_{s=1}^{n} p_i(\pi^s)t_u$ forms $q^n_i$.
And we need to guarantee that $q^n_i\in [0,Q]$, where $Q$ is the capacity of the battery (Eq. (\ref{eq:rangeq_i})).
Eq. (\ref{eq:z_j}) shows that at each slot at most one policy can be selected, where $z^n_j$ denotes whether $\pi_j$ is selected in the $n$-th slot or not. 
Based on this formulation, the UMC problem is to determine the charging policies $\pi_j=\{l_j,\bold{w}_j\},~\forall j\in[|\Gamma|]$ for each slot,
{to maximize the utility of all sensors at the end of each round.}

{In the following, we first develop the candidate set of charging locations
$L$ to construct the policy set $\Gamma$.  
We then propose bandit-based approach to handle the target problem with the detailed regret analysis which guarantees the theoretical performance bounds. 
}

\section{Area Discretization}\label{sec:area}

To construct the set of stop locations ($L$) for the mobile charger, we divide the continuous 2D area into uniform grids with an edge length of $\epsilon$. 
{Each grid is a potential charging location, which is set up to simplify and discretize the space. The sensor's exact position is determined by its coordinates, meaning multiple sensors can be located within the same grid, or there may be no sensor at all in a grid.}
As illustrated in Section \ref{sec:system_model}, when considering the practical model of beamforming technique,
the power intensity is different from point to point even in the same grid, which brings significant challenges to the area discretization.
{\bf To address this challenge, we need to guarantee that the difference of power intensity inside the same grid can be bounded.}

In the following, by analyzing the gap between the minimum and maximum power intensity of a grid, given a charger with the fixed distance and fixed code word, we can evaluate the approximation ratio introduced by discretizing the 2D area.

\begin{figure}[t]
    \centering        \includegraphics[width=0.8\linewidth]{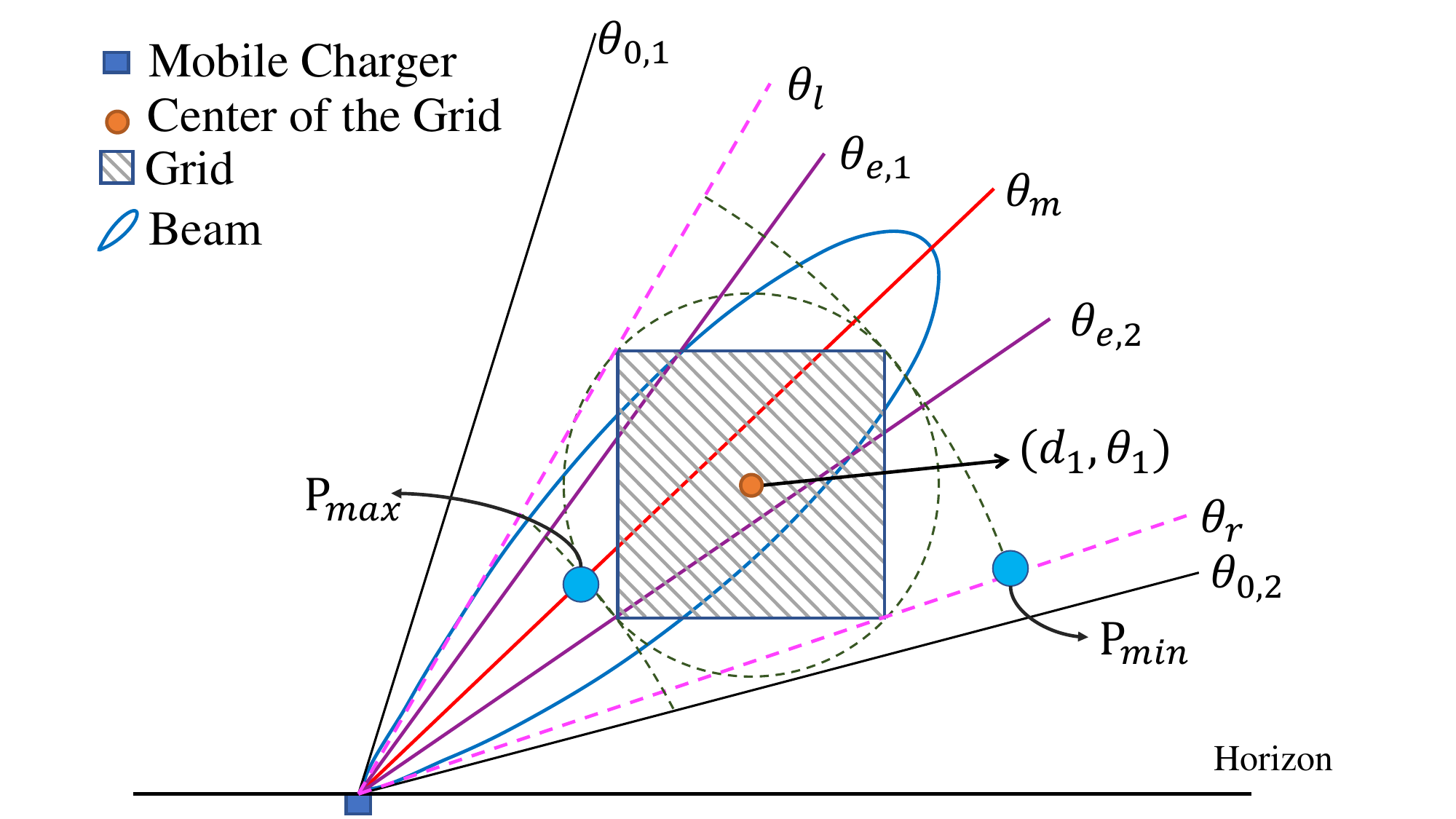}
    \caption{
    An example with beam from the charger directing to $\theta_m$ in polar coordinates covering the grid.
    \small {
    }
    }
    \label{fig:polar}
\end{figure}

As shown in Fig.~\ref{fig:polar}, considering the mobile charger as the center of the polar coordinates, the center of the grid is at the polar coordinate position $(d_1, \theta_1)$.
The main beam covers the sector of $[\theta_{0,1},\theta_{0,2}]$.
We use $P(d, \theta)$ to represent the power intensity at the polar coordinate position $(d, \theta)$, where $d$ is its distance to the charger, and $\theta$ is the angle formed by the horizon and the line which connects the charger and the position.
Based on the practical beamforming model, $P(d, \theta)$ satisfies the following three key features (see Fig.~\ref{fig:multi} and Fig.~\ref{fig:polar}):

\begin{enumerate}
    \item The power intensity in the main beam is symmetrical along $\theta_m$, where $\theta_m$ is the main direction of the beam with the maximum power intensity;
    \item Given the same $\theta$, the expectation of $P$ becomes larger along with the decrease of $d$, typically $P\sim \frac{1}{d^2}$ \cite{Balanis2005Antenna,choi2018theory};
    \item  Given the same $d$, $P$ is larger when $|\theta-\theta_m|$ decreases for $\theta\in [\theta_l,\theta_r]$, where $\theta_l$ and $\theta_r$ are the angles of left and right corners of the grid.
\end{enumerate}

Let $P_{max}$ and $P_{min}$ represent the upper bound of the maximum and the lower bound of the minimum power intensity in the grid with center at $(d_1,\theta_1)$, respectively.
{\bf The following theorem demonstrated that the area discretization brings constant approximation ratio $\alpha_0$ given the grid length $\epsilon$.}
\begin{theorem}
    \label{theorem:alpha}

    With a minimum distance between the mobile charger and a sensor\,\footnote{
    Let $d'$ represent the distance between the mobile charger and the sensor. Assume that $d' > \max( \frac{\sqrt{2}\epsilon}{2}\frac{sin(\pi/4-\theta_{0,1})}{sin(\theta_{e,1}-\theta_{0,1})}, \frac{\sqrt{2}\epsilon}{2}\frac{sin(\pi/4+\theta_{0,2})}{sin(\theta_{e,2}-\theta_{0,2})})$ so that the grid can be covered by the main beam.
    This is a reasonable assumption in real systems as there always exists a physical distance between sensor and the mobile charger. Meanwhile $d_1>\sqrt{2}\epsilon/2$ indicates the mobile charger and the sensor are not in the same grid.}, the approximation ratio of area discretization is constant:
\begin{equation}
    \alpha_0 \leq \frac{P_{max}}{P_{min}} = \gamma\frac{(d_1+\sqrt{2}\epsilon/2)^{2}}{(d_1-\sqrt{2}\epsilon/2)^{2}}
\end{equation}
where $d_1$ is the distance between the charger and the grid center. We use $\theta_{e,1}$ and $\theta_{e,2}$ to denote the two directions where the power intensity (with the same distance) is exactly $1/\gamma$ ($\gamma\in(1.25,2.5]$ for a reasonable ULA setting) of that along the direction of the main beam $\theta_m$.
\end{theorem}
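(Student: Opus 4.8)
The plan is to bound $P_{max}/P_{min}$ by separating the contribution of the radial decay from the angular variation within the grid. First I would fix the grid with center $(d_1,\theta_1)$ and note that every point of the grid lies in the annulus $[d_1 - \sqrt{2}\epsilon/2,\ d_1 + \sqrt{2}\epsilon/2]$, since the grid is a square of side $\epsilon$ and its half-diagonal is $\sqrt{2}\epsilon/2$. Using feature~(2), $P \sim 1/d^2$, the worst radial ratio over the grid is therefore at most $\frac{(d_1+\sqrt{2}\epsilon/2)^2}{(d_1-\sqrt{2}\epsilon/2)^2}$; the hypothesis $d_1 > \sqrt{2}\epsilon/2$ guarantees this is finite and positive. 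The footnote's lower bounds on $d'$ are precisely what is needed to ensure the whole grid sits inside the main-beam sector $[\theta_{0,1},\theta_{0,2}]$, so that features~(1) and~(3) apply to every point of the grid.

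Next I would handle the angular factor. Writing $P(d,\theta) = g(d)\,f(\theta)$ for the main-beam region (the separation of radial and angular dependence being implicit in the practical beamforming model), feature~(3) says $f$ is unimodal with peak at $\theta_m$, so over the grid's angular extent $[\theta_l,\theta_r]$ the maximum of $f$ is at whichever of $\theta_m,\theta_l,\theta_r$ is inside, and the minimum at the endpoint farthest from $\theta_m$. The key observation is that, by the covering assumption, $[\theta_l,\theta_r] \subseteq [\theta_{e,1},\theta_{e,2}]$, the sector on which $f$ stays within a factor $\gamma$ of its peak value $f(\theta_m)$; hence $\max_{[\theta_l,\theta_r]} f / \min_{[\theta_l,\theta_r]} f \le \gamma$. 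Combining the two factors gives $P_{max}/P_{min} \le \gamma \frac{(d_1+\sqrt{2}\epsilon/2)^2}{(d_1-\sqrt{2}\epsilon/2)^2}$, and since this depends only on the fixed quantities $d_1,\epsilon,\gamma$ (with $\gamma \in (1.25,2.5]$ for a reasonable ULA), it is a constant $\alpha_0$, independent of the particular code word or sensor.

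The main obstacle I anticipate is making rigorous the claim that the grid is contained in the sector $[\theta_{e,1},\theta_{e,2}]$: this requires a careful geometric argument relating the grid's side length $\epsilon$, its center distance $d_1$, and the half-angles $\theta_{e,1},\theta_{e,2}$ of the ``within-$\gamma$'' cone, which is exactly what the two min-distance conditions in the footnote encode (the corner angles $\theta_l,\theta_r$ of a square of side $\epsilon$ at distance $d'$ are controlled by $\arctan$ of a ratio of order $\epsilon/d'$, and one must check this stays below $\theta_{e,i}-\theta_{0,i}$ type gaps). A secondary subtlety is justifying the multiplicative separation $P(d,\theta)=g(d)f(\theta)$; if the model only gives features (1)--(3) qualitatively, I would instead bound $P_{max} \le g(d_1-\sqrt2\epsilon/2) f(\theta^\star)$ and $P_{min} \ge g(d_1+\sqrt2\epsilon/2) f(\theta_{**})$ directly from the monotonicity statements, which avoids assuming exact separability and still yields the stated bound. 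Everything else is routine bookkeeping.
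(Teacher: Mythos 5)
Your proposal is correct and follows essentially the same route as the paper: bound the radial variation over the grid by the annulus ratio $\frac{(d_1+\sqrt{2}\epsilon/2)^{2}}{(d_1-\sqrt{2}\epsilon/2)^{2}}$ using $P\sim 1/d^2$, bound the angular variation by $\gamma$ via the containment of the grid's angular extent in the within-$\gamma$ sector guaranteed by the footnote's minimum-distance condition, and multiply. If anything you are more careful than the paper, whose proof attaches the maximum power to the \emph{larger} distance $d_{max}=d_1+\sqrt{2}\epsilon/2$ and writes the angular factor as an exact equality $\gamma$ rather than an upper bound; your version (max power at the nearest point and angle closest to $\theta_m$, min power at the farthest point and angle, with $\le\gamma$ for the angular ratio) is the intended and correct bookkeeping.
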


\begin{proof}
    Let $P_{max}=P(d_{max},\theta_{max})$ and $P_{min}=P(d_{min},\theta_{min})$, 
    then we have $d_{max}=d_1+\frac{\sqrt{2}}{2}\epsilon$, $d_{min}=d_1-\frac{\sqrt{2}}{2}\epsilon$,
    $\theta_{max}=\mathop{\arg\min}\limits_{\theta}{(|\theta_m-\theta|)}$ and  $\theta_{min}=\mathop{\arg\max}\limits_{\theta}{(|\theta_m-\theta|)}$, where $\theta\in[\theta_l,\theta_r]$.

\noindent Considering $P\sim \frac{1}{d^2}$, we have
\begin{equation}
\begin{array}{ll}
    P_{max}&=P(d_{max},\theta_{max})
    \\  &= \frac{(d_1+\sqrt{2}\epsilon/2)^{2}}{(d_1-\sqrt{2}\epsilon/2)^{2}} P(d_{min},\theta_{max})
    \\  &= \gamma\frac{(d_1+\sqrt{2}\epsilon/2)^{2}}{(d_1-\sqrt{2}\epsilon/2)^{2}} P(d_{min},\theta_{min})
    \\  &= \gamma\frac{(d_1+\sqrt{2}\epsilon/2)^{2}}{(d_1-\sqrt{2}\epsilon/2)^{2}} P_{min}
\end{array}
\end{equation}
and
\begin{equation}
    \begin{array}{l}
    \alpha_0\leq \frac{P_{max}}{P_{min}}=\gamma\frac{(d_1+\sqrt{2}\epsilon/2)^{2}}{(d_1-\sqrt{2}\epsilon/2)^{2}}
    \end{array}
\end{equation}

Given $\gamma$ is a constant, $d_1>\frac{\sqrt{2}}{2}\epsilon$ and $\epsilon>0$, it is straightforward that $\alpha_0$ is constant-level.
In the following analysis, the approximation ratio of all the proposed schemes should times $\alpha_0$ which is introduced by the area discretization.
\end{proof}

\section{Bandit-based Scheme for UMC}\label{sec:scheme}

While the policy set $\Gamma=L \times \mathcal{M}$ ($\mathcal{M}$ is the codebook) has been developed, we now propose the bandit-based schemes to solve the UMC problem with unknown CSI and unknown energy consumption function for all sensors.
In the following, we first analyze the property of the oracle, i.e. when the CSI $\bold{h}_i$ and the total energy consumption during the charging round is known, and propose a greedy scheme GUA with constant ratio. This can help guide the design of the proposed two bandit algorithms, both with regret bound analysis.

\subsection{Property Analysis of the Oracle}
{
In oracle, the CSI $\bold{h}_i$ is stable and known at the beginning of each charging round.
To address the UMC problem under this scenario, below we propose a greedy procedure named GUA ({\bf G}reedy {\bf U}tility {\bf A}lgorithm) for selecting policy in a charging round, which is proved to achieve the constant-level approximation ratio. Let
\begin{equation}
    \label{eq:fcdot}
    f(\cdot)=\sum_{i=1}^{{N}}{U(q_i^{N_s})}-\sum_{i=1}^{{N}}{U(C_i^{N_s})}
\end{equation}
represent the objective function in the target UMC problem.
In each iteration of GUA, {the set of adopted policies} $\Pi$ greedily selects a policy $\pi_j \in \Gamma$ that can maximize the growth of its objective function $f(\cdot)$. 
Note that $\pi_j$ can be selected repeatedly and distinguished by the index of slot for each occurrence.

The time complexity of GUA is $O(NN_s|\Gamma| )$ for every charging round.
{\bf The following lemma and theorem show that GUA leads to a $(\frac{1}{2}-\frac{1}{2e})-$approximation ratio under some condition associated with the power consumption of sensors.}

\begin{lemma}
\label{lemma:mssfmp}
    The UMC problem is a monotone submodular set function maximization problem subject to the presented constraints.
\end{lemma}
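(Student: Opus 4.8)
The plan is to verify the three defining properties of a monotone submodular set function maximization problem for the objective $f(\cdot)$ in Eq.~(\ref{eq:fcdot}), viewed as a function of the multiset $\Pi$ of selected policies (one per slot, with repetition allowed and disambiguated by slot index as the text notes). First I would fix the ground set to be the $N_s$ copies of $\Gamma$, so that a feasible solution is a selection of at most one policy per slot, and observe that the constraint Eq.~(\ref{eq:z_j}) together with the real-time bound Eq.~(\ref{eq:realtime}) makes this a partition-matroid-type constraint on the ground set; this is the ``subject to the presented constraints'' clause. The term $\sum_i U(C_i^{N_s})$ is a constant independent of $\Pi$ (by Eq.~(\ref{eq:consumed}), $C_i^{N_s}$ depends only on $x_i$ and the consumption functions, not on the charging policy), so it suffices to analyze $g(\Pi):=\sum_{i=1}^N U(q_i^{N_s})$.

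The key steps are then: (i) \emph{Monotonicity.} Adding one more charging action (i.e., filling a previously empty slot, or more precisely moving from a sub-selection to a super-selection) only increases each $q_i^{N_s}$ by the nonnegative quantity $p_i(\pi^s)t_u$ from Eq.~(\ref{eq:calq_i}); since $U$ is non-decreasing by Def.~\ref{def:u}, $g$ is non-decreasing. One must note the cap $q_i^n\le Q$ in Eq.~(\ref{eq:rangeq_i}): I would argue that clipping at $Q$ preserves monotonicity because $\min(\cdot,Q)$ is itself non-decreasing, and composing non-decreasing maps keeps monotonicity. (ii) \emph{Submodularity.} For a fixed sensor $s_i$, its post-charging energy $q_i^{N_s}$ is an affine (hence modular) function of which policies are chosen — it is $C_i^{N_s}$ plus a sum of per-slot received powers — up to the saturation at $Q$. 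Pushing this modular ``energy budget'' through the concave-from-above map implied by Def.~\ref{def:u} (the utility is described as having a decreasing first derivative in $q_i$, i.e. it behaves concavely in the relevant regime, which is exactly the diminishing-returns property the paper invokes via the submodularity footnote) yields that $U(q_i^{N_s})$ is submodular in the selection; truncation at $Q$ only adds a further concave kink, which preserves submodularity. Summing over $i$ preserves submodularity. I would make the marginal-gain inequality precise: for $\Pi\subseteq\Pi'$ and an extra action $a$, the gain $U(q_i(\Pi\cup\{a\}))-U(q_i(\Pi)) \ge U(q_i(\Pi'\cup\{a\}))-U(q_i(\Pi'))$ because the energy level $q_i(\Pi)\le q_i(\Pi')$ and $U$ has a decreasing derivative, and the extra energy $\min(\Delta,Q-q_i)$ from action $a$ is no larger at the higher base level.

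The main obstacle I anticipate is handling the battery cap $Q$ cleanly: without truncation the problem is almost trivially ``modular inside a concave envelope'', but the hard real-time/overflow constraint Eq.~(\ref{eq:rangeq_i}) makes $q_i^{N_s}$ a truncated sum, and one must check that (a) the feasible region remains a matroid (or at least that GUA's greedy step is well defined and the standard $1-1/e$ / $1/2$ analysis applies to the constrained setting) and (b) submodularity survives the $\min(\cdot,Q)$ operation for every sensor simultaneously, including the interaction between the per-slot ordering in Eq.~(\ref{eq:calq_i}) and the ordering-independence needed for a set function. I would address (b) by noting $x\mapsto\min(x,Q)$ is concave and non-decreasing, so $U\circ\min(\cdot,Q)$ applied to a modular energy sum stays concave-composed-with-modular, hence submodular; and address (a) by citing the partition-matroid structure already encoded in Eq.~(\ref{eq:z_j}). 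Once these are in place, the lemma follows, and the subsequent theorem's $(\tfrac12-\tfrac1{2e})$ bound is the known guarantee for greedy on a monotone submodular function under such constraints.
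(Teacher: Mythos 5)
Your proposal is correct and follows essentially the same reasoning the paper gives: the paper omits a formal proof of Lemma~\ref{lemma:mssfmp} as ``straightforward'' and only sketches the same intuition (monotonicity because charging never decreases utility, submodularity because the diminishing first derivative of $U$ in Def.~\ref{def:u} yields diminishing returns in the selected charging actions). Your additional care with the battery cap --- observing that $\min(\cdot,Q)$ is concave and non-decreasing so the composition with the modular energy sum stays submodular, and that the final energy must be order-independent for $f$ to be a genuine set function --- fills in details the paper silently assumes, and is consistent with how the paper later treats overflow as wasted energy in the proof of Theorem~\ref{theorem:offline}.
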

The proof of Lemma \ref{lemma:mssfmp} is straightforward and thus omitted.
\qiu{In this context, the “set” refers to the selected charging actions over all time slots, represented by the binary variables ${z_j^n}$. The objective maps these decisions to the total utility. Since utility grows slower as a sensor's energy increases, charging low-energy sensors yields higher gains, showing diminishing returns. Thus, the objective is submodular. As charging never reduces utility, it is also monotone.}
This lemma will be used in the proof of Theorem \ref{theorem:offline}.

\begin{algorithm}[t]
\renewcommand{\thealgorithm}{GUA}
\caption{: Greedy Utility Algorithm}
\label{alg:greedy}
\begin{algorithmic}[1]
\REQUIRE Number of slots in a round $N_s$, energy information at the beginning of the round $X_t$, the expected received power $p_i(\pi_j), i\in[N], j\in [|\Gamma|]$.
\STATE $\Pi=\emptyset$
\FOR{$n=1,2,\cdots,N_s$}
\STATE $\pi'_n=\argmax\limits_{\pi_j\in \Gamma} {f(\Pi \cup \{\pi_j\})-f(\Pi)}$
\STATE Add ${\pi'_n}$ in $\Pi$ as the $n$-th element.
\ENDFOR
\end{algorithmic}
\end{algorithm}

\begin{theorem}
    \label{theorem:offline}
    When sensor $s_i$'s power consumption  $p^c_i=\frac{x_i}{\zeta T_c}$, {where $\zeta\ge1$ is a parameter} indicating that $s_i$ would consume $\frac{x_i}{\zeta}$ energy if not charged during the round, GUA achieves $(\frac{1}{2}-\frac{1}{2e})-$ approximation
, with either of the two conditions holds: 1) $\zeta \ge 2$; and 2) $U(\zeta p^c_i T_c)\le 2U((\zeta-1)p^c_i T_c)$ for $i\in [N]$.\footnote{The first condition implies that if sensor $s_i$ is not charged in this round, it has at least half of its initial energy remaining at the end of the round. \qiu{The second condition ensuresthe utility function does not increase too rapidly with energy. Intuitively, it enforces diminishing returns: adding more energy to a well-powered sensor should not yield disproportionately high utility gains.}}
\end{theorem}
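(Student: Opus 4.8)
\textbf{Proof proposal for Theorem~\ref{theorem:offline}.}

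The plan is to cast GUA as the classical greedy algorithm for maximizing a monotone submodular function subject to a uniform matroid (cardinality) constraint of rank $N_s$, and then invoke the standard $(1-1/e)$ guarantee, losing an extra factor of $1/2$ because of a subtlety specific to our setting. By Lemma~\ref{lemma:mssfmp}, the objective $f(\cdot)$ in Eq.~(\ref{eq:fcdot}) is monotone and submodular over the ground set of (policy, slot) pairs, and the constraint ``one policy per slot, $N_s$ slots'' is exactly a partition/uniform matroid of rank $N_s$. The greedy rule in line~3 of GUA picks, at each step, the element of maximal marginal gain, so the Nemhauser–Wolsey–Fisher theorem gives $f(\Pi_{\mathrm{GUA}}) \ge (1-1/e)\, f(\Pi^*)$ where $\Pi^*$ is the optimal size-$N_s$ selection — \emph{provided} the objective stays submodular on the whole feasible region. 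The complication is the cap constraint Eq.~(\ref{eq:rangeq_i}): once $q_i^n$ hits $Q$, extra charging of $s_i$ contributes nothing, but more importantly the raw ``received-power'' bookkeeping in Eq.~(\ref{eq:calq_i}) can in principle push $q_i^n$ above $Q$, at which point the truncated utility is no longer globally submodular in the unconstrained sense. This is where the two hypotheses on $p_i^c$ enter.

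First I would make precise the reduction: define $g(\Pi) = \sum_i U\big(\min(q_i^{N_s}(\Pi),Q)\big) - \sum_i U(C_i^{N_s})$, which is genuinely monotone submodular (truncation of a modular energy gain composed with a concave/``convex with decreasing derivative'' — note the paper's $U$ is concave in the usual sense despite the wording — $U$ preserves submodularity), and argue that GUA run on $f$ coincides with GUA run on $g$ as long as no sensor overflows along the greedy trajectory. The role of condition~(1), $\zeta \ge 2$: if $s_i$ is uncharged it retains $x_i/\zeta \le x_i/2$, hence it has lost at least half its battery headroom budget, which (together with $x_i \le Q$) leaves enough slack that a single additional charging slot cannot overshoot $Q$ by more than a controlled amount; I would quantify this and show the overshoot costs at most a factor $2$ relative to the clean submodular surrogate $g$, i.e. $f(\Pi_{\mathrm{GUA}}) \ge \tfrac12 g(\Pi_{\mathrm{GUA}}) \ge \tfrac12 (1-1/e) g(\Pi^*) \ge \tfrac12(1-1/e) f(\Pi^*)$. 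For condition~(2), $U(\zeta p_i^c T_c) \le 2 U((\zeta-1) p_i^c T_c)$: here I would use the inequality directly to bound the ratio between the utility computed with the (possibly over-$Q$) energy and the utility computed with the energy one charging unit below, again yielding the factor-$2$ slack; the Lipschitz constant $B$ from Def.~\ref{def:u} is the right tool to convert ``one extra slot of charging'' into a bounded additive/multiplicative utility change. Throughout, I would keep the per-round power-consumption model $p_i^c = x_i/(\zeta T_c)$ fixed so that $C_i^{N_s} = (1-1/\zeta)x_i = (\zeta-1)p_i^c T_c$ and $x_i = \zeta p_i^c T_c$, which is exactly the combination appearing in hypothesis~(2).

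The key steps in order: (i) verify $g$ is monotone submodular over the rank-$N_s$ uniform matroid (short, cite Lemma~\ref{lemma:mssfmp} and closure of submodularity under truncation and concave composition); (ii) apply the greedy theorem to get $g(\Pi_{\mathrm{GUA}}) \ge (1-1/e)\,g(\Pi^*)$; (iii) show $g(\Pi^*) \ge f(\Pi^*)$ is false in the wrong direction but $f(\Pi^*) \le 2\,g(\Pi^*)$ — actually the needed comparison is $f(\Pi_{\mathrm{GUA}}) \ge \tfrac12 g(\Pi_{\mathrm{GUA}})$ and $g(\Pi^*) \ge f(\Pi^*)$ when $\Pi^*$ itself respects the cap, with the cap-respecting reduction of $\Pi^*$ handled via monotonicity — so I would carefully set up which of $f,g$ upper/lower bounds which, using that on a cap-feasible schedule $f=g$; (iv) discharge the overshoot factor using either $\zeta\ge2$ (geometric slack argument) or hypothesis~(2) (direct utility inequality), each giving the $\tfrac12$; (v) combine to conclude $f(\Pi_{\mathrm{GUA}}) \ge (\tfrac12 - \tfrac{1}{2e}) f(\Pi^*)$. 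The main obstacle I anticipate is step~(iii)--(iv): pinning down exactly how a single additional greedy charging slot can violate the $q_i \le Q$ constraint and showing that the two stated conditions each bound the resulting utility discrepancy by a factor of $2$ — in particular making sure the argument is uniform over all sensors and all points in the greedy trajectory, not just at termination, since submodularity of the surrogate must hold step-by-step for the greedy analysis to apply. A secondary subtlety is reconciling the paper's phrase ``convex function with a decreasing first derivative'' in Def.~\ref{def:u} with the concavity that submodularity of $U(\text{energy})$ actually requires; I would state explicitly that ``decreasing first derivative'' is the operative property (diminishing returns) and that this is what Lemma~\ref{lemma:mssfmp} relies on.
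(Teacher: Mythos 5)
Your high-level skeleton matches the paper's: both arguments get $(1-1/e)$ from submodular greedy on a cap-free relaxation and then pay an extra factor of $\tfrac12$ for the battery cap, with the two hypotheses on $p_i^c$ used only to control that second factor. (The paper's relaxation is the $Q=\infty$ problem, comparing $U^g$ to $U^g_\infty$ and $OPT_\infty$ via $U^g\le OPT\le OPT_\infty$ and $U^g_\infty\ge(1-\tfrac1e)OPT_\infty$, rather than your end-of-round-truncated surrogate $g$; either relaxation can be made to work, and your observation that the objective must be read as concave-of-modular for Lemma~\ref{lemma:mssfmp} to hold is correct and worth stating.)

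The genuine gap is step (iv), which is the entire content of the theorem and which you leave as ``I would quantify this,'' with the inequality directions explicitly unresolved. Moreover, the mechanism you sketch for condition (1) --- that $\zeta\ge 2$ leaves enough headroom that ``a single additional charging slot cannot overshoot $Q$ by more than a controlled amount'' --- is not the argument that closes the factor of $2$, and I do not see how to make it work: the loss from the cap is not a per-slot overshoot but the \emph{cumulative} energy wasted between the first slot at which a sensor hits $Q$ and the last, which is bounded by the total round discharge $p_i^cT_c$ regardless of $\zeta$. The step that actually delivers the factor $2$ is a comparison of the \emph{wasted utility} to the \emph{achieved utility}: by concavity, the loss per sensor is at most $U(x_i)-U(x_i-p_i^cT_c)=U(\zeta p_i^cT_c)-U((\zeta-1)p_i^cT_c)\le U(p_i^cT_c)$; any sensor that overflows ends the round with at least $Q-p_i^cT_c\ge x_i-p_i^cT_c=(\zeta-1)p_i^cT_c$ energy, so the greedy's value already dominates $U((\zeta-1)p_i^cT_c)$; and condition (1) gives $U(p_i^cT_c)\le U((\zeta-1)p_i^cT_c)$ by monotonicity (while condition (2) bounds the loss by $U((\zeta-1)p_i^cT_c)$ directly), whence $U^g_\infty\le U^g+U((\zeta-1)p_i^cT_c)\le 2U^g$. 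Without some version of this ``loss $\le$ gain'' comparison, your chain $f(\Pi_{\mathrm{GUA}})\ge\tfrac12 g(\Pi_{\mathrm{GUA}})$ is unsupported. A minor further correction: the Lipschitz constant $B$ from Def.~\ref{def:u} plays no role here (it is used only in the bandit regret bounds); the tool you need is concavity with $U(0)\ge 0$, i.e.\ $U(a)-U(b)\le U(a-b)$.
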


\begin{proof}
    For an arbitrary instance, let $U^g_{\infty}$ denote the utility obtained by the greedy algorithm \ref{alg:greedy} and $OPT_{\infty}$ denote the optimum when $Q=\infty$.
    Relatively, for the same instance, let $U^g$ denote the utility obtained by the greedy algorithm \ref{alg:greedy} and $OPT$ denote the optimum when $Q\neq\infty$.

    Based on Lemma \ref{lemma:mssfmp}, it can be easily proved that \ref{alg:greedy} achieves $(1-\frac{1}{e})$-approximation to maximize $f(\cdot)$ when $Q=\infty$ and there is no polynomial-time algorithm that can achieve better approximation, which follows the proofs given in studies~\cite{chen2018contextual,fujishige2005submodular}. 
    Thus we have 
    \begin{equation}
    \label{eq:g'_OPT0}
        (1-1/e)OPT_{\infty}\le U^g\le OPT_{\infty}
    \end{equation}
    And apparently
    \begin{equation}
    \begin{array}{ll}
    \label{eq:g'_OPT}
         U^g\le OPT\le OPT_{\infty}, \\
         U^g\le U^g_{\infty}. \\
    \end{array}
    \end{equation}
    Considering the two situations that $s_i$ is fully charged for the first time in round $t$ in $n_1$-th slot with energy overflow and $s_i$ is fully charged for the last time in $n_2$-th slot without energy overflow, the wasted part of energy is at most $(n_2-n_1)p^c_i t_u\le p^c_i T_c $, where $s_i$ has a battery capacity of $Q$.
    Obviously, with Eq.~(\ref{eq:g'_OPT}) and given context $x_i$, we have 
    \begin{equation}
        \begin{array}{ll}
             U(Q) \ge U^g_{\infty} \ge U^g & \ge U(Q-p^c_i T_c) \\
             & \ge U(x_i-p^c_i T_c) \\ 
             & = U((\zeta-1)p^c_i T_c), 
        \end{array}
    \end{equation}
    where $x_i$ denotes the initial amount of stored energy of $s_i$ and $s_i\in (0,Q)$ for general analysis.
    Due to the submodularity of utility function, we have
    \begin{equation}
    \begin{array}{ll}
    \label{eq:bound_gap}
        U^g_{\infty}-U^g & \le U(Q)-U(Q-p^c_i T_c) 
                \\ & \le U(x_i)-U(x_i-p^c_i T_c)
                \\ & = U(\zeta p^c_i T_c)-U((\zeta-1)p^c_i T_c)
                \\ & \le U(p^c_i T_c)
                .
    \end{array}
    \end{equation}
    If $\zeta\ge 2$, i.e. the first condition holds as given in the theorem statement, according to the monotonicity of $U$, we have
    \begin{equation}
        U(p^c_i T_c)\le U((\zeta-1)p^c_i T_c) .
    \end{equation}   
    Then we can bound $U^g$ by
    \begin{equation}
        \begin{array}{ll}
         U^g_{\infty} &\le U(p^c_i T_c) + U^g \\
             &\le U((\zeta-1) p^c_i T_c ) + U^g\le 2U^g
        \end{array}
        \label{eq:uginfty}
    \end{equation}
    And we can finish the proof with $\zeta\ge 2$:
    \begin{equation}
        U^g \ge \frac{1}{2}U^g_{\infty} \ge \frac{1}{2}(1-\frac{1}{e})OPT_{\infty} \ge (\frac{1}{2}-\frac{1}{2e})OPT.
        \label{eq:ugapp}
    \end{equation}
    If $U(\zeta p^c_i T_c)\le 2U((\zeta-1)p^c_i T_c)$, i.e. the second condition holds, according to Eq.~(\ref{eq:bound_gap}), we have
    \begin{equation}
        U^g_{\infty}-U^g\le U((\zeta-1) p^c_i T_c )
    \end{equation} 
    Then we can bound $U^g$ by
    \begin{equation}
         U^g_{\infty} \le U((\zeta-1) p^c_i T_c ) + U^g\le 2U^g
        \label{eq:uginfty}
    \end{equation}
    And thus when $U(\zeta p^c_i T_c)\le 2U((\zeta-1)p^c_i T_c)$, we also have:
    \begin{equation}
        U^g \ge \frac{1}{2}U^g_{\infty} \ge \frac{1}{2}(1-\frac{1}{e})OPT_{\infty} \ge (\frac{1}{2}-\frac{1}{2e})OPT.
        \label{eq:ugapp}
    \end{equation}
\end{proof}

The proposed GUA and its theoretical result can be used to guide the design and analysis of the proposed bandit algorithms (see Line 4 in Algorithm \ref{alg:UMCB} and Eq. (\ref{eq:regret1})).

}
\subsection{Mapping to Bandit}

In the following, we first map the target UMC problem to the Multi-Arm Bandit problem, which is to make sequential decisions with unknown information based on the reward given by pulling arms in a round. With unknown CSI $\bold{h}_i$ and power consumption function $p_i^c$, we can try different policies by guessing and approaching the real $\bold{h}_i$ and $p_i^c$ round-by-round and update the policy based on the reward (the utility of sensors) obtained in last rounds.

Considering there are $T$ charging rounds, let $t$ represent the index of $T$ rounds.
CSI $\bold{h}_i$ follows some unknown distribution.
The harvested power of $s_i$ under $j$-th policy $\pi_j$ thus is regarded as a random variable of $p_i(\pi_j)$ with unknown expectation.  

\vspace{0.02in}
\noindent {\bf [Base arm $a_k$]} Each base arm represents a code word $\bold{w}_j$ at a specific location $l_j$ in the time slot $t_u$, corresponding to a charging policy $\pi_j=\{\bold{w}_j,l_j\}$ at each time slot $t_u$.
Given the policy set $\Gamma$ and $N_s$ time slots for charging, the set of all base arms is denoted as $\mathcal{K}=(a_1,\dots, a_k, \dots, a_K)$, where $K=|\Gamma| N_s$ represents the total number of available base arms
and $a_k$ denotes the $k$-th base arm corresponding to $\pi_j$ in the $n$-th time slot, where $k=(n-1)|\Gamma|+j$.
We define $p_i(a_k)$ as the expected harvested power of $s_i$ when employing $a_k$ {and normalize it to $[0,1]$}.
Note that among these $K=|\Gamma| N_s$ base arms, $a_p$ is identical to arm $a_q$ if the policy $\pi_j$ in these two arms are the same and the time slot indices are different, implying that we apply the same charging policy $\pi_j$ in two different time slots. Therefore, we have 
\begin{equation}
\label{eq:piak}
    p_i(a_k)=p_i(\pi_j,t_u)=p_i(\pi_j).
\end{equation}

 
\noindent {\bf [Super arm $S_t$]} We apply combinatorial bandit in this problem \cite{wang2017improving}, where a super arm (a combination of base arms) is selected in each round.
In each charging round $t$, we select a combination of $N_s$ base arms from all the $K$ base arms.
These selected base arms are combined to form a super arm $S_t\subseteq \mathcal{P}(\mathcal{K})$ and $|S_t|=N_s$, which serves as the charging policy for round $t$.



\vspace{0.02in}
\noindent {\bf [Reward $r_{S_t}$]}{
Given the formulation of utility function $U(\cdot)$ and the context information $X_t$\footnote{We assume that there are gaps between adjacent charging rounds and the initial energy states of sensors are at random levels at the start of each charging round. This follows the practical system settings that each sensor might also be powered by different energy harvesting techniques from the ambient sources. And each sensor also consumes energy according to the dynamic workload in the network.} in round $t$, the reward of the selected super arm $S_t$ can be calculated by observing the increase in utility from all sensors.
In round $t$, we maximize the reward of $S_t$, denoted as $r_{S_t}$. The objective can be represented as follows.
\begin{equation}
\max r_{S_t}=\mathbb{E}[\sum_{i=1}^{{N}}{U(q^{N_s}_i)}-\sum_{i=1}^{{N}}{U(C^{N_s}_i)} ].
\end{equation}

\subsection{UMCB Charging Schemes}

\begin{algorithm}[t]
\renewcommand{\thealgorithm}{UMCB}
\caption{: Utility Maximization Combinatorial Bandit}
\label{alg:UMCB}
\begin{algorithmic}[1]
\REQUIRE $T$
\FOR {t=1,2,3,\dots}
\STATE Observe the context $X_t$, update $T_{j,t}$ and $\hat{p}_{i}(\pi_j)$.
\STATE Update $\rho_{j}$: $\rho_{j,t}=\sqrt{\frac{3\ln{t}}{2T_{j,t}}}$.
\STATE Determine the super arm $S_t$ greedily based on the proposed GUA.
\STATE Play $S_t$ and observe rewards.
\ENDFOR
\end{algorithmic}
\end{algorithm}

Based on the above bandit setting, we present two upper confidence bound based algorithms: UMCB ({\bf U}itlity {\bf M}aximization {\bf C}ombinatorial {\bf B}andit) and UMCB-SW ({\bf S}liding {\bf W}indow) for stationary ($\bold{h}_i$ follows unknown but fixed distribution) and non-stationary ($\bold{h}_i$ follows unknown and dynamic distribution) CSI, respectively.
The performance of the algorithms are evaluated by the expected regret, which is the gap between the proposed algorithm and the oracle.{
These two algorithms are designed based on the framework of the classic bandit algorithm CUCB~\cite{wang2017improving}.
{\bf The regret of the proposed scheme UMCB is proved to be bounded in $O(ln T)$ with small coefficients.}
And it has been proved in \cite{kveton2015tight} that the distribution-dependent\footnote{Distribution-dependent bound refers to the specific characteristics of the performance boundary dependence on the distribution of arm rewards, such as gap to the optimal arm. Relatively, distribution-independent bound provides a performance guarantee established under all possible reward distribution. The former usually provides a tighter bound.} bound (given by any bandit framework) is lower bounded by $\Omega(log T)$, which indicates that {\bf the proposed UMCB achieves the tight regret bounds.}
}

\subsubsection{UMCB}

As shown in Eq. (\ref{eq:piak}), $p_i(\pi_j)$ denotes the expected harvested power of $s_i$ given policy $\pi_j\in \Gamma$.
Based on the charging history, $T_{j,t}$, the number of times $\pi_j$ is selected for $s_i$ until round $t$, and $\bar{p}_i (\pi_j)$, the empirical estimation of $p_i(\pi_j)$ for all sensors and policies, can be updated. 
Then, we can obtain the confidence radius $\rho_{j,t}$ of $p_i(\pi_j)$ as $\rho_{j,t} = \sqrt{\frac{3\ln{t}}{2T_{j,t}}}$\footnote{{The form of this confidence radius is derived from the Upper Confidence Bound (UCB) algorithm commonly used in multi-armed bandit problems.\cite{slivkins2019introduction}}} and thus obtain the UCB denoted as
\begin{equation}
   \hat{p}_i(\pi_j)=\bar{p}_i (\pi_j)+\rho_{j,t}. 
\end{equation}

After the initialization, we can consider the UCB by incorporating them into the maximization problem form of the submodular set function. 
Given $\hat{p}_i(\pi_j)$ of all sensors and policies, we apply the proposed scheme GUA to solve the general problem where $\bold{h}_i$ is unknown.
Starting from an empty set, $S_t$ greedily selects a base arm corresponding to policy $\pi^*\in\Gamma$ that can maximize the growth of the objective function $f(\cdot)$ as defined in Eq. (\ref{eq:fcdot}) in each step until it reaches $N_s$.
Then the charging strategy of round $t$ is determined. The details of the algorithm are shown in Algorithm \ref{alg:UMCB}.

The time complexity of UMCB is $O(NN_s|\Gamma|)$ for every charging round. It is worth noting that both UMCB and the following proposed UMCB-SW determine the charging policies ($\pi_j$ for all time slots) before the start of each charging round, so their computation time is not a concern for real-time charging. In practical systems, a charging round may last for hours, making the algorithm computation time negligible.   

{

Let $S^*_t$ denote the optimal strategy and $S_t$ is the UMCB policy.
Given $X_t$ in each round $t$, as the optimal solution is hard to obtain, we can measure the performance of UMCB by comparing to the proposed algorithm GUA with constant bound by defining the expected regret as
\begin{equation}
\label{eq:regret1}
    \min R_1(T)=\alpha \sum_{t=1}^{T} r_{S^*_t}-\sum_{t=1}^{T} r_{S_t},
\end{equation}
where $\alpha=\frac{1}{2}-\frac{1}{2e}$ is the approximation ratio of GUA.
}

We define the event $\mathcal{N}_t^s$ as: in round $t$, for arm $a_k, \forall k\in [K]$ and $s_i, \forall i\in [N]$, we have $|\hat{p}_{i,t-1}(a_k) - p_{i,t}(a_k)|\le \rho_{k,t}$, where $\rho_{k,t}=\sqrt{\frac{3\ln{T}}{2T_{k,t-1}}}$($\infty$ if $T_{k,t-1}=0$).
\qiu{Then we can have the following theorem, which show the regret bound of the proposed UMCB.}

\begin{lemma}
    For each round $t\ge 1$, $\Pr\{\neg \mathcal{N}_t^s\}\le 2NK t^{-2}$, where $K$ is the number of total base arms, and $N$ is the number of sensors.
\end{lemma}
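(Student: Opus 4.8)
The plan is a standard concentration argument built from three ingredients: Hoeffding's inequality for a \emph{fixed} number of i.i.d.\ samples, a ``peeling'' union bound over the random observation count $T_{k,t-1}$, and a closing union bound over all base arms and all sensors. First I would fix one base arm $a_k$ and one sensor $s_i$ and isolate the event $\{\,|\hat p_{i,t-1}(a_k)-p_{i,t}(a_k)|>\rho_{k,t}\,\}$. In the stationary regime assumed for UMCB, the harvested powers recorded for $s_i$ over the rounds in which the policy $\pi_j$ underlying $a_k$ is played form an i.i.d.\ sequence with mean $p_i(a_k)$, and since these values are normalized to $[0,1]$, the average $\bar p$ of any fixed $m$ of them obeys, by Hoeffding,
\begin{equation}
\Pr\!\left\{\,|\bar p-p_i(a_k)|>\sqrt{\tfrac{3\ln T}{2m}}\,\right\}\le 2\exp\!\left(-2m\cdot\tfrac{3\ln T}{2m}\right)=2T^{-3}.
\end{equation}

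Next, observing that $\rho_{k,t}$ is exactly this threshold evaluated at $m=T_{k,t-1}$ (and is $\infty$, making the deviation event empty, when $T_{k,t-1}=0$), I would apply a union bound over the at most $t-1$ possible realized values of $T_{k,t-1}$:
\begin{equation}
\Pr\!\left\{\,|\hat p_{i,t-1}(a_k)-p_{i,t}(a_k)|>\rho_{k,t}\,\right\}\;\le\;\sum_{m=1}^{t-1}2T^{-3}\;\le\;2T\cdot T^{-3}\;=\;2T^{-2}\;\le\;2t^{-2},
\end{equation}
using $t\le T$. Since $\neg\mathcal N_t^s$ is, by definition, the union over $k\in[K]$ and $i\in[N]$ of these per-arm, per-sensor deviation events, one final union bound yields $\Pr\{\neg\mathcal N_t^s\}\le NK\cdot 2t^{-2}=2NKt^{-2}$, which is the claim. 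The same line of reasoning goes through unchanged if the radius is written with $\ln t$ in place of $\ln T$.

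The only step that is not purely mechanical --- and the one I would be most careful with --- is that $T_{k,t-1}$ is a random variable statistically dependent on the very samples whose empirical mean we are trying to control, because UMCB chooses when to replay $\pi_j$ based on the rewards seen so far; applying Hoeffding directly to a sample of random length would be unjustified. The peeling union bound above is precisely the remedy: every term in the sum concerns an average over a genuinely fixed sample size $m$, so Hoeffding is legitimate term by term, and the price for not knowing $m$ in advance is only the factor $t-1\le T$, which is absorbed because the confidence radius carries the constant $3$ (giving $2T^{-3}$ per term) rather than the bare $2$ that a single fixed-size bound would need. It is also worth noting explicitly that the $[0,1]$ normalization of $p_i(a_k)$ is what fixes the Hoeffding exponent at $2m\epsilon^2$.
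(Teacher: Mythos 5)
Your proposal is correct and follows essentially the same route as the paper's proof: a union bound over the $N$ sensors and $K$ base arms, a peeling union bound over the possible realized values of the pull count, and Hoeffding's inequality at each fixed sample size yielding the $2t^{-3}$ (equivalently $2T^{-3}$) per-term bound, which sums to $2NKt^{-2}$. Your explicit justification of why the peeling step is needed (the dependence of the random count $T_{k,t-1}$ on the observed rewards) is left implicit in the paper, but the underlying argument is identical.
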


\begin{proof}
{\small
    \begin{equation}
        \begin{array}{ll}
             \Pr\{\neg \mathcal{N}_t^s\}
             \\  = \Pr\{\exists k\in [K],\exists i\in [N] ,|\hat{p}_{i,t-1}(a_k) - p_i(a_k) |\ge \sqrt{\frac{3\ln{t}}{2T_{k,t-1}}}\}  
             \\  \le \sum_{i=1}^{N} \sum_{k\in [K]}   \Pr\{ |\hat{p}_{i,t-1}(a_k) - p_i(a_k) |\ge \sqrt{\frac{3\ln{t}}{2T_{k,t-1}}}\} 
             
            \\ \le \sum_{i=1}^{N} \sum _{k\in [K]} \sum_{s=1}^{N_s\cdot t} 
            \Pr\{|\hat{p}_{i,t-1}(a_k) - p_i(a_k) |\ge \sqrt{\frac{3\ln{t}}{2s}} \}
            \\ \overset{\textcircled{\scriptsize{1}}}{\le} \sum_{i=1}^{N}  \sum_{k\in [K]} \sum_{s=1}^{t} 2 t^{-3}
            \\  = 2NK t^{-2}.
        \end{array}
    \end{equation}
}
    The formula $\textcircled{\scriptsize{1}}$ is based on Hoeffding's Inequality.

\end{proof}

{

\begin{theorem}
\label{theorem:UMCB_bound}
    The expected regret for \ref{alg:UMCB} is
     $\mathbb{E}(R_1(T)) \le \frac{\pi^2}{3}NK \cdot \Delta_{max} + 2NB|\Gamma| + \sum_{j\in [|L|]}\frac{48B^2NN_s \ln{T}}{\Delta_{min}}$, where $\Delta_{max}=\max_{a_k\in \mathcal{K}}\Delta^{a_k}_{max}$, $\Delta_{min}=\min_{a_k\in \mathcal{K}}\Delta^{a_k}_{min}$, and $B$ is constant to describe the smoothness of $U(q_i)$ (See footnote 3).
     For $S \subseteq \mathcal{P}(\mathcal{K})$ and $|S|=N_s$, $\Delta^{a_k}_{max}=\sup\limits_{a_k\in S}{\max(0,\alpha\cdot r_{S^*_t}-r_{S_t})}$ and $\Delta^{a_k}_{min}=\inf\limits_{a_k\in S}{\min(0,\alpha\cdot r_{S^*_t}-r_{S_t})}$. If $\Delta^{a_k}_{max}$ and $\Delta^{a_k}_{min}$ are not positive, we define $\Delta^{a_k}_{max}=0$ and $\Delta^{a_k}_{min}=+\infty$ respectively.
\end{theorem}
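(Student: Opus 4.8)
The plan is to follow the standard CUCB-style regret decomposition (as in Chen--Wang--Yuan or Wang--Chen 2017), adapting it to our utility-maximization objective and our GUA oracle with approximation ratio $\alpha = \frac12 - \frac{1}{2e}$. First I would split the expected regret $\mathbb{E}(R_1(T)) = \sum_t \big(\alpha\, r_{S^*_t} - r_{S_t}\big)$ into two parts according to whether the good event $\mathcal{N}^s_t$ holds. On the complement $\neg\mathcal{N}^s_t$, the previous lemma gives $\Pr\{\neg\mathcal{N}^s_t\}\le 2NKt^{-2}$, and each round contributes at most $\Delta_{max}$ to the regret; summing $\sum_{t\ge1} 2NKt^{-2}\cdot\Delta_{max} = \frac{\pi^2}{3}NK\,\Delta_{max}$ (using $\sum_{t\ge1}t^{-2}=\pi^2/6$) accounts for the first term in the bound. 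I would also peel off the $2NB|\Gamma|$ term to cover the initialization phase, where each of the $|\Gamma|$ policies must be sampled once for each of the $N$ sensors and each such round contributes at most a constant multiple of $B$ to the regret (this uses the Lipschitz-type smoothness of $U$ via the constant $B$ from footnote 3).

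Next, conditioned on $\mathcal{N}^s_t$, the estimates are all within their confidence radii, so the UCB values $\hat{p}_i(\pi_j) = \bar p_i(\pi_j) + \rho_{j,t}$ overestimate the true expected powers by at most $2\rho_{j,t}$. Since GUA run on the optimistic estimates returns $S_t$, and GUA achieves an $\alpha$-approximation to the true objective (Theorem~\ref{theorem:offline}), monotonicity of GUA in the power inputs plus the optimism principle gives $\alpha\, r_{S^*_t} - r_{S_t} \le$ (something proportional to $B$ times the sum of the confidence radii of the arms actually played in round $t$). Concretely I would argue that a gap of size $\Delta$ forces at least one played arm $a_k$ to have been under-sampled, i.e. $T_{k,t-1} \lesssim \frac{B^2 N_s \ln T}{\Delta^2}$ (the $B$ enters because the utility objective amplifies a power error of $\rho$ into a reward error of order $B\rho$, and $N_s$ because up to $N_s$ slots share the blame). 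This is the familiar "a large gap implies a small counter" step.

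Then I would run the standard counting argument: for each location $j$ (equivalently each group of arms sharing a policy), charge the round's regret to the least-sampled responsible arm, and sum a telescoping series of the form $\sum_{m} \big(\text{regret at the }m\text{-th pull}\big)$, which is dominated by an integral $\int \frac{B^2 N_s \ln T}{\Delta}\,d(\text{something})$ and evaluates to $O\!\big(\frac{B^2 N N_s \ln T}{\Delta_{min}}\big)$ per location, i.e. $\sum_{j\in[|L|]} \frac{48 B^2 N N_s \ln T}{\Delta_{min}}$ after tracking the constants (the factor $48$ coming from the $3/2$ inside $\rho_{j,t}^2$, the factor-of-$2$ from doubling the radius, the $N$ from summing over sensors, and the squaring). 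Adding the three pieces yields the claimed bound.

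The main obstacle I expect is the middle step: cleanly relating the reward gap $\alpha\, r_{S^*_t} - r_{S_t}$ to the confidence radii of the played arms, because our reward is not a linear function of the arm means but the submodular utility-difference objective $f(\cdot)$, and the oracle is only an $\alpha$-approximation rather than exact. I would handle this by invoking the monotonicity and Lipschitz (bounded-smoothness) properties of $U$ encapsulated in the constant $B$ — showing that replacing true powers by their UCB overestimates changes $f$ by at most $O(B)$ times the total confidence-radius mass, and that GUA's greedy choice on optimistic inputs cannot do worse than $\alpha\, r_{S^*_t}$ minus that same slack. Care is also needed with the definition of $\Delta^{a_k}_{max}, \Delta^{a_k}_{min}$ as sup/inf over super arms containing $a_k$, and with the convention that non-positive gaps are reset to $0$ and $+\infty$ respectively, so that the division by $\Delta_{min}$ is well-defined; these are bookkeeping details that mirror the CUCB analysis but must be stated carefully for our setting.
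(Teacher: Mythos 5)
Your proposal follows essentially the same route as the paper's proof: the same three-way decomposition over $\neg\mathcal{N}^s_t$ (yielding $\frac{\pi^2}{3}NK\Delta_{max}$), the first-pull/initialization term $2NB|\Gamma|$, and the Lipschitz-based ``large gap implies small counter'' argument with the integral counting and the grouping of arms by shared location to reduce $|\Gamma|$ to $|L|$, which is exactly how the paper obtains $\sum_{l\in[|L|]}\frac{48B^2NN_s\ln T}{\Delta_{min}}$ via its $\kappa_T(M,s)$ construction and the choice $M_j=\Delta_{min}$. The middle step you flag as the main obstacle is handled in the paper precisely as you propose, by bounding $\Delta_{S_t}\le r_{S_t}(\bar{\mathbf{p}})-r_{S_t}(\mathbf{p})$ through the smoothness constant $B$ and the per-arm radii.
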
 



\begin{proof}
The oracle always succeeds with our scheme GUA.
We introduce a positive real number $M_k$ for $a_k$, where $M_S=\max_{a_k \in S} M_{a_k}$ for super arm $S$ and $M_{S}=0$ if $S=\emptyset$.
Now we define 
\begin{equation}
    \kappa_T(M,s)= \left\{ \begin{array}{ll}
2B & \textrm{if $s=0$}\\
2B \sqrt{\frac{6\ln{T}}{s}} & \textrm{if $1\le s\le l_T(M)$} \\
0 & \textrm{if $s > l_T(M)$},
\end{array} \right.
\label{eq:defofkappa}
\end{equation}
where
 \begin{equation}
\begin{array}{ll}
    l_T(M) & = \lfloor \frac{24B^2N_s^2 \ln{T}}{M^2} \rfloor.
\end{array}
\label{eq:ltm}
\end{equation}
We use $r_{S_t}(\bf p)$ to denote the reward with expectation vector ${\bf p}$ of arms, where ${\bf p}=\{\dots,p_{i,t}(a_k),\dots\}$ and ${\bf \Bar{p}}=\{\dots,\Bar{p}_{i,t}(a_k),\dots\}$. By $\mathcal{N}_t^s$, we have
\begin{equation}
    r_{S_t}(\Bar{\bf p})\ge \alpha \cdot r(S_t^*) \ge r_{S_t}({\bf p})+\Delta_{S_t},
\end{equation}
and 
\begin{equation}
\begin{array}{ll}
    &\Bar{p}_{i,t}(a_k)- p_{i,t}(a_k) - \frac{M_k}{2BN_s} \\ 
    \le & \min \{2\rho_{i,t}(a_k),1\} - \frac{M_k}{2BN_s} \\
    \le & \min \{2\sqrt{\frac{3\ln{T}}{2T_{k,t-1}}},1\} - \frac{M_k}{2BN_s}.
\end{array}
\end{equation}
By Lipschitz continuity condition, when $\Delta_{S_t}\ge M_{S_t}$, we have
\begin{equation}
\begin{array}{rl}
    \Delta_{S_t} & \le r_{S_t}(\Bar{{\bf p}}) -  r_{S_t}({\bf p}) = \sum_{i=1}^N (U( \Bar{q}_i^{N_s}) - U( q_i^{N_s}) ) 
    \\ & =  B \sum_{i=1}^N \sum_{a_k \in S_t} (\Bar{p}_{i,t}(a_k)- p_{i,t}(a_k) ) 
    \\ & \le \sum_{i=1}^N (2B \sum_{a_k \in S_t} (\Bar{p}_{i,t}(a_k)- p_{i,t}(a_k) ) - M_{S_t} )
    \\ & = \sum_{i=1}^N\sum_{a_k \in S_t} 2B (\Bar{p}_{i,t}(a_k)- p_{i,t}(a_k) - \frac{M_k}{2B|S_t|} ) 
    \\ & = \sum_{i=1}^N\sum_{a_k \in S_t} 2B (\Bar{p}_{i,t}(a_k)- p_{i,t}(a_k) - \frac{M_k}{2BN_s} ) 
    \\ & \le \sum_{i=1}^N \sum_{a_k \in S_t} \kappa_T(M_k,T_{k,t-1})
    \\ & = N \sum_{a_k \in S_t} \kappa_T(M_k,T_{k,t-1}).
\end{array}
\end{equation}
Then we can combine regret in each round when both of ${\Delta_{S_t}\ge M_{S_t}}$ and $\mathcal{N}_t^s$ happen:
\begin{equation}
\begin{array}{ll}
    &\sum_{t=1}^{T} \mathbb{I}(\{\Delta_{S_t}\ge M_{S_t}\})\wedge \mathcal{N}_t^s)\cdot \Delta_{S_t} 
    \\ \le &  N\sum_{t=1}^{T} \sum_{a_k \in \bar{S_t}} \kappa_T(M_k,T_{k,t-1})
    \\ = & N\sum_{k\in [K]}\sum_{s=0}^{T_{k,T}}\kappa_T(M_k,s)
    \\ = & N\sum_{j\in [|\Gamma|]}\sum_{s=0}^{T_{j,T}-N_s}\kappa_T(M_{j},s)
    \\ \le & N\sum_{j\in [|\Gamma|]}\sum_{s=0}^{l_T({M_k})}\kappa_T(M_{j},s)
    \\ = & 2NB|\Gamma| + N\sum_{j\in [|\Gamma|]}\sum_{s=1}^{l_T(M_{j})}\kappa_T(M_{j},s),
\end{array}
\end{equation}
where index $j$ incorporates all base arms with same policy $\pi$.
\begin{equation}
\begin{array}{ll}
    &\sum_{t=1}^{T} \mathbb{I}(\{\Delta_{S_t}\ge M_{S_t}\})\wedge \mathcal{N}_t^s)\cdot \Delta_{S_t} 
    \\ \le & 2NB|\Gamma| + N\sum_{j\in [|\Gamma|]}\sum_{s=1}^{l_T(M_{j})}\kappa_T(M_{j},s).
\end{array}
\label{eq:34}
\end{equation}
When a policy is adopted, the feedback of this slot indicates its charging efficiency and thus its CSI in this charging location can be obtained.
With the known CSI, the rewards of some of the unadopted policies (with the same location, but with different code words) can also be obtained.
These policies sharing the same location can thus be considered as explored in this slot.
In that case, the number of enumeration $|\Gamma|$ for $j$ could be reduced to $|L|$. And Eq. (\ref{eq:34}) can be rewritten as follows (formula \textcircled{\scriptsize{1}} of Eq.~(\ref{eq:longeq})).
\begin{equation}
\begin{array}{ll}
    &\sum_{t=1}^{T} \mathbb{I}(\{\Delta_{S_t}\ge M_{S_t}\})\wedge \mathcal{N}_t^s)\cdot \Delta_{S_t} 
    \\  \overset{\textcircled{\scriptsize{1}}}{\le} & 2NB|\Gamma| + N\sum_{l\in [|L|]}\sum_{s=1}^{l_T(M_{l})}\kappa_T(M_{l},s)
    \\   \overset{\textcircled{\scriptsize{2}}}{\le} & 2NB|\Gamma| + 2NB\sum_{l\in [|L|]}\sum_{s=1}^{l_T(M_{l})}\sqrt{\frac{6\ln{T}}{s}}
    \\  \le & 2NB|\Gamma| + 2NB\sum_{l\in [|L|]}\int_{s=0}^{l_T(M_{l})}\sqrt{\frac{6\ln{T}}{s}}\mathrm{d}s
    \\   \le & 2NB|\Gamma| + 4NB\sum_{l\in [|L|]}\sqrt{6\ln{T}\frac{24B^2N_s^2\ln{T}}{M_l^2}}
    \\  \le & 2NB|\Gamma| + \sum_{l\in [|L|]}\frac{48B^2NN_s \ln{T}}{M_l},
\end{array}
\label{eq:longeq}
\end{equation}
where $l$ is the index of all the available locations. Formula \textcircled{\scriptsize{2}} is based on Eq.~(\ref{eq:defofkappa}).
Due to the Lemma $\Pr\{\neg \mathcal{N}_t^s\}\le 2NK t^{-2}$. Let $R_2(t)$ denote $R_1(t) -R_1(t-1)$ for $t>=1$.
\begin{equation}
\begin{array}{ll}
    R_2(t) &\le R_2(t)(\neg\mathcal{N}_t^s) + R_2(t)(\Delta_{S_t}<M_{S_t}) 
    \\ &\quad + R_2(t)(\{\Delta_{S_t}\ge M_{S_t}\} \wedge \mathcal{N}_t^s))
    \\    &\le 2NK t^{-2} \cdot \Delta_{max}  + R_2(t)(\Delta_{S_t}<M_{S_t})
    \\ &\quad + R_2(t)(\{\Delta_{S_t}\ge M_{S_t}\} \wedge \mathcal{N}_t^s)).
\end{array}
\end{equation}
Take $M_j=\Delta_{min}$, then we have $R_2(t)\{\Delta_{S_t}<M_{S_t}\}=0$ and we can bound $R_1(T) =\sum_{t=1}^T R_2(t) $ as
\begin{equation}
\begin{array}{ll}
    R_1(T) \le \frac{\pi^2}{3}NK\cdot \Delta_{max} + 2NB|\Gamma| + \sum_{l\in [|L|]}\frac{48B^2NN_s \ln{T}}{\Delta_{min}}.
\end{array}
\end{equation}

\end{proof}

}

\subsubsection{UMCB-SW}

In real systems, the channel state distribution often varies along the time due to the dynamics of the environment.
The reference significance of earlier rounds will gradually decrease, while the reference significance of the most recent rounds of data information can be guaranteed.  Based on this intuition, UMCB is still applicable by introducing sliding window to capture the recent CSI. 
The only change in UMCB is to define and update $T_{j,t}$ referring to the number of times that policy $\pi_j$ has been adopted in the last $w$ rounds, where $w\leq T$ is the sliding window size.




{
Over the time interval $I=[t_1, t_2]$, the number of changes in channel state distribution, which is equivalent to the number of changes in the harvested power distribution at sensors, is denoted as 
{$D_I := 1+\sum_{t=t_1+1}^{t_2} \mathbb{I}(p_{i,t} (a_k ) \neq p_{i,t-1} (a_k )), i \in [{N}], k \in [K]$.}
The total change of the mean is defined as $V_I:=  \sum_{t=t_1+1}^{t_2} \lVert \bf{p}_{t}-\bf{p}_{t-1} \rVert_{\infty}$, {where $\bold{p}_{t}=\{p_1(a_1),\dots,p_i(a_k),\cdots,p_N(a_{K})\}$}.
For simplicity, $D$ and $V$ are used to denote $D_{[1,T]}$ and $V_{[1,T]}$ respectively.
}

\begin{theorem}
\label{theorem:UMCBSW_bound}
  Given $V$, the expected regret $\mathbb{E}(R_1(T)) \le \frac{\pi^2}{3}NK\cdot \Delta_{max} +(3-\frac{1}{e})KNB\sqrt{VT}  +2NB\sum_{l\in[|L|]} (\sqrt{VT}+1) (\sqrt{6\ln{T}} + \frac{24BN_s \ln{T}}{\Delta_{min}})$ when choosing $w=\min \{\sqrt{\frac{T}{V}},T\}$.
\end{theorem}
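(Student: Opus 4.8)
The plan is to adapt the static-case analysis of Theorem~\ref{theorem:UMCB_bound} to the non-stationary setting by partitioning the $T$ rounds into consecutive blocks on which the reward distributions are (approximately) constant, running the sliding-window argument block by block, and paying an additional ``switching cost'' each time the environment changes or each time the window straddles a change point. First I would recall the standard sliding-window UCB decomposition: in round $t$ the estimator $\bar p_{i,t}(a_k)$ uses only the $T_{j,t}$ pulls within the last $w$ rounds, so the confidence radius becomes $\sqrt{3\ln t/(2T_{j,t})}$ with $T_{j,t}\le w$. I would re-establish a ``good event'' $\mathcal N_t^s$ analogous to the one in Theorem~\ref{theorem:UMCB_bound}, but now the bias term has two parts: a stochastic deviation controlled by Hoeffding exactly as before (contributing the $\frac{\pi^2}{3}NK\cdot\Delta_{\max}$ term after summing $2NKt^{-2}$ over $t$), and a drift term bounded by $\sum_{\tau=t-w+1}^{t}\lVert \mathbf p_\tau-\mathbf p_{\tau-1}\rVert_\infty$, i.e. at most the total variation $V_{[t-w+1,t]}$ accrued inside the window. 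Summing the drift term over all rounds telescopes into roughly $wV$ total, which after plugging the optimal $w$ produces a $\sqrt{VT}$-type contribution.

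Next I would reuse the $\kappa_T(M,s)$ machinery verbatim from the proof of Theorem~\ref{theorem:UMCB_bound}: on the event $\{\Delta_{S_t}\ge M_{S_t}\}\wedge\mathcal N_t^s$ the per-round regret is bounded by $N\sum_{a_k\in S_t}\kappa_T(M_k,T_{k,t-1})$, and summing over rounds gives the $2NB|\Gamma|$ initialization term plus $N\sum_{l\in[|L|]}\sum_{s=1}^{l_T(M_l)}\kappa_T(M_l,s)$, which integrates to the $\frac{48B^2NN_s\ln T}{\Delta_{\min}}$-style term --- except that with the window of size $w$ each base arm can be ``reset'' at most $\lceil T/w\rceil$ times, so this whole term is multiplied by (essentially) $T/w$ or by $(\sqrt{VT}+1)$ after substituting $w=\min\{\sqrt{T/V},T\}$; this is the source of the $2NB\sum_{l\in[|L|]}(\sqrt{VT}+1)(\sqrt{6\ln T}+\frac{24BN_s\ln T}{\Delta_{\min}})$ term. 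The Lipschitz/$B$ bookkeeping (factor $B$ from Def.~\ref{def:u}, factor $N$ from summing over sensors, factor $N_s$ from the super-arm size) is carried through unchanged from the previous proof.

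The remaining term $(3-\frac1e)KNB\sqrt{VT}$ I would obtain by bounding the regret contributed by the ``bad'' rounds --- those rounds lying within $w$ steps of a distribution change, or equivalently rounds where the window contains stale data so the estimates are unreliable. The number of such rounds is at most $w\cdot D$ where $D$ is the number of change points, and using $\sum$ of the per-change total variation together with $w=\sqrt{T/V}$ converts $wD$ into a $\sqrt{VT}$ bound; the constant $(3-\frac1e)$ comes from combining the $\alpha=\frac12-\frac1{2e}$ approximation slack of GUA (so the reward comparison carries an extra $1+\alpha$-type factor, and $1+2\alpha = 2-\frac1e$, adjusted by the $+1$ rounding of the window count) with the worst-case per-round regret $\le KNB$ coming from $K$ base arms each contributing a bounded, $B$-Lipschitz utility increment over $N$ sensors. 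I would then choose $w$ to balance the $wV$ drift/switching cost against the $T/w$ estimation-reset cost, giving $w=\min\{\sqrt{T/V},T\}$, and collect the three groups of terms.

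The main obstacle I expect is the drift analysis inside the confidence bound: unlike the stationary case where $\bar p$ concentrates around a single fixed mean, here $\bar p_{i,t}(a_k)$ is an average of samples drawn from possibly different means across the window, so I must show the ``effective mean'' it tracks differs from the current mean $p_{i,t}(a_k)$ by at most the in-window total variation, and then argue that GUA's greedy selection --- which only guarantees the $\frac12-\frac1{2e}$ ratio with respect to the \emph{estimated} objective --- still yields the right comparison against $\alpha\cdot r_{S_t^*}$ of the \emph{true} current objective. Handling this requires carefully threading the monotone-submodular approximation guarantee of Lemma~\ref{lemma:mssfmp} through a perturbed objective, and making sure the perturbation only inflates the regret by the $\sqrt{VT}$ terms claimed, rather than by something larger; the rest is essentially a re-run of the Theorem~\ref{theorem:UMCB_bound} calculation with $w$ in place of $T$ in the relevant places.
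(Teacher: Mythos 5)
Your proposal follows essentially the same route as the paper's proof: segmenting the horizon into $\lceil T/w\rceil$ windows, reusing the $\kappa_T$ machinery with the per-segment reset multiplying the logarithmic term by $(T/w+1)$, bounding the bias of the window-averaged ``effective mean'' by the in-window total variation (the paper formalizes this via an auxiliary vector $\nu_{i,t}$ and a cited comparison lemma), and balancing $wV$ against $T/w$ with $w=\min\{\sqrt{T/V},T\}$. The only slip is cosmetic: the constant $(3-\tfrac1e)$ arises as $2(1+\alpha)$ with $\alpha=\tfrac12-\tfrac1{2e}$ from the drift term in the per-round comparison inequality, not from $1+2\alpha$ plus a rounding adjustment, but this does not affect the structure or validity of the argument.
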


\begin{proof}
$\kappa_T$ is defined as follows:
\begin{equation}
    \kappa_T(M,s)= \left\{ \begin{array}{ll}
2B\sqrt{6\ln{T}} & \textrm{if $s=0$}\\
2B \sqrt{\frac{6\ln{T}}{s}} & \textrm{if $1\le s\le l_T(M)$} \\
0 & \textrm{if $s > l_T(M)$},
\end{array} \right.
\end{equation}
where $l_T(M)$ is defined in Eq.~(\ref{eq:ltm}).
The time $\{1,2,...,T\}$ is divided into $\Omega\le \lceil \frac{T}{w} \rceil $ segments $[1=t_0+1,w=t_1],...,[t_{\Omega-1}+1,t_{\Omega}=T]$, where each segment has length $w$, except for the last segment.
We use $T'_{k,t}$ to denote the number of times that arm $a_k$ is pulled in $[t_{n-1}+1,t_n-1]$ for every $k,t$ and $t_{n-1} < t \le t_n$.
Besides, we define $\nu_{k,t}$ as $\nu_t=\{\dots,\nu_{i,t}(a_k),\dots\}$ and $\nu_{i,t}(a_k)=  \frac{1}{T_{k,t}} \sum^{t-1}_{s=t-w+1} \mathbb{I} \{a_k \text{ is pulled in round } s\} p_{i,t}(a_k)$.
\begin{equation}
    \sum_{t=1}^T \mathbb{I}(a_k \in S_t)\cdot \kappa_T(M_k,T_{k,t})\le \sum_{t=1}^T \mathbb{I}(a_k \in S_t)\cdot \kappa_T(M_k,T'_{k,t})
\end{equation}
We define $\mathcal{N}_t^s$ as: we have $|\hat{p}_{i,t-1}(a_k) - \nu_{i,t}(a_k)|\le \rho_{k,t}$.
We bound $\sum_{t=1}^T \mathbb{I}(a_k \in S_t)\cdot \kappa_T(M_k,T_{k,t})$ as:
\begin{equation}
\begin{array}{ll}
    &\sum_{t=1}^T \mathbb{I}(a_k \in S_t)\cdot \kappa_T(M_k,T_{k,t}) 
    \\ & \le \sum_{t=1}^T \mathbb{I}(a_k \in S_t)\cdot \kappa_T(M_k,T'_{k,t})
    \\ & =   \sum_{n=1}^\Omega \sum^{t_{n}}_{t=t_{n-1} +1} \mathbb{I}(a_k \in S_t)\cdot \kappa_T(M_k,T'_{k,t})
    \\ & \le \sum_{n=1}^\Omega \sum_{s=0}^{w-1} \kappa_T(M_k,s)
    \\ & \le \sum_{n=1}^\Omega (2B\sqrt{6\ln{T}} + \sum_{s=1}^{l_T({M_k})}\kappa_T(M_{k},s))
    
\end{array}
\end{equation}
Similar to proof of Theorem~\ref{theorem:UMCB_bound}, it can be rewritten as 
\begin{equation}
\begin{array}{ll}
    \label{eq:le4}
    &\sum_{t=1}^T \mathbb{I}(a_k \in S_t)\cdot \kappa_T(M_k,T_{k,t}) 
    \\ & \le \sum_{n=1}^\Omega (2B\sqrt{6\ln{T}} + \sum_{s=1}^{l_T({M_l})}\kappa_T(M_{l},s))
    \\ & \le (\frac{T}{w}+1) (2B\sqrt{6\ln{T}} + \frac{48B^2N_s \ln{T}}{M_l}) 
\end{array}
\end{equation}
From Lemma 5 in \cite{chen2021combinatorial}, when $\mathcal{N}_t^s$ happens, we have 
\begin{equation}
    r_{S_t}(\Bar{p}_t)\ge r_{S_t}(\nu_{t})+\Delta_{S_t}^t-\sum_{i=1}^N(1+\alpha)KB\sum_{s=t-w+2}^t || {\bf p}_s - {\bf p}_{s-1} ||_\infty.
\end{equation}
Note $ \alpha=\frac{1}{2}-\frac{1}{2e} $. If ${\Delta_{S_t}^t\ge M_{S_t}}$, similar to proof of Theorem~\ref{theorem:UMCB_bound}, we have
\begin{equation}
    \begin{array}{ll}
        \Delta_{S_t}^t & \le r_{S_t}(\Bar{\bf p})-r_{S_t}(\nu_t)
            \\& \quad +\sum_{i=1}^N(1+\alpha)KB\sum_{s=t-w+2}^t || {\bf p}_s - {\bf p}_{s-1} ||_\infty
         \\ &  \le \sum_{i=1}^N(2B\sum_{a_k\in S_t}(p_{i,t}(a_k)-\nu_{i,t}(a_k))-M_{S_t}
            \\&\quad +2(1+\alpha)KB\sum_{s=t-w+2}^t || {\bf p}_s - {\bf p}_{s-1} ||_\infty )
         \\ &  \le 2NB\sum_{a_k\in S_t}(p_{i,t}(a_k)-\nu_{i,t}(a_k)-\frac{M_k}{2BN_s})
            \\&\quad +(3-\frac{1}{e})KNB\sum_{s=t-w+2}^t || {\bf p}_s - {\bf p}_{s-1} ||_\infty
         \\ &  \le N\sum_{a_k\in S_t} \kappa_T(M_k,T_{k,t}) 
            \\&\quad +(3-\frac{1}{e})KNB\sum_{s=t-w+2}^t || {\bf p}_s - {\bf p}_{s-1} ||_\infty         
    \end{array}
\end{equation}
Then with Eq.~(\ref{eq:le4}) we have 
\begin{equation}
    \begin{array}{ll}
        & R_1(T)(\{\Delta_{S_t}\ge M_{S_t}\} \wedge \mathcal{N}_t^s)) 
        \\&  \le N\sum_{t=1}^T \mathbb{I}(a_k \in S_t)\cdot \kappa_T(M_k,T_{k,t}) 
            \\&\quad +(3-\frac{1}{e})KNBw\sum_{s=2}^t || {\bf p}_s - {\bf p}_{s-1} ||_\infty
         \\ &  \le N\sum_{l\in[|L|]} (\frac{T}{w}+1) (2B\sqrt{6\ln{T}} + \frac{48B^2N_s \ln{T}}{M_l})
            \\&\quad +  (3-\frac{1}{e})KNBVw
         
    \end{array}
\end{equation}

Take $M_j=\Delta_{min}$ and $w=\min \{\sqrt{\frac{T}{V}},T\}$, then we have $R_1(t)\{\Delta_{S_t}<M_{S_t}\}=0$ and 
\begin{equation}
\begin{array}{ll}
    & R_1(T) \le R_1(T)(\neg\mathcal{N}_t^s) + R_1(T)(\Delta_{S_t}<M_{S_t})
        \\&\quad +R_1(T)(\{\Delta_{S_t}\ge M_{S_t}\} \wedge \mathcal{N}_t^s))
    \\&\le \frac{\pi^2}{3}NK\cdot \Delta_{max} +  (3-\frac{1}{e})KNBVw 
    \\&\quad + N\sum_{l\in[|L|]} (\frac{T}{w}+1) (2B\sqrt{6\ln{T}} + \frac{48B^2N_s \ln{T}}{M_l})
    \\&\le \frac{\pi^2}{3}NK\cdot \Delta_{max} +(3-\frac{1}{e})KNB\sqrt{VT} 
    \\&\quad +2NB\sum_{l\in[|L|]} (\sqrt{VT}+1) (\sqrt{6\ln{T}} + \frac{24BN_s \ln{T}}{\Delta_{min}})  
\end{array}
\end{equation}
\end{proof}
 
\subsection{Discussion on Multiple Mobile Chargers}
\label{sec:discussion}

We now discuss a more general scenario where multiple mobile chargers are deployed for charging.
In practical systems, multiple beams can enhance (or weaken) beam directionality and thus lead to more (or less) energy transfer due to the increased interference phenomenon of the superimposed waves.
\emph{When multiple mobile chargers simultaneously transmit wireless energy to one sensor, the harvested power of this sensor is thus not equivalent to the sum of the received powers, but many existing studies assumed that the received power from different chargers is cumulative\cite{dai2018wireless,dai2021placing,lin2023maximizing}.}


Fig.~\ref{fig:superimposed_waves} shows the simulated result of two orthogonal chargers applied with the same code word in a $3m\times3m$ area, wherein the chargers align their beams towards the center of this area to facilitate charging.
Respectively, Fig.~\ref{fig:multi1}, Fig.~\ref{fig:multi2} and Fig.~\ref{fig:multi3} show the power intensity maps of a single charger charging from the left-hand side, a single charger charging from the upside, and two orthogonal chargers charging simultaneously from left-hand side and upside.
It can be observed from Fig.~\ref{fig:multi3} that the orthogonal wave superposition enhances the power intensity in most of the overlapped area, while in the meantime weakens the power intensity in some certain area, which demonstrated that the power harvested from two different charging waves are not cumulative.


When multiple mobile chargers simultaneously transmit wireless energy to the same sensor, the received power must be recalculated. The proposed algorithms UMCB and UMCB-SW can still be applied.
The difference lies in defining the base arm as the joint strategy of these mobile chargers, represented as $\pi_j = (\pi_{1,j_1},..., \pi_{Nc,j_{Nc}})$, where $Nc$ denotes the number of mobile chargers.
Although $|\Gamma|$ will expand based on the number of chargers due to the principle of combinations, this only occurs in situations where the beams overlap with each other. When the charging beams do not interfere with each other, the growth of $|\Gamma|$ is linear.
The reward of each benchmark arm can indicate the energy status when multiple mobile chargers operate concurrently. 
{\bf It can be easily demonstrated that with multiple mobile chargers, the regret bounds of the algorithms remain unchanged, demonstrating efficient adaptability of UMCB(-SW).}


\begin{figure}[t]
\centering
    
    \subfigure{
    \includegraphics[width=0.31\linewidth]{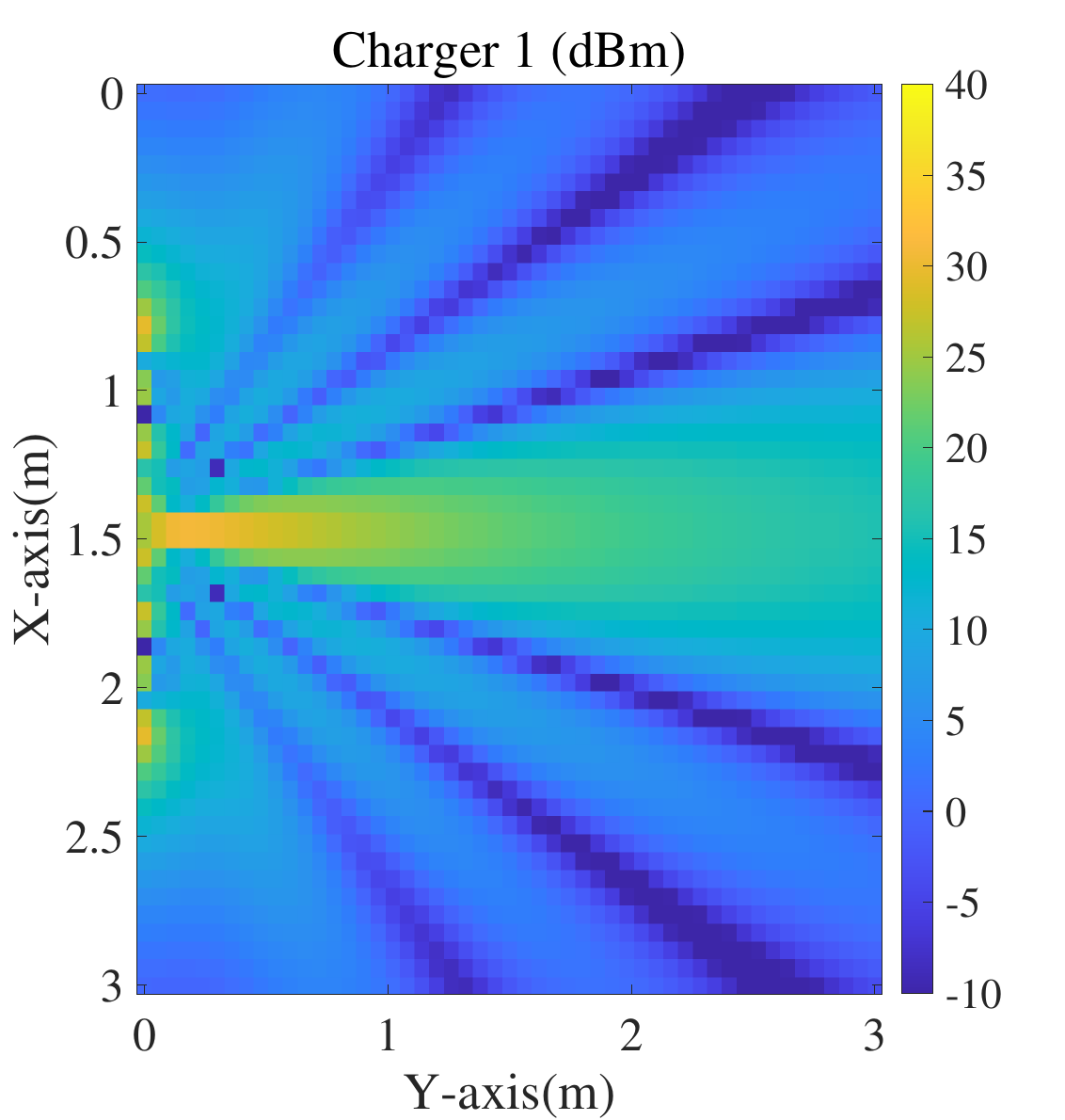}
    \label{fig:multi1}
    }\hspace{-0.7em}
    \subfigure{
    \includegraphics[width=0.31\linewidth]{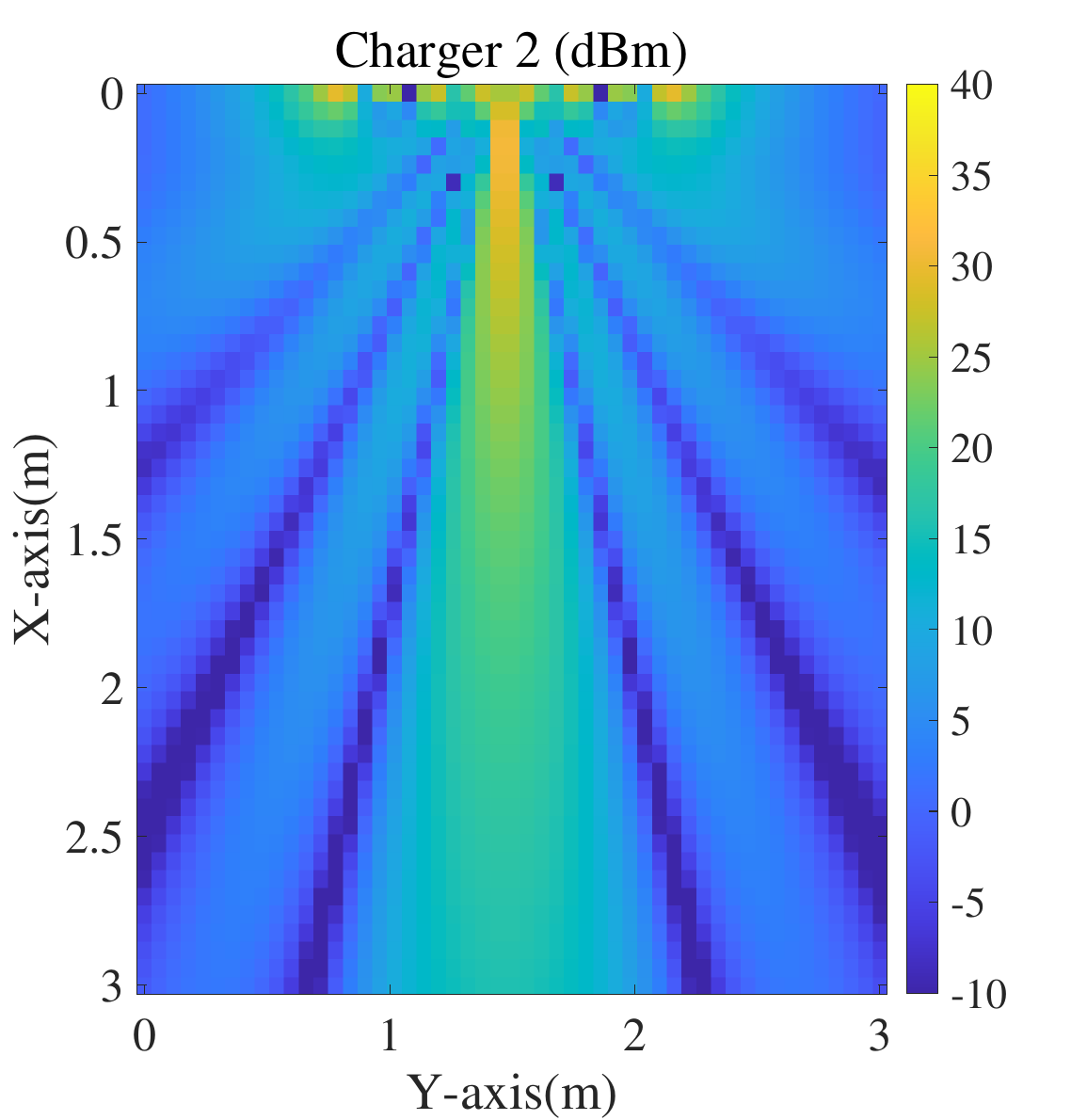}
    \label{fig:multi2}
    }\hspace{-0.7em}
    \subfigure{
    \includegraphics[width=0.31\linewidth]{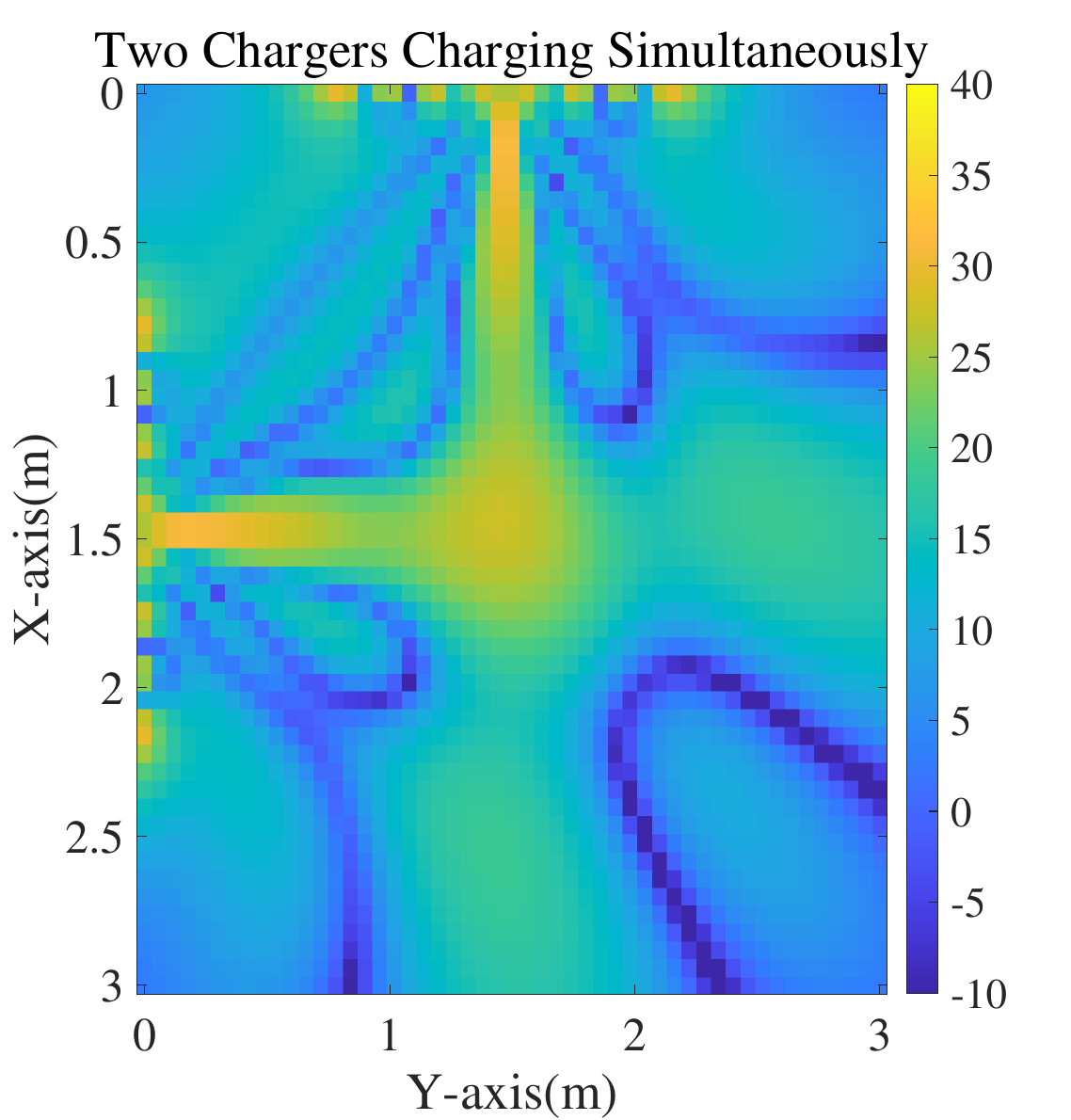}
    \label{fig:multi3}
    }
    \caption{
    Received power from 2 orthogonal chargers with the same code word, simulated with the perfect beamforming vectors through the 800MHz wave from 8-antenna ULA with a spacing of 0.1m.}
    \label{fig:superimposed_waves}
\end{figure}

\section{Performance Evaluation}\label{sec:experiments}

We conduct two experiments to evaluate the performance of \ref{alg:greedy} and the proposed bandit schemes.

\subsection{Experimental Settings}
\label{sec:exp_set}
In the experiments, we assume that sensors are randomly distributed within a $20m\times20m$ square area. This area is further discretized into $0.5m \times 0.5m$ grids to represent available positions. The size of the codebook is $M=12$, with code words uniformly distributed across angles. Each experiment includes $N$ sensors. {\bf We consider two utility function formulations in the experiments: $U_1(\sigma)= \frac{100}{N}\sqrt{\sigma}$ and $U_2(\sigma)=\frac{100}{N}\sqrt{\frac{\sigma}{1+\sigma}}$, where $\sigma=\frac{q_i}{Q}$ and $q_i$ is the energy in battery of $s_i$.} 
\eat{\begin{itemize}
    \item   $U_1(\sigma)= \frac{100}{N}\sqrt{\sigma}$ 
    \item   $U_2(\sigma)=\frac{100}{N}\sqrt{\frac{\sigma}{1+\sigma}}$.
\end{itemize}}
W.l.o.g. we set $N_s=1000$, $t_u=1$ and $T_c=1000$.
Before each round, for each $s_i$, its residual energy is randomly generated according to the uniform distribution in $(0,0.3Q]$ with $\zeta=2$ introduced in Theorem~\ref{theorem:offline} ($0.3$ can also be any other numbers within $(0,1)$ without affecting the experimental results).  The detailed settings for both offline and online scenarios are presented below. 


\subsubsection{Evaluation of Oracles}
In this set of experiments, the number of sensors $N$ is set to $30$. We conducted $100$ rounds of simulations, where results of each round is independent. We compare \ref{alg:greedy} with the following strategies:
\begin{itemize}
    \item {\bf Upper Bound (UB)}: UB is the oracle of an ideal scenario of UMC problem where the time is continuous and the battery capacity $Q$ is unlimited. The details of UB analysis is provided in Section \ref{sec:exp_ub}.
    \item {\bf Greedy MaxQ (GMQ)}:
    {this is a greedy scheme that in each time slot, it selects the policy that maximizes the received energy (while GUA selects the policy that maximizes the growth of the objective function).}
    \item {\bf $(1/2-1/2e) \times$ Upper Bound (AUB)}: this is the theoretical upper bound of the approximation ratio for the proposed GUA as proved in Theorem \ref{theorem:offline}.
\end{itemize} 

\begin{figure}[t]
    \centering
    \subfigure[With $U_1$]{
        \includegraphics[width=0.465\linewidth]{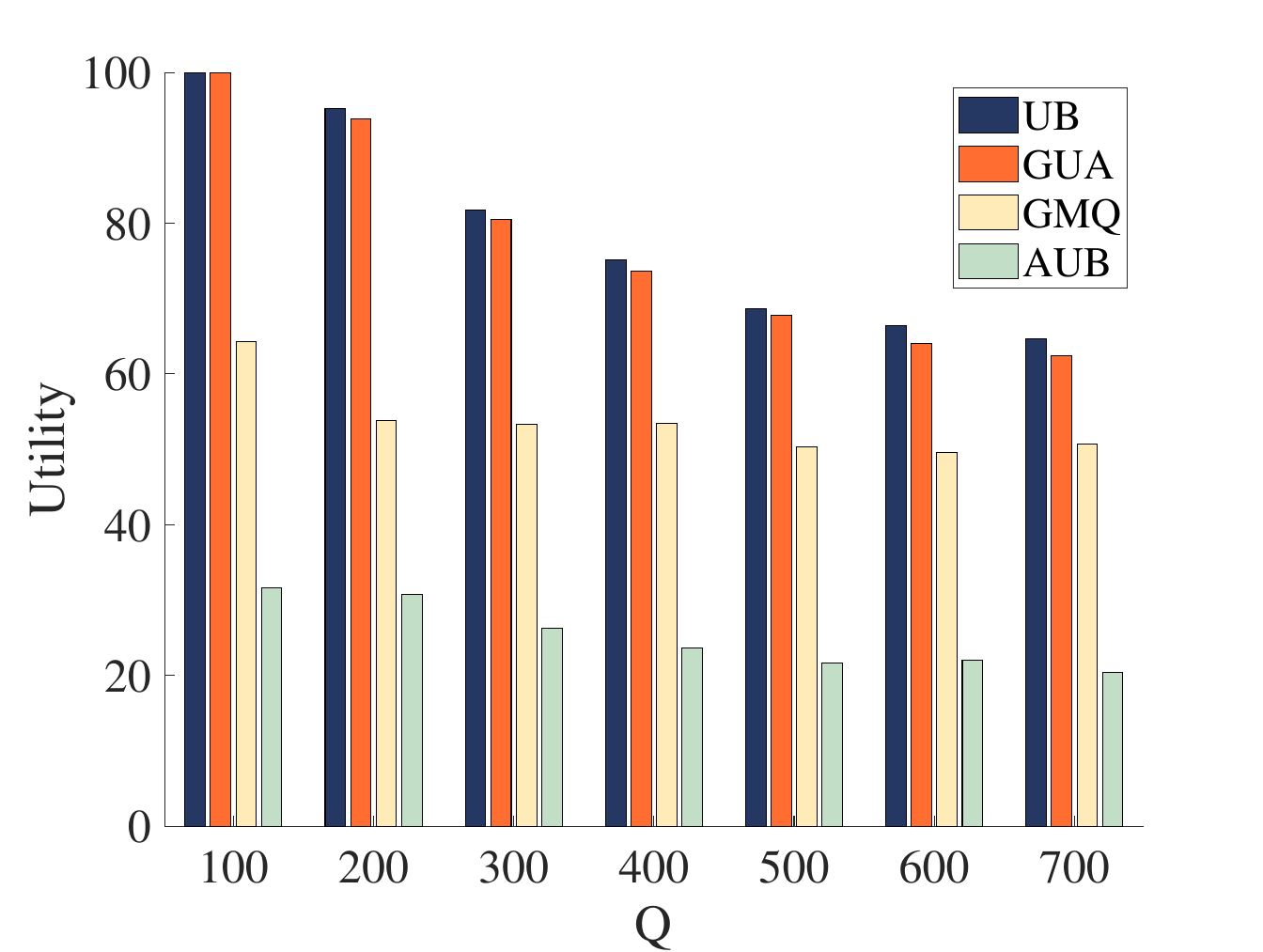}
        \label{fig:off_o1}
    }
    \subfigure[With $U_2$]{
        \includegraphics[width=0.465\linewidth]{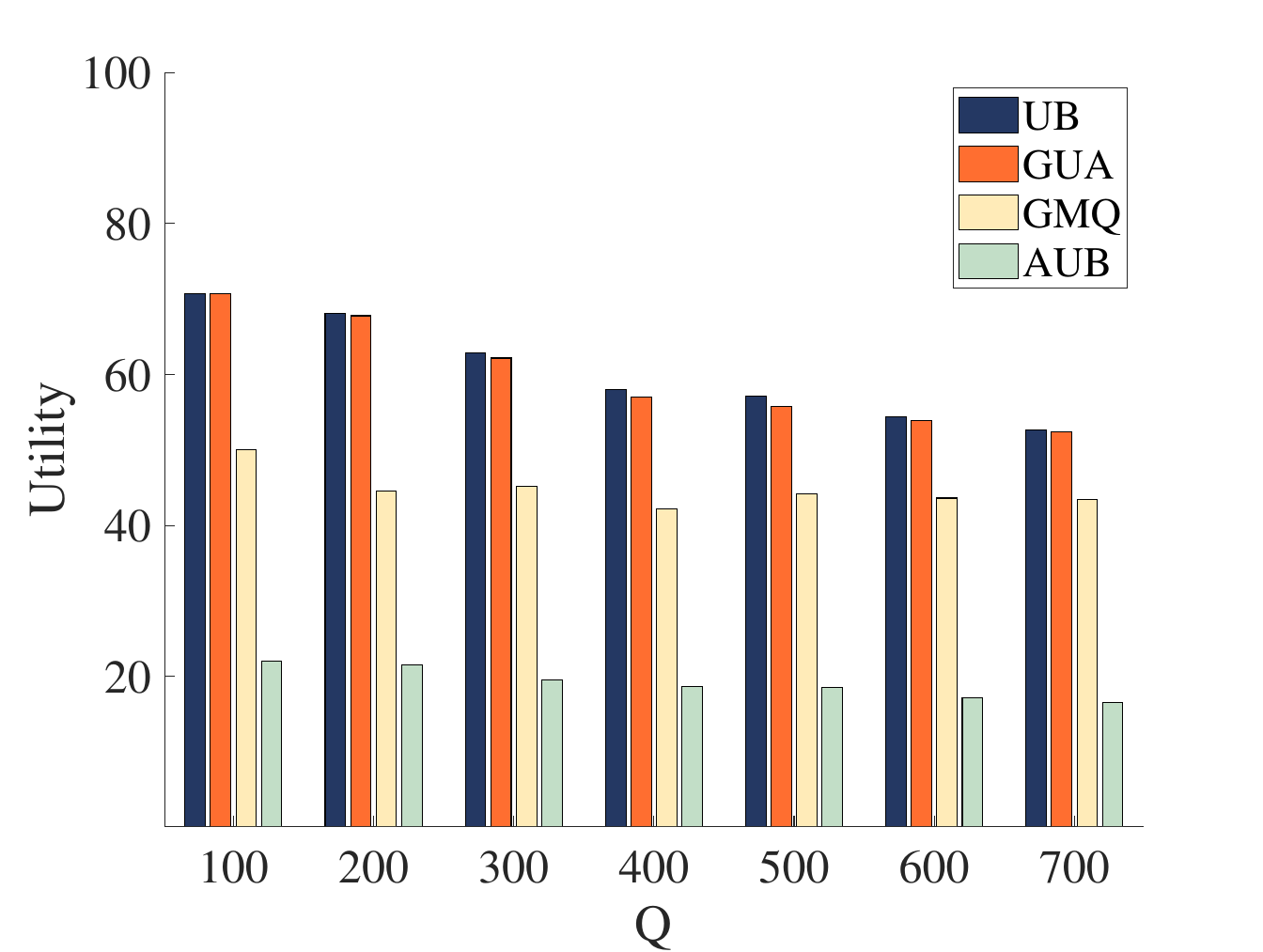}
        \label{fig:off_o3}
    }
    \caption{Average utility with varied $Q$ for oracles}
    \label{fig:off_q}
\end{figure}

\begin{figure*}[t]
    \centering
    \subfigure[With $U_1$]{
        \includegraphics[width=0.235\linewidth]{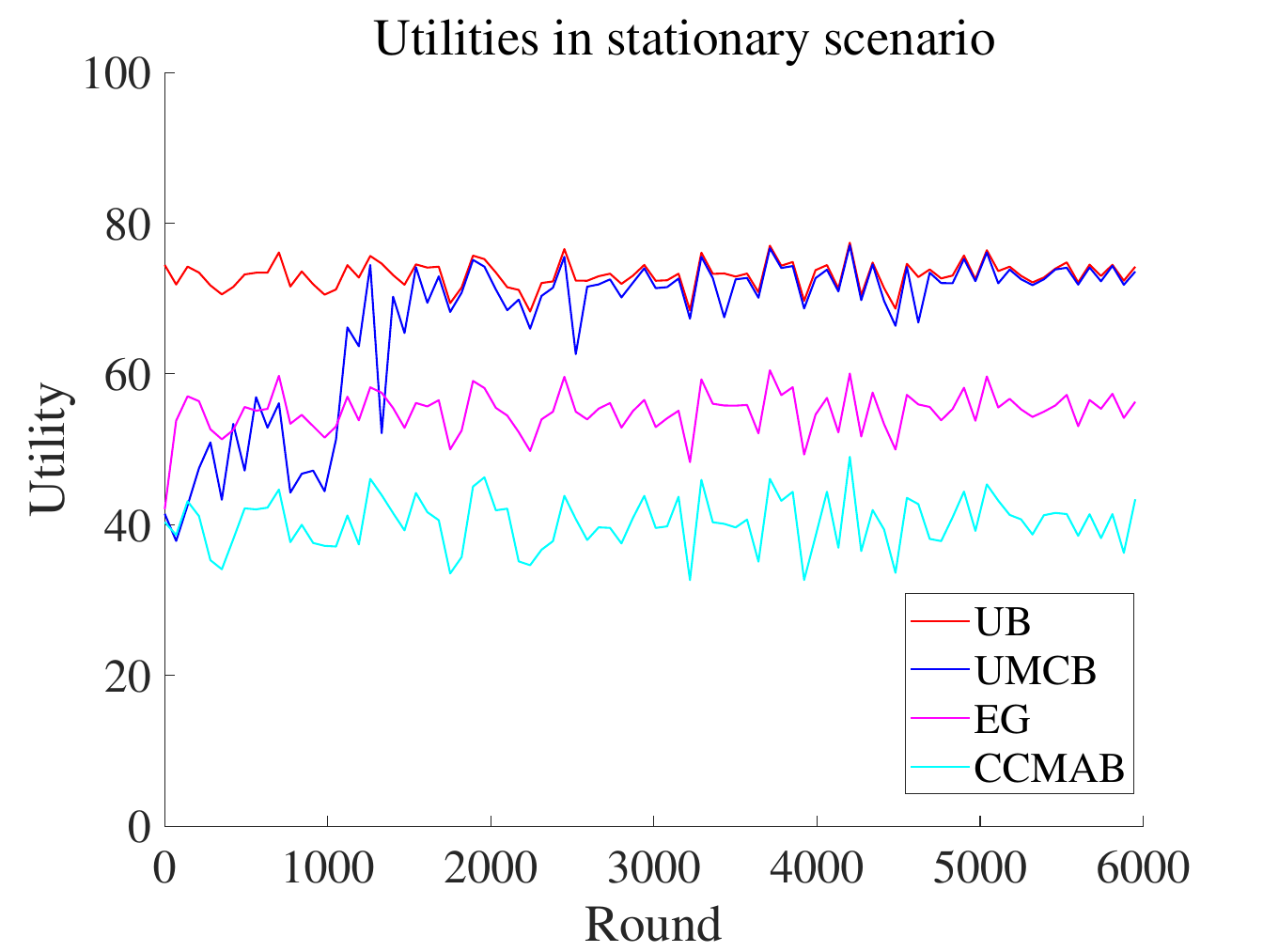}
        \label{fig:s_o1_res}
    
        \includegraphics[width=0.235\linewidth]{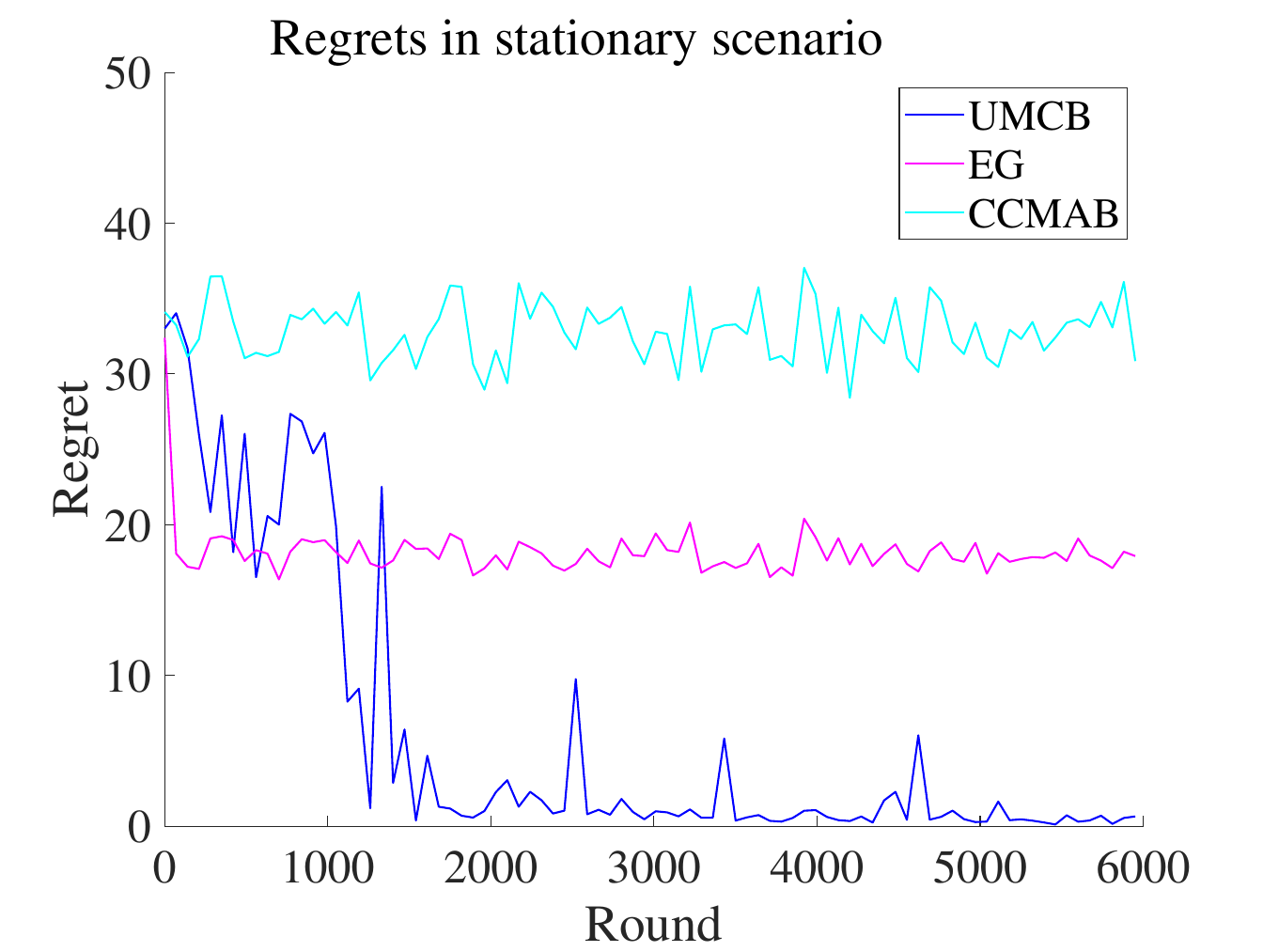}
        \label{fig:s_o1_reg}
    }
    \subfigure[With $U_2$]{
        \includegraphics[width=0.235\linewidth]{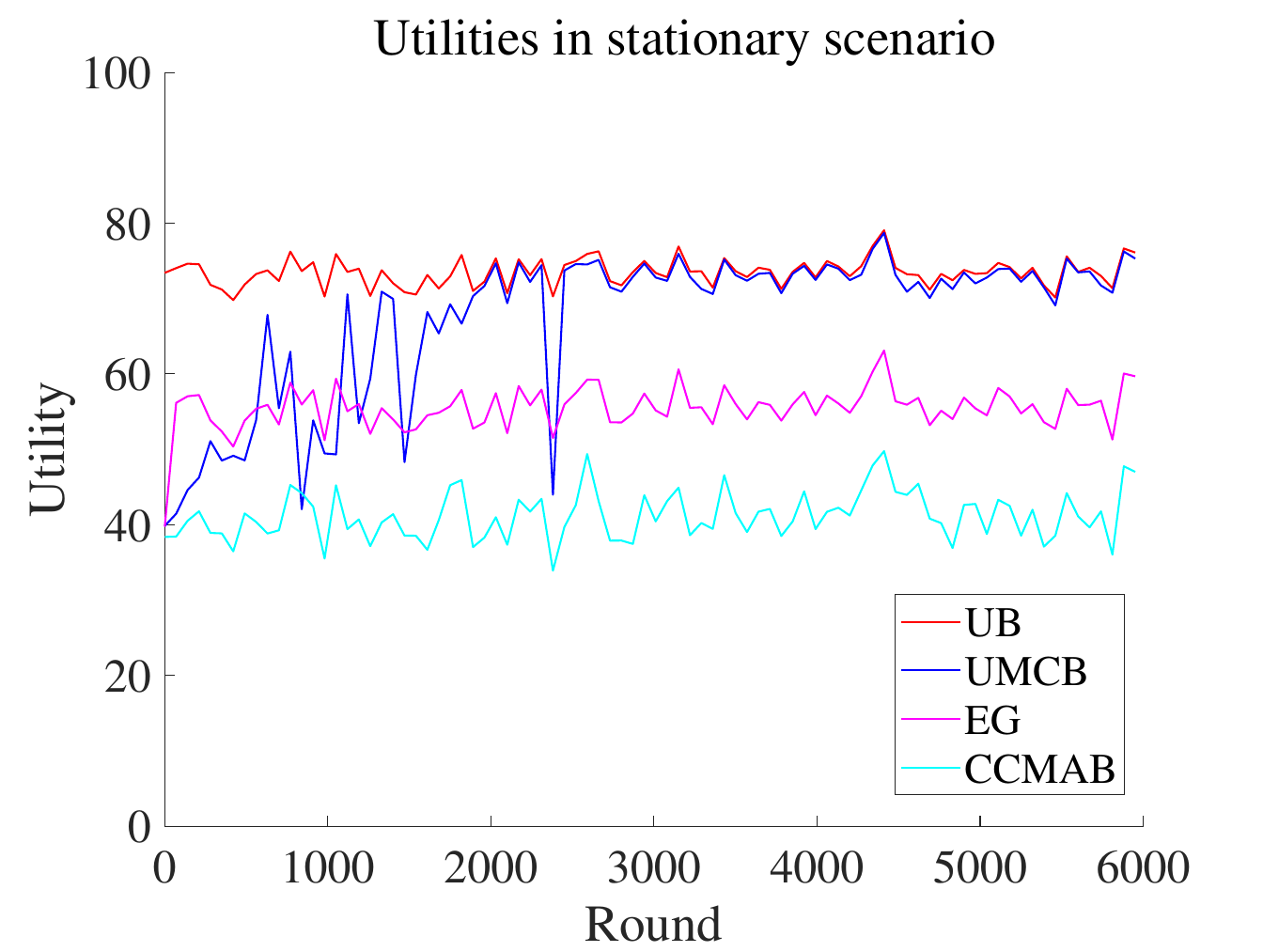}
        \label{fig:s_o3_res}
    
        \includegraphics[width=0.235\linewidth]{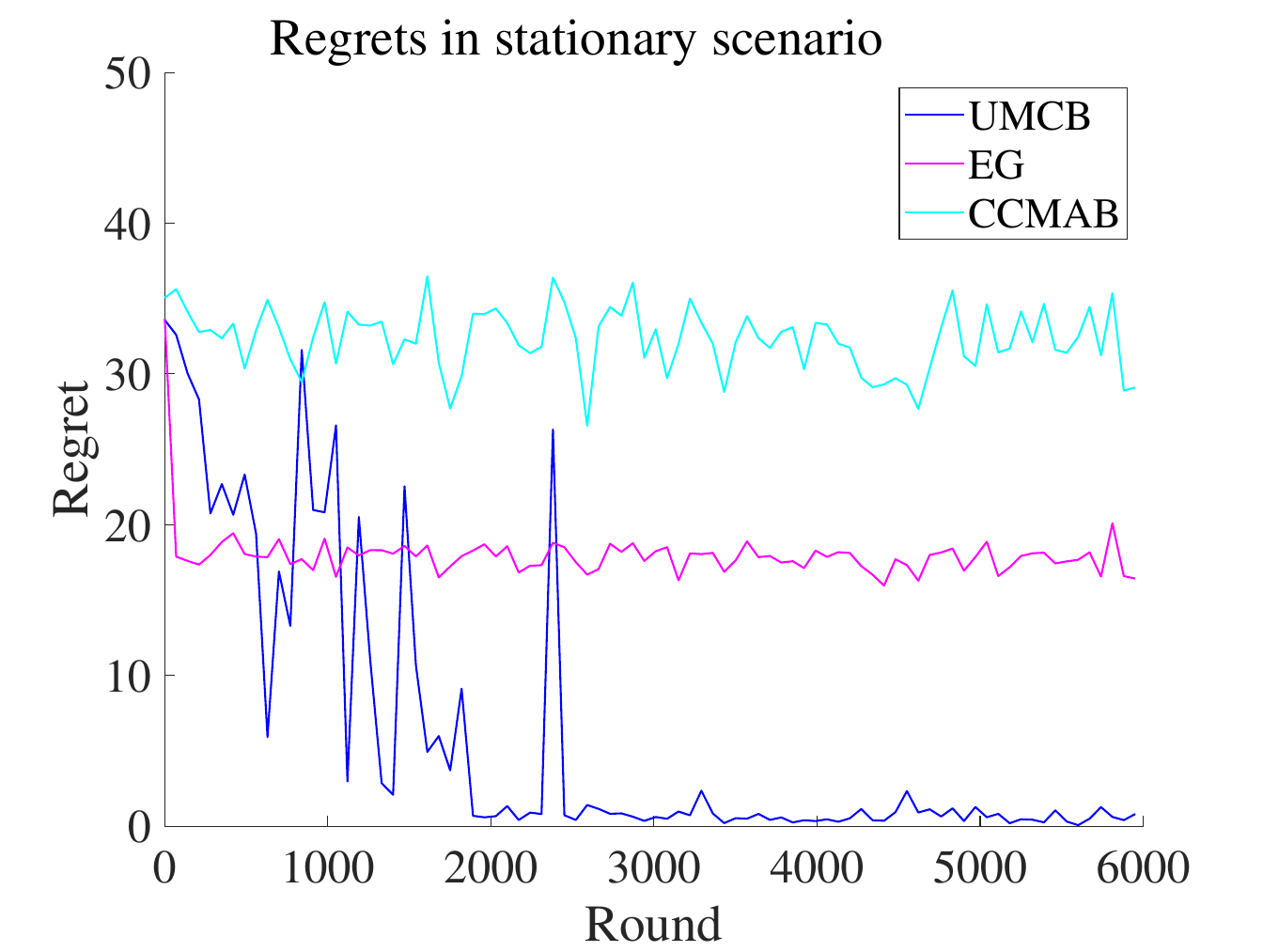}
        \label{fig:s_o3_reg}
    }
    \caption{Utilities and regrets with stationary channel 
    }
    \label{fig:sta_ur}
\end{figure*}

\subsubsection{Evaluation of Bandit Schemes}
Considering the unknown expected harvesting power, 
we use regret to indicate the gap between the evaluated algorithms and the upper bound.
In the experiments, {we set $Q=500$ and $T=6000$ in both stationary and non-stationary cases.} 
For the non-stationary case, we let the channel state distribution fluctuate in a random walking process by adding the dynamic changes to antenna gain of the sensors to simulate the changes of the antenna properties.
We denote the antenna gain of $s_i$ as $A_i$, with its initial value set as $A^0_i$, and add a randomly sampled drift $\Delta_A\in[-0.05A^0_i,0.05A^0_i]$ to $A_i$ in each round.
By default, the number of sensors $N$ is set to $20$.

We compare the proposed \ref{alg:UMCB} and UMCB-SW schemes with {\bf Upper Bound (UB)} and the following two strategies:
\begin{itemize}    
    \item {\bf Epsilon Greedy (EG)}:
    {it chooses charging policies randomly with probability $\epsilon_0$ and the best-known policy with probability $1-\epsilon_0$ in each slot. Here the exploring parameter $\epsilon_0=\frac{1}{3}$} .
    \item {\bf CCMAB~\cite{chen2018contextual}}: it maps the problem to the contextual combinatorial bandit problem. 
    It is another intuitive way to solve the bandit problem and has a parameter $h_T$ to partition the context space. In our experiments, 
    CCMAB is only compared when $N\le20$ due to its extremely large memory space required ($(h_T)^ N$). 
    
\end{itemize}

\subsubsection{UB: Theoretical Upper Bound with Continuous Time}
\label{sec:exp_ub}
We consider an ideal scenario to obtain the upper bound of UMC problem, where the time is continuous, meaning that for each policy $\pi_j$, its corresponding charging time $t_j$ is a real number and we can change the policy at any time. We also assume that the battery capacity $Q$ is unlimited in this case.
Recall that $p_i(\pi_j)$ represents the harvested power of $s_i$ under policy $\pi_j$. 
The harvested energy of sensor $s_i$ at the end of the charging round thus is $\sum_{j=1}^{|\Gamma|}(t_j p_i(\pi_j))$.
We re-write the original UMC problem to the following problem in the ideal scenario.

\vspace{0.1in}
\noindent \mathbi{Objective:}
\begin{equation}
\label{fr:convex}
\max \sum_{i=1}^{{N}}{U(q_i)}-\sum_{i=1}^{{N}}{U(C_i)}
\end{equation}
\noindent $\mathbi{Subject to:}$
\begin{equation}
\begin{array}{ll}
 C_i=x_i-\int_0^{T_c}  p^c_i(t)\mathrm{d}t,~\forall i\in[N] 
\\
    q_i=C_i+\sum_{j=1}^{|\Gamma|}t_jp_i(\pi_j), ~~q_i \geq 0 , ~~\forall i\in[N]
   
\\
\sum_{j=1}^{|\Gamma|}t_j\leq T_c, ~~ t_j \ge 0,~~\forall j\in [|\Gamma|]

\end{array}
\end{equation}
In {\bf P1}, we still need to guarantee $q_i\geq 0$ to avoid any sensor from energy depletion, but we do not need to guarantee that the residual energy of sensor $s_i$ is always bounded by $Q$.
The sum of the allocated time for each charging policy $\sum_{j=1}^{|\Gamma|}t_j$ should not exceed $T_c$.
{\bf P1} is a continuous convex problem. It can be directly solved by the CVX toolbox to obtain the optimal charging strategy.{
However, {\bf P1} is unrealistic as $t_j$ is a continuous variable and the battery capacity $Q$ is assumed to be unlimited.

\subsection{Evaluation Results}
\subsubsection{Evaluation of Oracles}
Fig.~\ref{fig:off_o1} and Fig.~\ref{fig:off_o3} show the average utility of the compared schemes under $U_1$ and $U_2$, respectively.
From the results, we can observe that the performance of {\ref{alg:greedy}} closely aligns with {UB}. 
When $Q\le 200$, {\ref{alg:greedy}} can reach the upper bound by nearly charging all sensors to their full capacity. 
As $Q$ increases, it becomes increasingly difficult to fully charge all the sensors in a round, resulting in a gap between {UB} and {\ref{alg:greedy}}.
The upper bound, while theoretically ideal, is practically unachievable as it assumes continuous time and unlimited battery. 
The performance of each scheme is also not pre-determined due to the inclusion of channel state variations in the simulation.
With $U_1$, {\ref{alg:greedy}} outperforms {GMQ} by at least $54.8\%$ when $Q<=300$ and at least $24\%$ when $Q$ is getting larger.
With $U_2$, \ref{alg:greedy} outperforms {GMQ} by at least $37.7\%$ when $Q<=300$ and at least $20.9\%$ when $Q$ is larger.

\subsubsection{Evaluation of Bandit Schemes}

\begin{figure*}[t]
    \centering
    \begin{minipage}[t]{0.49\linewidth}
        \centering
        \subfigure[with $U_1$]{
            \includegraphics[width=0.465\linewidth]{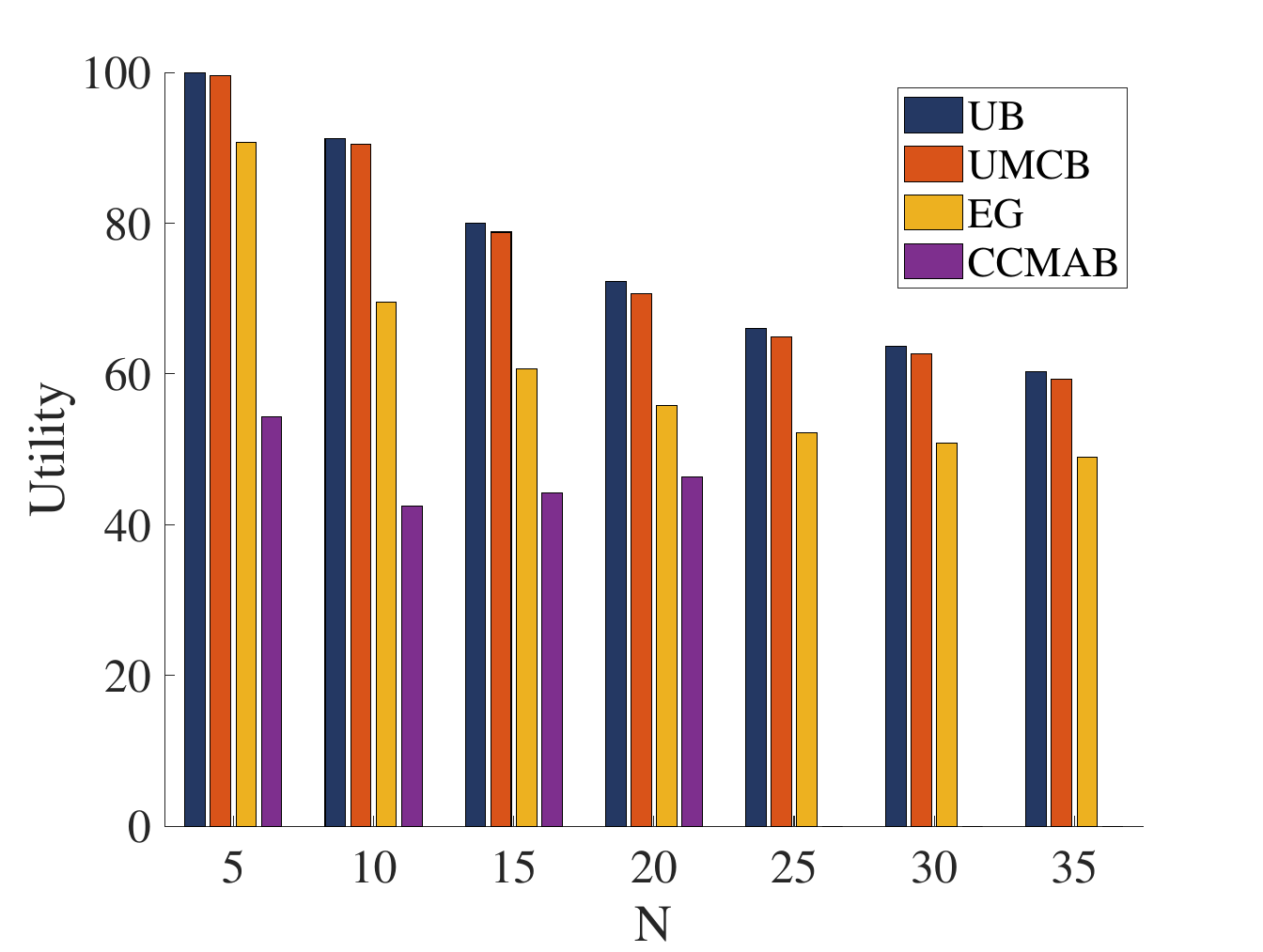}
            \label{fig:s_o1_n}
        }
        \subfigure[with $U_2$]{
            \includegraphics[width=0.465\linewidth]{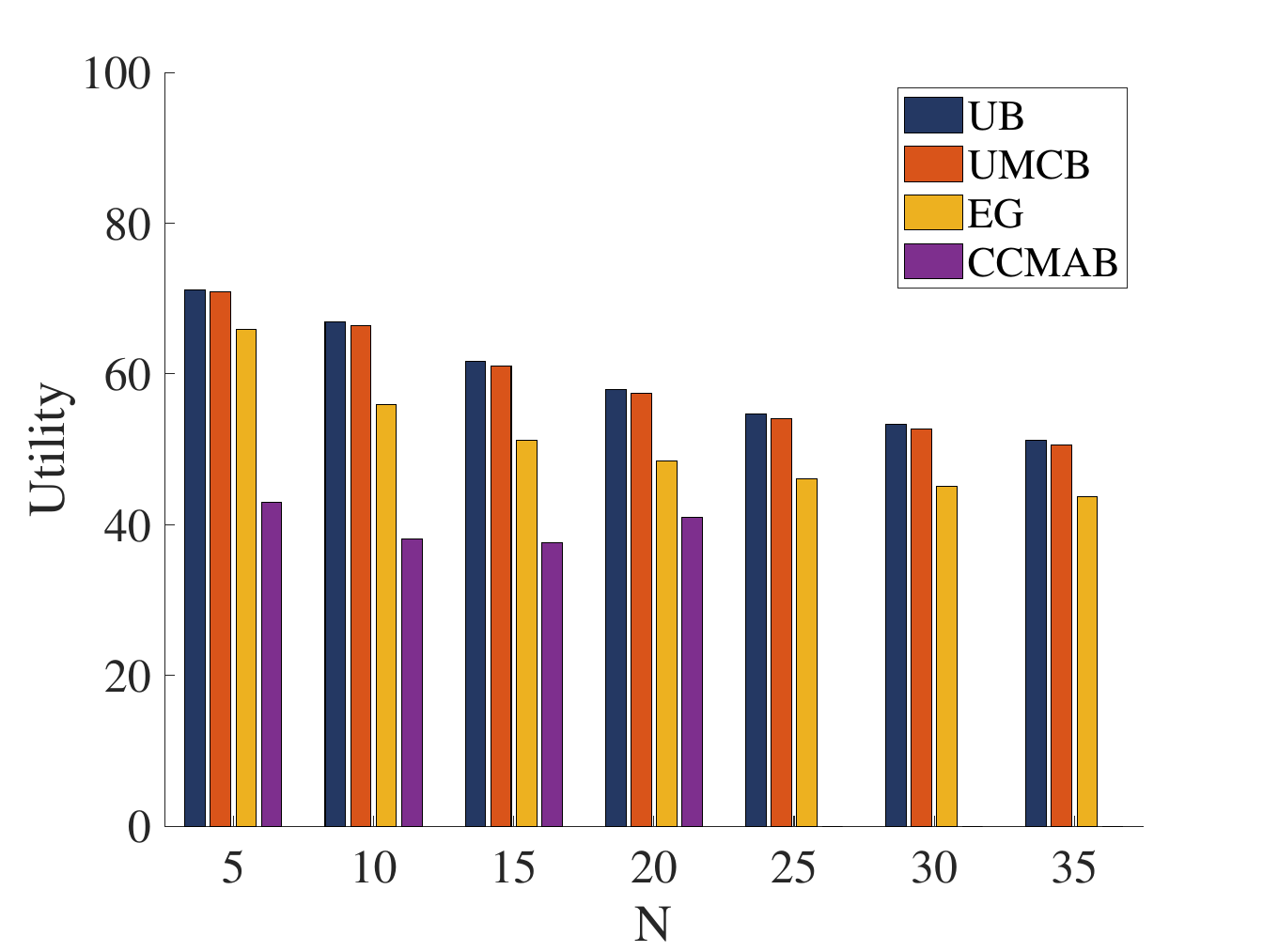}
            \label{fig:s_o3_n}
        }
        \caption{{Utility under stationary channel with varied $N$}}
        \label{fig:on_n}
    \end{minipage}
    \hfill
    \begin{minipage}[t]{0.49\linewidth}
        \centering
        \subfigure[with $U_1$]{
            \includegraphics[width=0.465\linewidth]{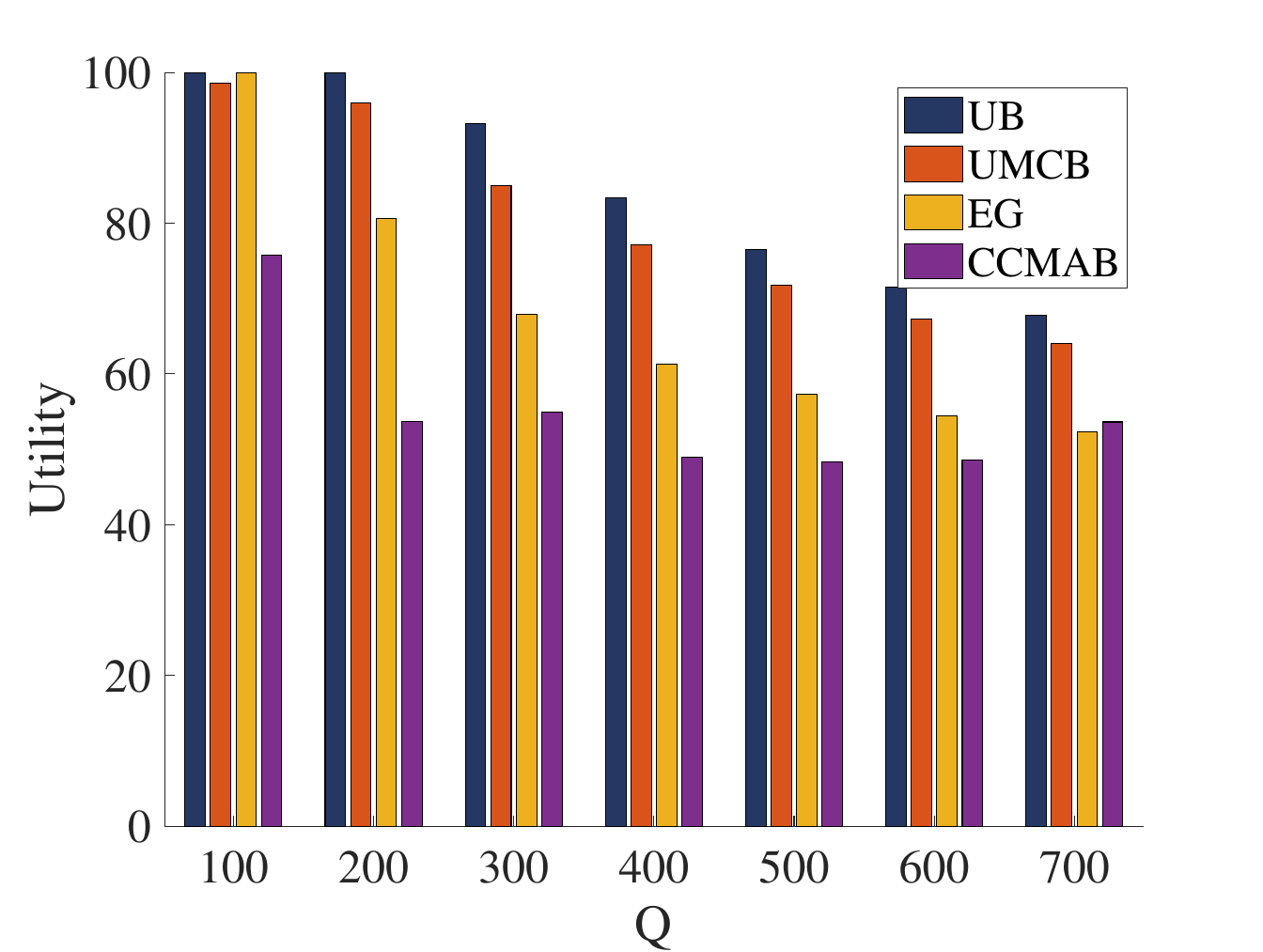}
            \label{fig:s_o1_q}
        }
        \subfigure[with $U_2$]{
            \includegraphics[width=0.465\linewidth]{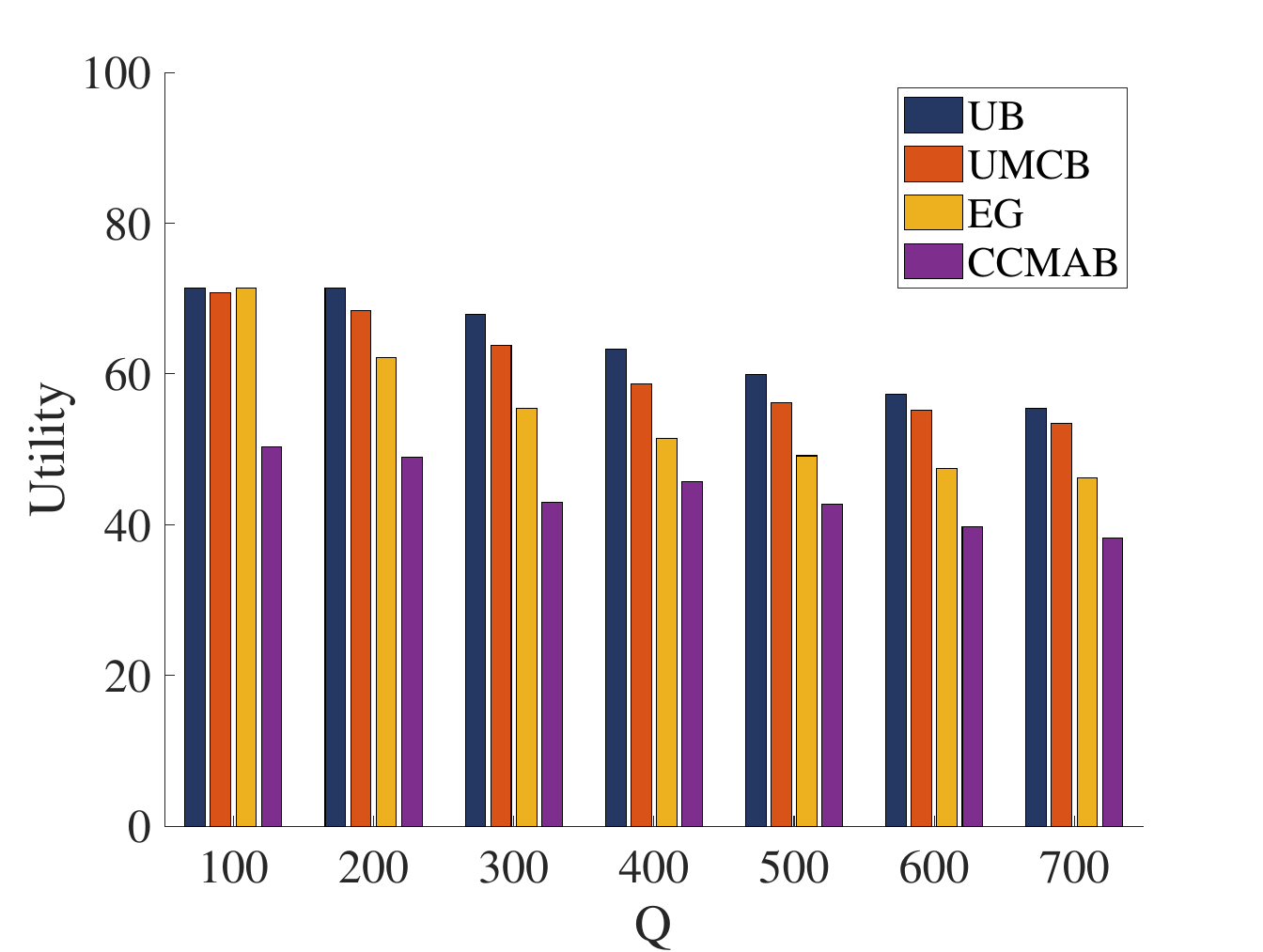}
            \label{fig:s_o3_q}
        }
        \caption{{Utility under stationary channel with varied $Q$}}
        \label{fig:on_q}
    \end{minipage}
\end{figure*}
\begin{figure*}[t]
    \centering
    \subfigure[With $U_1$]{
        \includegraphics[width=0.235\linewidth]{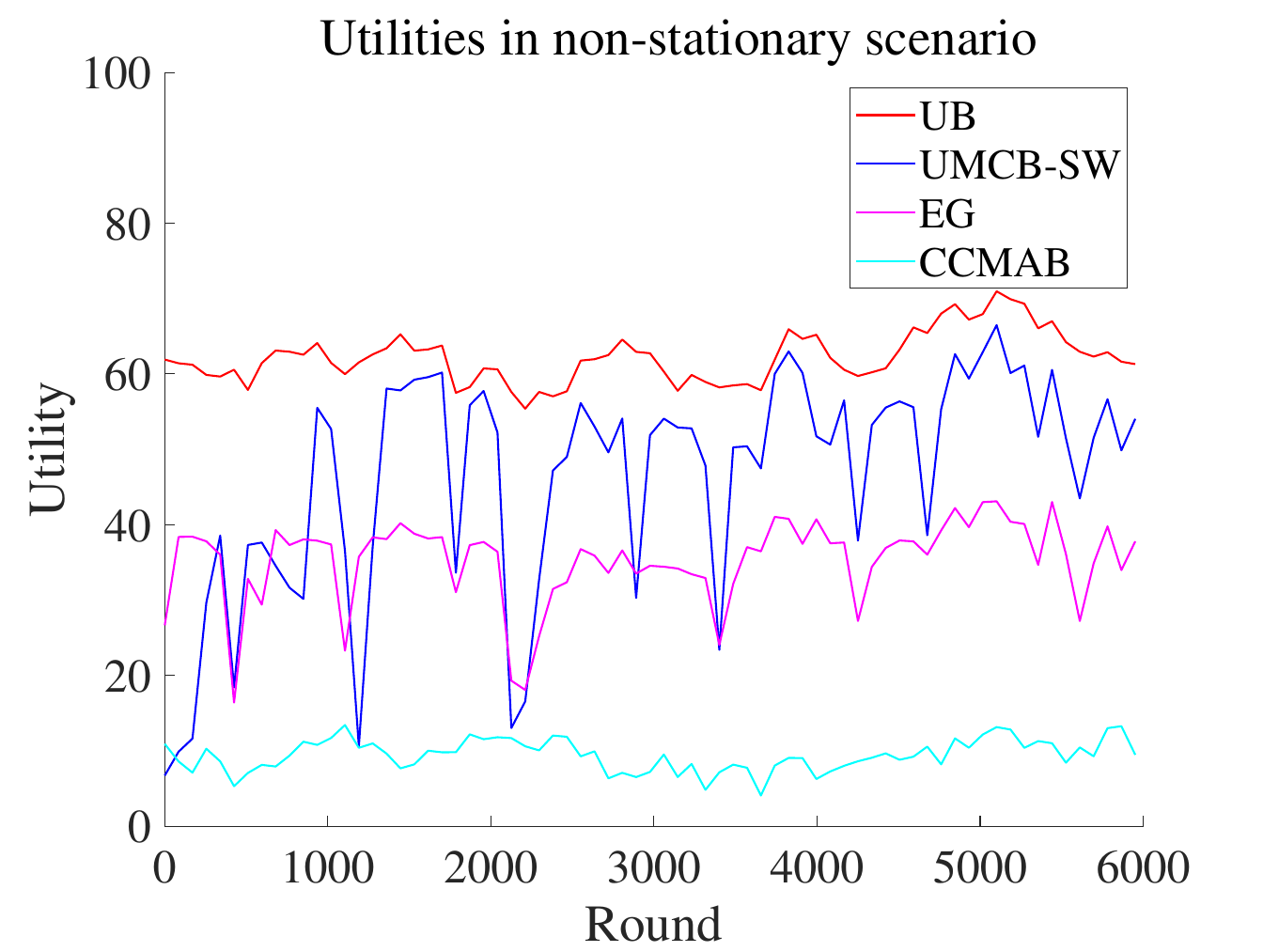}
        \label{fig:n_o1_res}
    
        \includegraphics[width=0.235\linewidth]{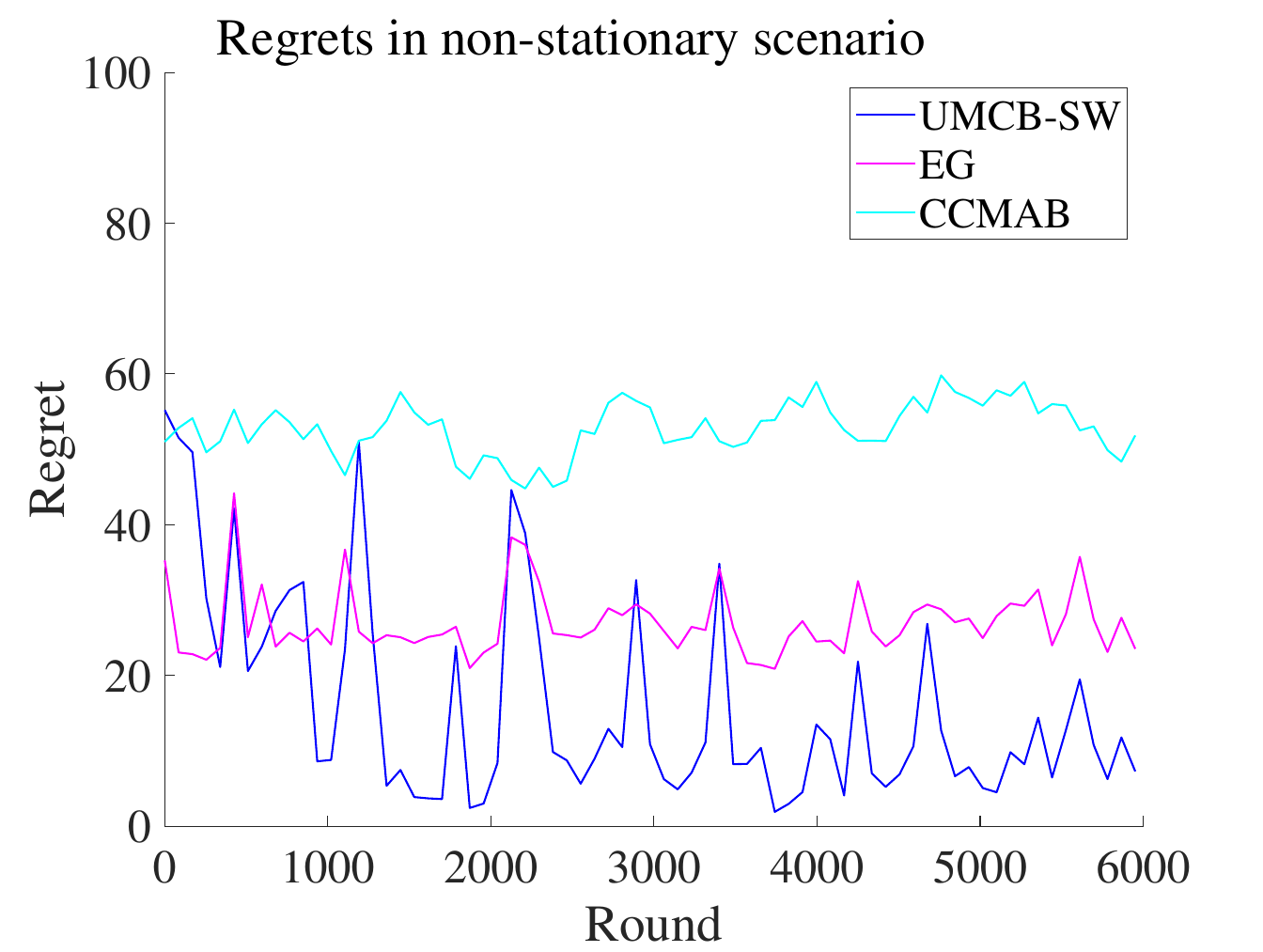}
        \label{fig:n_o1_reg}
    }
    \subfigure[With $U_2$]{
        \includegraphics[width=0.235\linewidth]{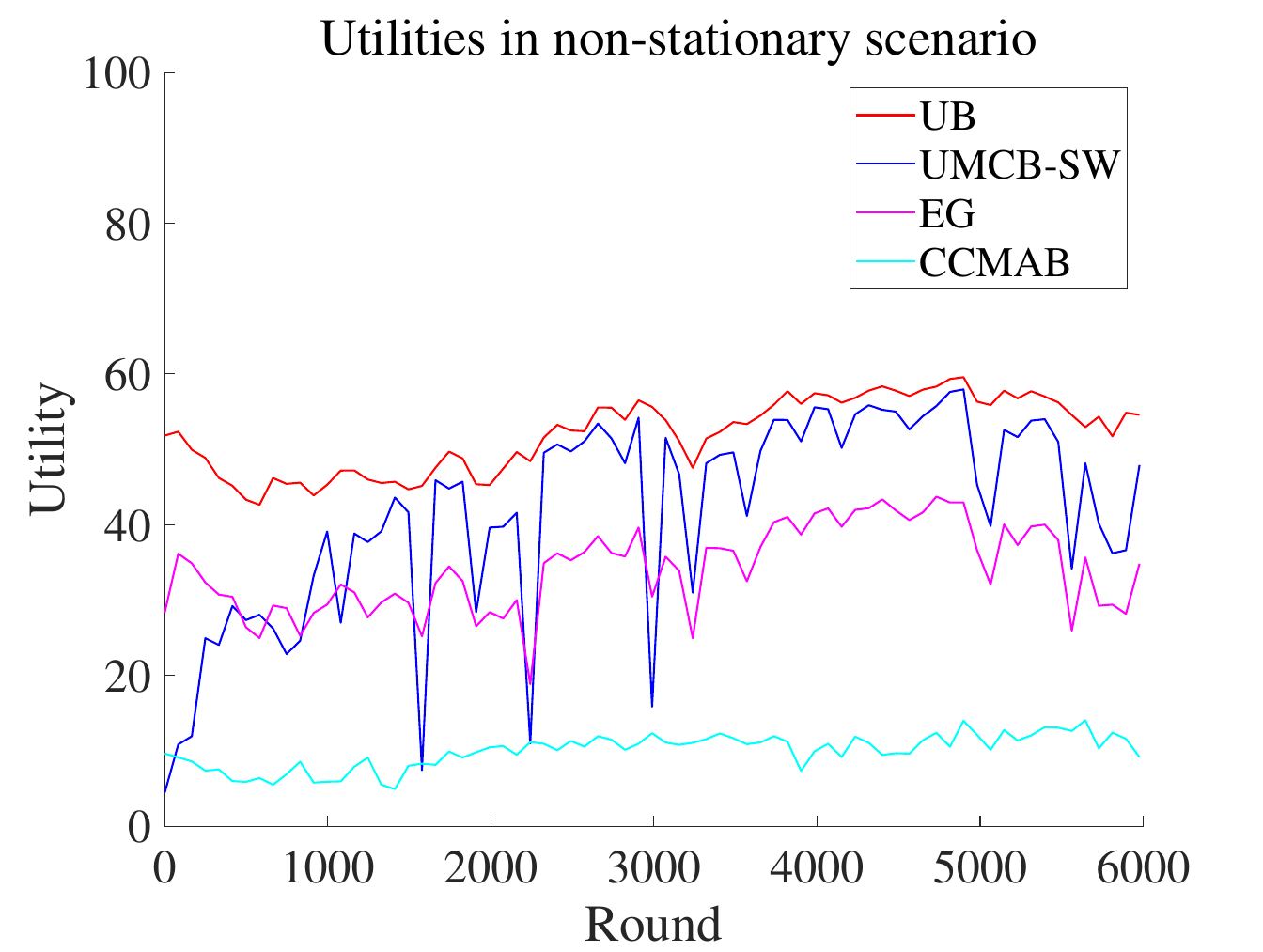}
        \label{fig:n_o3_res}
    
        \includegraphics[width=0.235\linewidth]{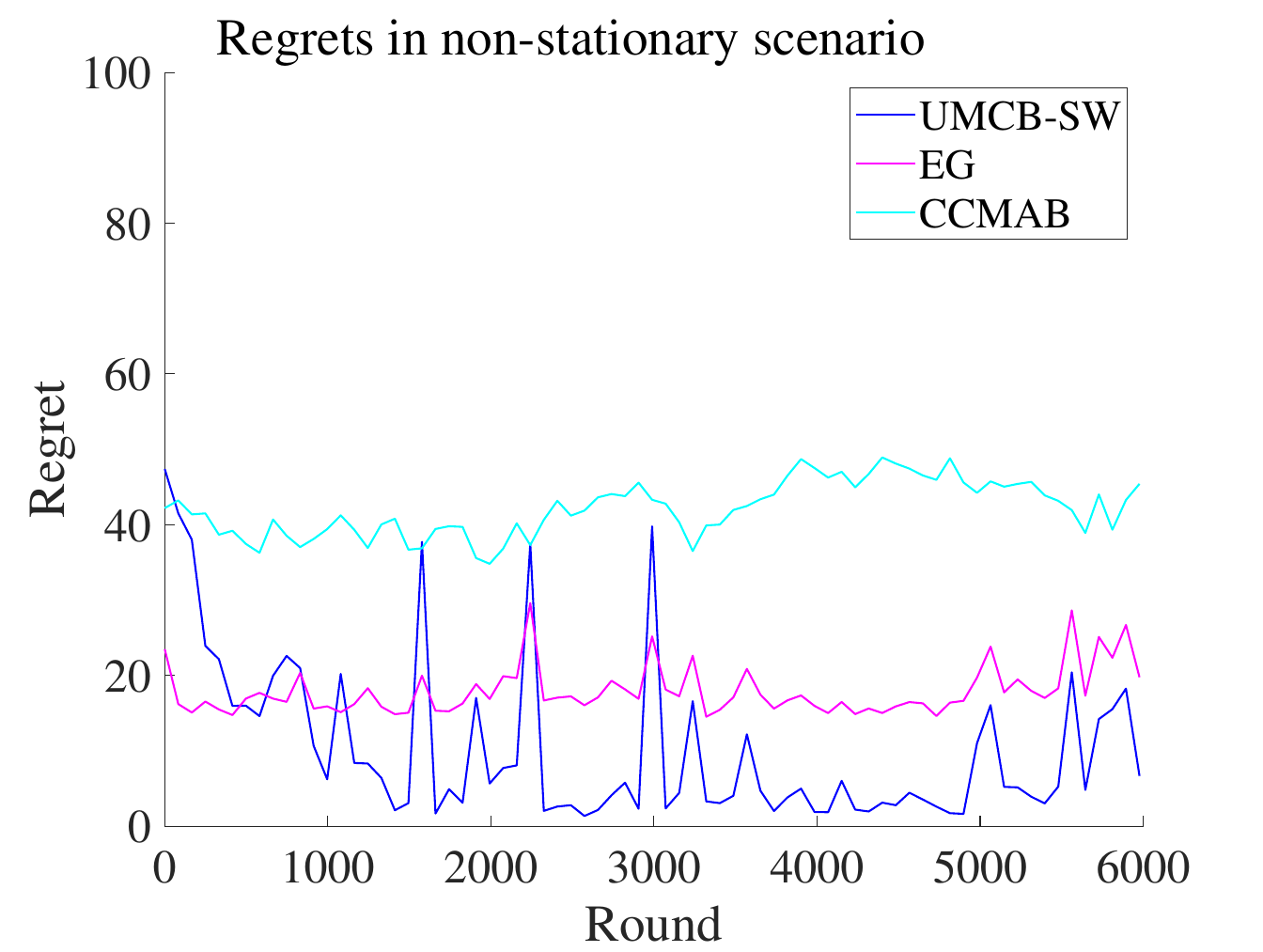}
        \label{fig:n_o3_reg}
    }
    \caption{Utilities and regrets with non-stationary channel}
    \label{fig:non_ur}
\end{figure*}

Fig.~\ref{fig:s_o1_res} and Fig.~\ref{fig:s_o3_reg} compares the utilities and regrets among the four schemes (UB, UMCB, EG and CCMAB) under stationary channels with utility functions $U_1$ and $U_2$, respectively. {Each round corresponds to a charging-decision epoch.}
In Fig.~\ref{fig:s_o1_reg}, the regret of UMCB decreases in fluctuation in the first $1200$ rounds, which indicates that it is the learning phase of the algorithm.
Afterwards, \ref{alg:UMCB} can effectively get close to the upper bound and outperforms other schemes.
When $t\approx2500$ and $t\approx3400$, the regret of \ref{alg:UMCB} has a sudden increase, which shows that it keeps exploring the environment where the confidence radius for other arms has been updated and increased when $T$ grows.
On the other hand, {CCMAB} can hardly balance between exploration and exploitation because its strategy space is too large to explore, though its regret bound is theoretically convergent.
Similar observation can be developed with utility function $U_2$ in Fig.~\ref{fig:s_o3_res}.

Fig.~\ref{fig:on_n} compares the average utility among the four schemes with varied $N$ under stationary channel.
From the results, we can observe that UMCB is always convergent after a certain number of rounds, and can achieve at least $98.1\%$ and $98.3\%$ of the upper bound with $U_1$ and $U_2$ respectively.
{The utility decreases with larger $N$ due to intensified competition for limited charging resources, which reduces each node’s charging opportunity.}
In short, UMCB outperforms EG and CCMAB in all scenarios.
Occasionally, when the regret value approaches nearly zero, the explanation could be that the majority of sensors have been fully charged at the end or the algorithm has learned well.

Fig.~\ref{fig:on_q} compares the average utility among the schemes with varied $Q$.
When $Q\le100$, both UMCB and EG can almost have all sensors fully charged in each round.
When $Q$ increases to be larger than $300$, UMCB outperforms EG and CCMAB by at least $23.3\%$ and $20.7\%$, respectively, under $U_1$ and by at least $15.2\%$ and $39.4\%$, respectively, under $U_2$.
In the meantime, UMCB can achieve at least $94.1\%$ and $90.1$ of the upper bound under $U_1$ and $U_2$, respectively.

   

Fig.~\ref{fig:non_ur} compares the utilities and regrets among four schemes (UB, UMCB-SW, EG and CCMAB) under non-stationary environments with utility function $U_1$ and $U_2$, respectively. From the results, we can observe that the upper bound remains relatively stable in non-stationary environment. 
From Fig.~\ref{fig:n_o1_res}, we can see that UMCB-SW gradually improves its utility in the initial $1400$ rounds and outperforms other schemes after $1300$ rounds.
UMCB-SW approaches the upper bound more slowly compared to its performance in stationary environments but can track it consistently although the channel states keep changing. On the other hand, when the antenna properties changes, the empirical solution will possibly not be as effective as before, which causes the increased regret.
This can be observed when $t\approx2200$, $t\approx3400$ and $t\approx4300$ as shown in Fig.~\ref{fig:n_o1_reg} and $t\approx1600$, $t\approx2200$ and $t\approx3000$ as shown in Fig.~\ref{fig:n_o3_reg} as well.
After that, the regret of UMCB-SW drops back to a consistently low level, indicating that UMCB-SW can adapt to the non-stationary environment quickly.

\section{Conclusion}\label{sec:conclusion}

\qiu{In this paper, we proposed novel temporal-spatial bandit-based charging algorithms to charge WP-IoT systems with mobile charger(s) under real-time constraints using beamforming.
The proposed charging policy GUA with known CSI is proven to achieve a constant approximation ratio.
To address real-time uncertainties and dynamics, we develop two online bandit algorithms, UMCB and UMCB-SW, that operate without prior knowledge and have bounded regret.
Our framework explicitly considers strict per-round time constraints from limited charger battery, ensuring real-time-feasible scheduling and charging.
Extensive simulations show that the proposed algorithms approach the upper bound and outperform state-of-the-art methods in efficiency.}



\clearpage
\bibliographystyle{IEEEtran}
\bibliography{sample.bib}

\begin{thebibliography}{10}
\providecommand{\url}[1]{#1}
\csname url@samestyle\endcsname
\providecommand{\newblock}{\relax}
\providecommand{\bibinfo}[2]{#2}
\providecommand{\BIBentrySTDinterwordspacing}{\spaceskip=0pt\relax}
\providecommand{\BIBentryALTinterwordstretchfactor}{4}
\providecommand{\BIBentryALTinterwordspacing}{\spaceskip=\fontdimen2\font plus
\BIBentryALTinterwordstretchfactor\fontdimen3\font minus \fontdimen4\font\relax}
\providecommand{\BIBforeignlanguage}[2]{{%
\expandafter\ifx\csname l@#1\endcsname\relax
\typeout{** WARNING: IEEEtran.bst: No hyphenation pattern has been}%
\typeout{** loaded for the language `#1'. Using the pattern for}%
\typeout{** the default language instead.}%
\else
\language=\csname l@#1\endcsname
\fi
#2}}
\providecommand{\BIBdecl}{\relax}
\BIBdecl

\bibitem{Huang2023Adaptive}
W.~Huang, Z.~Zhao, Z.~Wang, G.~Min, Z.~Chang, L.~Fu, and H.~Duan, ``Adaptive mobile recharge scheduling with rapid data sharing in wireless rechargeable networks,'' \emph{IEEE Transactions on Mobile Computing}, pp. 1--14, 2023.

\bibitem{Zeng2017Communications}
Y.~Zeng, B.~Clerckx, and R.~Zhang, ``Communications and signals design for wireless power transmission,'' \emph{IEEE Transactions on Communications}, vol.~65, no.~5, pp. 2264--2290, 2017.

\bibitem{choi2016wireless}
K.~W. Choi, L.~Ginting, P.~A. Rosyady, A.~A. Aziz, and D.~I. Kim, ``Wireless-powered sensor networks: How to realize,'' \emph{IEEE Transactions on Wireless Communications}, vol.~16, no.~1, pp. 221--234, 2016.

\bibitem{huang2017large}
Y.~Huang and B.~Clerckx, ``Large-scale multiantenna multisine wireless power transfer,'' \emph{IEEE Transactions on Signal Processing}, vol.~65, no.~21, pp. 5812--5827, 2017.

\bibitem{choi2018theory}
K.~W. Choi, P.~A. Rosyady, L.~Ginting, A.~A. Aziz, D.~Setiawan, and D.~I. Kim, ``Theory and experiment for wireless-powered sensor networks: How to keep sensors alive,'' \emph{IEEE Transactions on Wireless Communications}, vol.~17, no.~1, pp. 430--444, 2018.

\bibitem{liu2020effective}
T.~Liu, B.~Wu, S.~Zhang, J.~Peng, and W.~Xu, ``An effective multi-node charging scheme for wireless rechargeable sensor networks,'' in \emph{IEEE International Conference on Computer Communications (INFOCOM)}.\hskip 1em plus 0.5em minus 0.4em\relax IEEE, 2020, pp. 2026--2035.

\bibitem{dai2018wireless}
H.~Dai, X.~Wang, A.~X. Liu, H.~Ma, G.~Chen, and W.~Dou, ``Wireless charger placement for directional charging,'' \emph{IEEE/ACM Transactions on Networking}, vol.~26, no.~4, pp. 1865--1878, 2018.

\bibitem{dai2021placing}
H.~Dai, X.~Wang, X.~Lin, R.~Gu, S.~Shi, Y.~Liu, W.~Dou, and G.~Chen, ``Placing wireless chargers with limited mobility,'' \emph{IEEE Transactions on Mobile Computing}, 2021.

\bibitem{wang2019robust}
X.~Wang, H.~Dai, H.~Huang, Y.~Liu, G.~Chen, and W.~Dou, ``Robust scheduling for wireless charger networks,'' in \emph{IEEE International Conference on Computer Communications (INFOCOM)}.\hskip 1em plus 0.5em minus 0.4em\relax IEEE, 2019, pp. 2323--2331.

\bibitem{xu2018maximizing}
W.~Xu, W.~Liang, X.~Jia, Z.~Xu, Z.~Li, and Y.~Liu, ``Maximizing sensor lifetime with the minimal service cost of a mobile charger in wireless sensor networks,'' \emph{IEEE Transactions on Mobile Computing}, vol.~17, no.~11, pp. 2564--2577, 2018.

\bibitem{jiang2017joint}
G.~Jiang, S.-K. Lam, Y.~Sun, L.~Tu, and J.~Wu, ``Joint charging tour planning and depot positioning for wireless sensor networks using mobile chargers,'' \emph{IEEE/ACM Transactions on Networking}, vol.~25, no.~4, pp. 2250--2266, 2017.

\bibitem{yun2021towards}
Z.~Yun and S.~Han, ``Towards a real-time wireless powered communication network: Design, implementation and evaluation,'' in \emph{2021 IEEE 27th Real-Time and Embedded Technology and Applications Symposium (RTAS)}.\hskip 1em plus 0.5em minus 0.4em\relax IEEE, 2021, pp. 347--359.

\bibitem{Balanis2005Antenna}
N.~C.~A.~Balanis, .~Hoboken, \emph{Antenna Theory: Analysis and Design}.\hskip 1em plus 0.5em minus 0.4em\relax Wiley, 2005.

\bibitem{di2015stochastic}
M.~Di~Renzo, ``Stochastic geometry modeling and analysis of multi-tier millimeter wave cellular networks,'' \emph{IEEE Transactions on Wireless Communications}, vol.~14, no.~9, pp. 5038--5057, 2015.

\bibitem{yu2023reassessment}
K.~Yu, J.~Yu, Z.~Feng, and H.~Chen, ``A reassessment on applying protocol interference model under rayleigh fading: From perspective of link scheduling,'' \emph{IEEE/ACM Transactions on Networking}, 2023.

\bibitem{KimClerckx2020Signal}
J.~Kim, B.~Clerckx, and P.~D. Mitcheson, ``Signal and system design for wireless power transfer: Prototype, experiment and validation,'' \emph{IEEE Transactions on Wireless Communications}, vol.~19, no.~11, pp. 7453--7469, 2020.

\bibitem{zhao2021online}
T.~Zhao, M.~Li, and Y.~Pan, ``Online learning-based reconfigurable antenna mode selection exploiting channel correlation,'' \emph{IEEE Transactions on Wireless Communications}, vol.~20, no.~10, pp. 6820--6834, 2021.

\bibitem{bacinoglu2018energy}
B.~T. Bacinoglu, O.~Kaya, and E.~Uysal-Biyikoglu, ``Energy efficient transmission scheduling for channel-adaptive wireless energy transfer,'' in \emph{Wireless Communications and Networking Conference (WCNC)}.\hskip 1em plus 0.5em minus 0.4em\relax IEEE, 2018, pp. 1--6.

\bibitem{li2018energy}
H.~Li, K.~Ota, and M.~Dong, ``Energy cooperation in battery-free wireless communications with radio frequency energy harvesting,'' \emph{ACM Transactions on Embedded Computing Systems}, 2018.

\bibitem{van2002optimum}
H.~L. Van~Trees, \emph{Optimum array processing: Part IV of detection, estimation, and modulation theory}.\hskip 1em plus 0.5em minus 0.4em\relax John Wiley \& Sons, 2002.

\bibitem{chen2018contextual}
L.~Chen, J.~Xu, and Z.~Lu, ``Contextual combinatorial multi-armed bandits with volatile arms and submodular reward,'' \emph{Advances in Neural Information Processing Systems}, vol.~31, 2018.

\bibitem{fujishige2005submodular}
S.~Fujishige, \emph{Submodular functions and optimization}.\hskip 1em plus 0.5em minus 0.4em\relax Elsevier, 2005.

\bibitem{wang2017improving}
Q.~Wang and W.~Chen, ``Improving regret bounds for combinatorial semi-bandits with probabilistically triggered arms and its applications,'' \emph{Advances in Neural Information Processing Systems}, vol.~30, 2017.

\bibitem{kveton2015tight}
B.~Kveton, Z.~Wen, A.~Ashkan, and C.~Szepesvari, ``Tight regret bounds for stochastic combinatorial semi-bandits,'' in \emph{Artificial Intelligence and Statistics (AISTAT)}.\hskip 1em plus 0.5em minus 0.4em\relax PMLR, 2015, pp. 535--543.

\bibitem{slivkins2019introduction}
A.~Slivkins \emph{et~al.}, ``Introduction to multi-armed bandits,'' \emph{Foundations and Trends{\textregistered} in Machine Learning}, vol.~12, no. 1-2, pp. 1--286, 2019.

\bibitem{chen2021combinatorial}
W.~Chen, L.~Wang, H.~Zhao, and K.~Zheng, ``Combinatorial semi-bandit in the non-stationary environment,'' in \emph{Uncertainty in Artificial Intelligence (UAI)}.\hskip 1em plus 0.5em minus 0.4em\relax PMLR, 2021, pp. 865--875.

\bibitem{lin2023maximizing}
C.~Lin, S.~Hao, W.~Yang, P.~Wang, L.~Wang, G.~Wu, and Q.~Zhang, ``Maximizing energy efficiency of period-area coverage with a uav for wireless rechargeable sensor networks,'' \emph{IEEE/ACM Transactions on Networking}, vol.~31, no.~4, pp. 1657--1673, 2023.

\end{thebibliography}

\end{document}